\documentclass[11pt,reqno,a4paper,final]{amsart}
\usepackage[british]{babel}
\usepackage{amssymb,amstext,amsthm,eucal,bbm,amscd,bbold}
\usepackage{mathtools}
\usepackage{tensor}
\usepackage[margin=1in]{geometry}
\usepackage{array}
\usepackage[svgnames,table]{xcolor}
\usepackage[T1]{fontenc}
\usepackage[utf8]{inputenc}
\usepackage[small]{eulervm}
\usepackage[scr=boondox, cal=esstix]{mathalpha}
\usepackage{tgpagella}
\usepackage{mathrsfs}
\usepackage{verbatim}
\usepackage{pdfsync}
\usepackage[final]{microtype}
\usepackage[nosort]{cite}
\usepackage{adjustbox}
\usepackage{tabularx}
\usepackage{multirow}
\usepackage{booktabs} 
\usepackage{makecell}
\usepackage{caption}
\usepackage[toc]{appendix}
\usepackage{chngcntr}
\usepackage{apptools}
\usepackage{romanbar,braket}
\usepackage{enumitem}
\usepackage{slashed}
\usepackage[boxsize=2pt,aligntableaux=bottom]{ytableau}

\usepackage{tikz,pgfplots}
\pgfplotsset{compat=1.18}
\usepgfplotslibrary{fillbetween}
\usetikzlibrary{calc,arrows,arrows.meta,cd,shapes.geometric,positioning,through,decorations.markings,decorations.text}
%
\tikzset{>=latex}
\usepackage[unicode,pagebackref=true]{hyperref}
\hypersetup{%
  pdftitle   = {Non-relativistic quantum strings from gauged WZW models},
  pdfkeywords = {Non-relativistic, string theory, CFT, BRST, Virasoro},
  pdfauthor  = {Jos\'{e} Figueroa-O'Farrill, Girish S Vishwa},
  pdfcreator = {\LaTeX\ with package \flqq hyperref\frqq},
  linkcolor=MediumBlue,
  citecolor=DarkGreen,
  urlcolor=OrangeRed,
  anchorcolor=OrangeRed,
  colorlinks=true}
%
%
\renewcommand*{\backref}[1]{}
\renewcommand*{\backrefalt}[4]{%
  \ifcase #1%
  \or [Page~#2.]%
  \else [Pages~#2.]%
  \fi%
}
\usepackage{cleveref}
\theoremstyle{plain}
\newtheorem{lemma}{Lemma}
\newtheorem{proposition}[lemma]{Proposition}
\newtheorem{theorem}[lemma]{Theorem}
\newtheorem{corollary}[lemma]{Corollary}

\theoremstyle{definition}
\newtheorem{definition}[lemma]{Definition}

\newcommand{\normalord}[1]{\pmb{(}#1\pmb{)}}
\newcommand{\g}{\mathfrak{g}}
\newcommand{\bms}{\mathfrak{bms}}
\newcommand{\ghat}{\widehat{\mathfrak{g}}}

\newcommand{\h}{\mathfrak{h}}

\renewcommand{\a}{\mathfrak{a}}
\renewcommand{\b}{\mathfrak{b}}
\renewcommand{\d}{\partial}

\newcommand{\nw}{\mathfrak{nw}}
\newcommand{\so}{\mathfrak{so}}

\newcommand{\fsl}{\mathfrak{sl}}
\newcommand{\s}{\mathfrak{s}}

\newcommand{\B}{\mathscr{B}}

\newcommand{\x}{\boldsymbol{x}}
\newcommand{\y}{\boldsymbol{y}}

\renewcommand{\k}{\boldsymbol{k}}


%

%

%

%

%

\newcommand{\Gr}{\operatorname{Gr}}

\newcommand{\ad}{\operatorname{ad}}

\newcommand{\reg}{\operatorname{reg}}
\newcommand{\Tgh}{T^{\mathrm{gh}}}

\newcommand{\dNG}{d_\mathrm{NG}}

\newcommand{\jVir}{j_\mathrm{Vir}}
\newcommand{\dVir}{d_\mathrm{Vir}}
\newcommand{\Tmatter}{T^\mathrm{mat}}

\newcommand{\Tbarmatter}{\bar{T}^\mathrm{mat}}

\newcommand{\Tsug}{T^\mathrm{sug}}
\newcommand{\barpar}{\bar{\partial}}
\newcommand{\betabar}{\bar{\beta}}
\newcommand{\gammabar}{\bar{\gamma}}
\newcommand{\betil}{\widetilde{\beta}}
\newcommand{\gamtil}{\widetilde{\gamma}}
\newcommand{\betilbar}{\bar{\widetilde{\beta}}}
\newcommand{\gamtilbar}{\bar{\widetilde{\gamma}}}
\newcommand{\bbar}{\bar{b}}
\newcommand{\cbar}{\bar{c}}
\newcommand{\Tbargh}{\bar{T}^{\mathrm{gh}}}
\newcommand{\fdeg}{\operatorname{fdeg}}

\newcommand{\Ad}{\operatorname{Ad}}

\newcommand{\Tr}{\operatorname{Tr}}

\newcommand{\RR}{\mathbb{R}}

\newcommand{\NN}{\mathbb{N}}

\newcommand{\ZZ}{\mathbb{Z}}

\newcommand{\CC}{\mathbb{C}}

\newcommand{\Vir}{\operatorname{Vir}}
\newcommand{\superbracket}[1]{[\kern-.15em[{#1}]\kern-.15em]}
\newcommand{\eqdef}{=\vcentcolon}
\newcommand{\defeq}{\vcentcolon=}

\newcommand{\AdS}{\operatorname{AdS}}

\newcommand{\End}{\operatorname{End}}

\newcommand{\im}{\operatorname{im}}

\newcommand{\Xbar}{\overline{X}}

%

%
%

%

\newcommand{\dotr}{\mbox{$\boldsymbol{\cdot}$}}

\newcommand{\bigslant}[2]{\left.\raisebox{.25em}{$#1$}\scalebox{1.5}{/}\raisebox{-.25em}{$#2$}\right.}
%

%

%
%

%

%

%
\definecolor{dkgr}{rgb}{0,0.6,0}
\definecolor{gris}{rgb}{0.5,0.5,0.5}

\allowdisplaybreaks[0]
\setcounter{tocdepth}{2}
\numberwithin{equation}{section}
\AtAppendix{\counterwithin{lemma}{section}}

\begin{document}
\title{Non-relativistic quantum strings from gauged WZW models}

\author[Figueroa-O'Farrill]{José M Figueroa-O'Farrill}
\author[Vishwa]{Girish S Vishwa}
\address{Maxwell Institute and School of Mathematics, The University
  of Edinburgh, James Clerk Maxwell Building, Peter Guthrie Tait Road,
  Edinburgh EH9 3FD, Scotland, United Kingdom}
\email[JMF]{\href{mailto:j.m.figueroa@ed.ac.uk}{j.m.figueroa@ed.ac.uk}, ORCID: \href{https://orcid.org/0000-0002-9308-9360}{0000-0002-9308-9360}}
\email[GSV]{\href{mailto:}{G.S.Vishwa@sms.ed.ac.uk}, ORCID: \href{https://orcid.org/0000-0001-5867-7207}{0000-0001-5867-7207}}

\begin{abstract}
  We construct non-relativistic quantum strings from gauged
  Wess--Zumino--Witten (WZW) models. We depart from the fact that Lie
  groups with a bi-invariant galilean structure can be seen as the
  quotient by a null central subgroup of a generalised Nappi--Witten
  group. We implement the quotient by a chiral null gauging.  We use a
  particular free field realisation of the Nappi--Witten current
  algebra to compute the Virasoro BRST cohomology of the gauged WZW
  model, resulting in a closed string theory reminiscent of the
  bosonic Gomis--Ooguri string, but whose spectrum differs slightly
  between the holomorphic and antiholomorphic sectors.
\end{abstract}
\maketitle

\tableofcontents

\section{Introduction}

The quarter century since the pioneering paper of Gomis and Ooguri
\cite{Gomis:2000bd} (see also \cite{Danielsson:2000gi}) on so-called
non-relativistic strings has witnessed a surge of activity on string
theories where either the target space or the worldsheet (or both) are
non-lorentzian; that is, where the relevant geometric structure is not
given by a metric (regardless the signature).  Progress in this area
has been recently reviewed in \cite{Oling:2022fft}.  As with much else
in the non-lorentzian approach to fundamental physics, one can obtain
models of non-lorentzian string theories via so-called
non-relativistic \cite{Gomis:2005pg, Batlle:2016iel, Hartong:2021ekg,
  Hartong:2022dsx, Hartong:2024ydv} or so-called ultra-local
\cite{Cardona:2016ytk, Bagchi:2023cfp, Blair:2023noj, Gomis:2023eav,
  Harksen:2024bnh, Bagchi:2024rje} limits of the Nambu--Goto or
Polyakov strings, but there are also constructions of non-relativistic
strings via null reduction of the Nambu--Goto \cite{Harmark:2017rpg}
or Polyakov \cite{Harmark:2018cdl, Bidussi:2021ujm} actions, and also
intrinsic constructions on stringy Newton--Cartan geometries
\cite{Bergshoeff:2018yvt, Bergshoeff:2019pij, Yan:2019xsf}.  A paper
comparing some of these approaches is \cite{Harmark:2019upf}.  There
are other constructions of non-relativistic strings in the literature:
for example in \cite{Brugues:2004an} they are constructed via
Wess--Zumino terms on the Galilei group and
in \cite{Fontanella:2022fjd,Fontanella:2022pbm} they are constructed
via sigma models on homogeneous spaces of stringy extended Galilei
and Newton--Hooke groups.

The aim of this paper is to exhibit yet a different construction of
non-relativistic quantum strings, inspired on the construction of
galilean spacetimes by null reduction of a lorentzian manifold
\cite{PhysRevD.31.1841,Julia:1994bs,Bekaert:2015xua}.  Because strings
propagating on Lie groups are well understood as conformal field
theories of Wess--Zumino--Witten (WZW) type, we will take the
lorentzian manifold to be a Lie group with a bi-invariant lorentzian
metric having a null subgroup that we can gauge.  Gauged WZW models
are two-dimensional conformal field theories amenable to a
quantum-mechanical treatment based on vertex operator algebras and
this allows for a wholly quantum mechanical treatment of the resulting
non-relativistic string theory.

Lie groups admitting a bi-invariant lorentzian metric were classified
(up to coverings) by Medina in \cite{MR814072}.  This is roughly
equivalent to the classification of Lie algebras admitting an
ad-invariant lorentzian inner product, which we will refer to as
\emph{lorentzian Lie algebras} from now on.  Every lorentzian Lie
algebra is either indecomposable or an orthogonal direct sum of an
indecomposable lorentzian Lie algebra with a Lie algebra admitting an
ad-invariant positive-definite inner product.  We call these latter
Lie algebras \emph{euclidean} from now on.  The list of indecomposable
lorentzian Lie algebras is small: it is either $\fsl(2,\RR)$, whose
Killing form has lorentzian signature, or else it is a double
extension \cite{MR826103} (to be described below) of an
even-dimensional abelian euclidean Lie algebra by a one-dimensional
Lie algebra spanned by a (maximal rank) skew-symmetric derivation.
The paradigmatic example is the Nappi--Witten Lie algebra
\cite{Nappi:1993ie} (cf. Definition \ref{def:gen nw alg}) which is the double
extension of a two-dimensional abelian Lie algebra with a euclidean
inner product by the one-dimensional Lie algebra of skew-symmetric
derivations, and which will play a crucial role in this paper.

This paper is organised as follows.  In
Section~\ref{sec:preliminaries} we review the characterisation of Lie
algebras with ad-invariant galilean, carrollian and bargmannian
structures, following \cite{Figueroa-OFarrill:2022pus}, and review the
definition of the WZW model.  It is there that we define the notion of
a generalised Nappi--Witten Lie algebra.  The bulk of the paper deals
with WZW models on generalised Nappi--Witten Lie groups.  In
Section~\ref{sec:gauged-wzw-models} we review the gauging of WZW
models and in particular the obstructions to doing so and apply this
to the null gauging of a WZW model based on a generalised
Nappi--Witten group.  In Section~\ref{sec:Implement gauge constraint
  via BRST} we implement the null gauging of the Nappi--Witten WZW
model homologically using the BRST formalism in conformal field
theory.  For ease of computation we make two choices: firstly, we
choose to work with a different real form of the complexified
Nappi--Witten Lie algebra and, secondly, we choose a particular
free-field realisation of this Nappi--Witten current algebra.  These
choices allow us to compute the null gauging BRST cohomology to arrive
at a conformal field theory, which when extended with matter fields in
order to reach the critical central charge, we interpret as a
non-relativistic string theory.  We compute the Virasoro BRST
cohomology, which displays a slight heterosis between the holomorphic
and antiholomorphic sectors, but which nevertheless is reminiscent of
the closed Gomis--Ooguri bosonic string.  Section~\ref{sec:discussion}
contains a discussion and some comments about our results, as well as
pointing to possible future work.  The paper has a number of
appendices: Appendix~\ref{app:spectral
  sequences} on spectral sequences and Appendix~\ref{app:KO quartet
  mechanism} on the Kugo--Ojima mechanism are expository, while
Appendix~\ref{sec:cohom-comp} on the
cohomology proofs whose end results were only stated in
Section~\ref{sec:Implement gauge constraint via BRST} and
Appendix~\ref{sec:null-reds} on the null reduced geometries contain the results of calculations.

\section{Preliminaries}
\label{sec:preliminaries}

We start by recalling the basic notions of galilean, carrollian and
bargmannian Lie algebras (following \cite{Figueroa-OFarrill:2022pus})
and how they are related, as well as the basic definition of a
WZW model.

\subsection{Galilean, carrollian and bargmannian structures on Lie groups}
\label{sec:galcarbarg}

There is a rich geometric story to these structures (see, for example,
the recent review \cite{Bergshoeff:2022eog}).  We are interested in
Lie groups admitting bi-invariant such geometric structures.  Modulo
the usual ambiguity inherent in the Lie correspondence, we prefer to
work with their Lie algebras.  Lie algebras with ad-invariant galilean
carrollian and bargmannian structures are intimately linked and, as
shown in \cite{Figueroa-OFarrill:2022pus}, there is a bijection (a
sort of duality) between galilean and carrollian Lie algebras,
mediated by bargmannian Lie algebras, which we now recall.

We start by recalling the notion of a metric Lie algebra, first
systematically introduced by Medina and Revoy \cite{MR826103}.

\begin{definition}
  A \emph{metric Lie algebra} $\g_0$ is a real Lie algebra admitting a
  symmetric bilinear inner product $\langle-,-\rangle_0$ that is
  invariant under the adjoint action of $\g_0$ on itself. That is, for
  all $X, Y, Z \in \g_0$, 
  \begin{equation*}
    \langle \ad_X Y, Z \rangle_0 + \langle Y, \ad_X Z \rangle_0 = 0;
  \end{equation*}
  equivalently,
  \begin{equation*}
    \left< [Y,X],Z\right> = \left<Y, [X,Z]\right>.
  \end{equation*}
\end{definition}

Cartan's semisimplicity criterion says that every semisimple Lie
algebra is metric relative to its Killing form
$\left<X,Y\right> := \Tr \ad_X \ad_Y$.  At the other extreme, given
any real Lie algebra $\h$ we may give the vector space
$\h \oplus \h^*$ the structure of metric Lie algebra as follows: $\h$
acts on $\h^*$ via the coadjoint representation, whereas $\h^*$
becomes an abelian ideal.  The inner product is simply the dual
pairing between $\h$ and $\h^*$, which has split signature in this
case.  This latter example is the simplest example of a double
extension, which we now recall.

Let $(\g_0,\left<-,-\right>_0)$ be a metric Lie algebra and let
$\so(\g_0)$ denote the Lie algebra of skew-symmetric endomorphisms of
$\g_0$.  Let $\h$ be a Lie algebra acting on $\g_0$ via skew-symmetric
derivations, with $\partial \colon \h \to \so(\g_0)$ denoting the
corresponding Lie algebra homomorphism.  This defines a skew-symmetric
bilinear map $\omega \colon \g_0 \times \g_0 \to \h^*$ by transposition; namely,
\begin{equation}
  \omega(X,Y)(h) = \left<\partial(h)X, Y\right>_0.
\end{equation}
The vector space $\g := \g_0 \oplus \h \oplus \h^*$ now becomes a Lie
algebra under the bracket
\begin{equation}
  [(X,h,\alpha), (Y,k,\beta)] = \left([X,Y]_{\g_0} + \partial(h) Y - \partial(k)X,
   [h,k]_{\h}, \ad_h^* \beta - \ad_h^* \alpha + \omega(X,Y)\right)
\end{equation}
for all $X,Y \in \g_0$, $h,k \in \h$ and $\alpha,\beta \in \h^*$.
Moreover, $\g$ is metric relative to the inner product
\begin{equation}
  \left<(X,h,\alpha), (Y,k,\beta)\right> = \left<X,Y\right>_0 +
  \alpha(k) + \beta(h) + \left<h,k\right>_{\h},
\end{equation}
where $\left<-,-\right>_{\h}$ is any invariant symmetric bilinear
form (not necessarily an inner product) on $\h$.  The metric Lie
algebra $(\g,\left<-,-\right>$ is said to be a \emph{double
  extension} of the metric Lie algebra $(\g_0,\left<-,-\right>_0$ by
$\h$.  If $\left<-,-\right>_0$ has signature $(p,q)$ and $\h$ has
dimension $r$, then $\left<-,-\right>$ has signature $(p+r,q+r)$.

Given two metric Lie algebras, we can construct a third metric Lie
algebra by taking their orthogonal direct sum.  A metric Lie algebra
which is not isomorphic to an orthogonal direct sum of metric Lie
algebras is said to be \emph{indecomposable}.  The fundamental theorem
of metric Lie algebras \cite[Théorème~II]{MR826103} says that the
class of metric Lie algebras is the smallest class which contains the
simple and one-dimensional Lie algebras and which is closed under the
operations of orthogonal direct sum and double extension.

It follows from this theorem and the above observation about the
signature of the inner product of a double extension, that a euclidean
Lie algebra $\g$ is the orthogonal direct sum
\begin{equation}
  \label{eq:metric-LA-pos-def}
  \g = \s_1 \oplus \dots \oplus \s_n \oplus \a,
\end{equation}
where the $\s_i$ are compact simple Lie algebras, with inner product
given by a negative multiple of the Killing form, and $\a$ is a
euclidean abelian Lie algebra.

A precursor theorem of Medina's \cite[Thm.~4.1]{MR814072} states that
any lorentzian Lie algebra is either indecomposable or else the
orthogonal direct sum of an indecomposable lorentzian Lie algebra with
a euclidean Lie algebra. Furthermore an indecomposable lorentzian Lie
algebra is either $\fsl(2,R)$ with a positive multiple of the Killing
form or a double extension of an even-dimensional euclidean abelian
Lie algebra by a one-dimensional Lie algebra acting via skew-symmetric
derivations.  As we will recall below, these are necessarily
generalised Nappi--Witten Lie algebras.

With these prefatory remarks we can now define the notions of
bargmannian, galilean and carrollian Lie algebras, as used in this
paper.

\begin{definition}
  A lorentzian metric Lie algebra $\g$ is said to be
  \emph{bargmannian} if it admits a (nonzero) central null element $Z
  \in \g$.
\end{definition}
Since $\fsl(2,\RR)$ has trivial centre, it follows that any
indecomposable bargmannian Lie algebra is a double extension of an
even-dimensional euclidean abelian Lie algebra by a one-dimensional
Lie algebra.  A general bargmannian Lie algebra is then the orthogonal
direct sum of an indecomposable bargmannian Lie algebra with a
euclidean Lie algebra (see \cite[Theorem~5]{Figueroa-OFarrill:2022pus}
for a more precise statement).

\begin{definition}
  A Lie algebra $\g$ is said to be \emph{carrollian} if it admits a
  (nonzero) central element $Z \in \g$ and a positive-semidefinite
  ad-invariant symmetric bilinear form $\left<-,-\right>$ whose
  radical is one-dimensional and spanned by $Z$.
\end{definition}

As shown in \cite[Prop.~2]{Figueroa-OFarrill:2022pus}, a carrollian
Lie algebra $\g_{\mathrm{car}}$ is isomorphic to a central extension of a
positive-definite metric Lie algebra $(\g_0,\left<-,-\right>_0)$; that
is, we have a (typically non-split) short exact sequence
\begin{equation}
  \begin{tikzcd}
    0 \arrow[r] & \RR \arrow[r] & \g_{\mathrm{car}} \arrow[r] & \g_0  \arrow[r] & 0.
  \end{tikzcd}
\end{equation}

\begin{definition}
  A Lie algebra $\g$ is said to be \emph{galilean} if it admits a
  (nonzero) ad-invariant $\tau \in \g^*$ and an ad-invariant $\gamma
  \in \odot^2\g$ which, as a symmetric bilinear form on
  $\g^*$, is positive-semidefinite and has a one-dimensional radical
  spanned by $\tau$.
\end{definition}

As shown in \cite[Prop.~3]{Figueroa-OFarrill:2022pus}, a galilean Lie
algebra $\g_{\mathrm{gal}}$ is isomorphic to an ``extension by
skew-symmetric derivation'' of a positive-definite metric Lie algebra
$(\g_0,\left<-,-\right>_0)$; that is, we have a short exact sequence
\begin{equation}
  \begin{tikzcd}
    0 \arrow[r] & \g_0 \arrow[r] & \g_{\mathrm{car}} \arrow[r] & \RR \arrow[r] & 0,
  \end{tikzcd}
\end{equation}
where the action of $\RR$ on $\g_0$ is via skew-symmetric derivations.

Taken together, these results may be summarised in a commutative
diagram of Lie algebras
\begin{equation} \label{eq:galcarbarg rels diagram}
    \begin{tikzcd}
        &            & 0\arrow[d]                     & 0 \arrow[d]                    &            & \\
        &            & \mathbb{R} \arrow[r,equal]\arrow[d]\ & \mathbb{R} \arrow[d] &            & \\
        & 0\arrow[r] & \g_{\mathrm{car}} \arrow[r]\arrow[d] & \g_{\mathrm{barg}} \arrow[r]\arrow[d] & \mathbb{R} \arrow[r]\arrow[d,equal] &0 \\
        & 0\arrow[r] & \g_0 \arrow[r]\arrow[d] &  \g_{\mathrm{gal}}\arrow[r]\arrow[d] & \mathbb{R}\arrow[r] &0 \\
        &                    & 0                     & 0                    &                       &
    \end{tikzcd}
\end{equation}
where $(\g_0,\left<-,-\right>_0)$ is a euclidean Lie algebra and
$\g_{\mathrm{barg}}$ is a bargmannian Lie algebra obtained by
double-extending $(\g_0,\left<-,-\right>_0)$ by the one-dimensional
Lie algebra spanned by a skew-symmetric derivation $D_0$.  The
underlying vector space of $\g_{\mathrm{barg}}$ is $\g_0 \oplus \RR D
\oplus \RR Z$ with Lie bracket given by
\begin{equation}
  \begin{split}
    [X,Y] &= [X,Y]_0 + \left<D_0 X, Y\right>_0 Z\\
    [D,X] &= D_0X\\
    [Z,-] &= 0,
  \end{split}
\end{equation}
for all $X,Y \in \g_0$, with $[-,-]_0$ the Lie bracket on
$\g_0$.  Via an automorphism of $\g_{\mathrm{barg}}$, we may bring the
inner product $\left<-,-\right>$ to the form
\begin{equation}
  \left<X,Y\right> = \left<X,Y\right>_0 \qquad\text{and}\qquad \left<D,Z\right> = 1,
\end{equation}
for all $X,Y \in \g_0$ and with all other inner products zero.  The
carrollian Lie algebra $\g_{\mathrm{car}}$ is the ideal $Z^\perp \subset
\g_{\mathrm{barg}}$ and the galilean Lie algebra is the quotient
$\g_{\mathrm{gal}} = \g_{\mathrm{barg}}/\RR Z$ and they inherit their
respective structures from those of $\g_{\mathrm{barg}}$.

Indecomposable bargmannian Lie algebras are generalised Nappi--Witten
Lie algebras $\nw_{2\ell + 2}$, defined explicitly as follows.
Consider a symplectic vector space $V$ spanned by
$P_1,\dots,P_{2\ell}$ with symplectic structure $\omega \in \wedge^2
V^*$ with $\omega(P_i,P_j):=\omega_{ij}$.  Then the underlying vector
space of $\nw_{2\ell+2}$ is $V \oplus \RR D \oplus \RR Z$, with
nonzero Lie brackets given by
\begin{equation}
  \label{eq:gen-NW-brackets}
  [P_i, P_j] = \omega_{ij} Z \qquad\text{and}\qquad [D,P_i] =
  \omega_{ij} P_j,
\end{equation}
and with the ad-invariant lorentzian inner product with nonzero entries
\begin{equation}
  \label{eq:gen-NW-metric}
  \left<P_i,P_j\right> = \delta_{ij} \qquad\text{and}\qquad \left<D,Z\right> = 1.
\end{equation}

\begin{definition}\label{def:gen nw alg}
  The Lie algebra $\nw_{2\ell+2}$ is a \emph{generalised Nappi--Witten algebra}.
\end{definition}

The justification for the name is that the special case $\ell=1$ was
used by Nappi and Witten to build the first WZW model based on a
non-abelian non-semisimple Lie group \cite{Nappi:1993ie}.

\subsection{WZW Models}
\label{sec:wzw-models}

A WZW model on a Lie group $G$ admitting a bi-invariant metric is a
two-dimensional field theory described by the action
\begin{equation}
  \label{eq:Standard WZW Model Action}
  S[g]=\int_{\Sigma} \langle g^{-1}\partial g, g^{-1} \bar{\partial} g \rangle + \int_{\Tilde{\Sigma}} \Tilde{g}^*\omega,
\end{equation}
where $\Sigma$ is an orientable Riemann surface, $g\colon\Sigma
\rightarrow G$ is a smooth map, $\Tilde{\Sigma}$ is some smooth
manifold with boundary $\partial \Tilde{\Sigma} = \Sigma$ and
$\Tilde{g} \colon \Tilde{\Sigma} \to G$ extends $g$.  In addition,
$g^{-1}\partial g$ and $g^{-1} \bar{\partial} g$ are the $(1,0)$- and
$(0,1)$- components, respectively, of the pull-back $g^* \vartheta$ of
the left-invariant Maurer--Cartan one-form on $G$ and $\omega$ is the
Cartan $3$-form (see below) on $G$.  Finally, $\langle
-,-\rangle$ stands both for the wedge product of one-forms and the
ad-invariant inner product on $\g$.  The extension of $g\colon
\Sigma \to G$ to $\Tilde{g} \colon \Tilde{\Sigma} \to G$ is
topologically unobstructed for any Lie group $G$, because of the fact
that the second homotopy group of any Lie group is trivial.

The first term in \eqref{eq:Standard WZW Model Action} is known as
the \emph{principal chiral model (PCM)} whereas the second term is known
as the \emph{Wess--Zumino (WZ)} term, which is the integral over
$\Tilde{\Sigma}$ of the pull-back of the Cartan 3-form $\omega$
on $G$. In terms of the (left-invariant) Maurer--Cartan one-form
$\vartheta\in\Omega^1(G;\g)$,
\begin{equation}
  \omega = \tfrac{1}{6}\langle \vartheta, [\vartheta,\vartheta]\rangle
  = -\tfrac{1}{3}\langle\vartheta,d\vartheta\rangle,
\end{equation}
where we have used the structure equation
\begin{equation} \label{eq:structure equation MC form}
  d\vartheta + \tfrac{1}{2} [\vartheta, \vartheta] = 0.
\end{equation}
Hence, we may write \eqref{eq:Standard WZW Model Action} in its more
familiar form (using $g^*\vartheta = g^{-1}dg$ for any matrix Lie
group G)
\begin{equation}
  \label{eq:Standard WZW Action Explicit}
  S[g]=\int_{\Sigma} \left<g^{-1}\partial g, g^{-1} \bar{\partial} g
  \right> + \tfrac{1}{6} \int_{\Tilde{\Sigma}} \left<
    \Tilde{g}^{-1}d\Tilde{g}, [\Tilde{g}^{-1}d\Tilde{g}, \Tilde{g}^{-1}d\Tilde{g}] \right>.
\end{equation}
At face value, the WZ term seems to depend on  $\Tilde{g} \colon
\Tilde{\Sigma} \to G$, but its variation results in local field
equations for the field $g \colon \Sigma \to G$.  The quantum theory does
not depend on $\Tilde{\Sigma}$ either, provided one normalises the
inner product in such a way that exponential of the action in the path
integral changes by an integer multiple of $2\pi$
\cite{Witten:1983ar}.

The WZW action $S[g]$ is invariant under the following transformations:
\begin{itemize}
\item conformal transformations
\item the isometry group of $G$ with its bi-invariant metric.  This is
  typically $G\times G$, but since the left and right multiplication
  by the centre coincide, when $G$ has nontrivial centre, we must
  quotient by the centre, resulting in $(G \times G)/Z$, with $Z
  \subset G$ the centre.
\item the infinite-dimensional group $G(z)\times G(\Bar{z})$, acts by
  $g(z,\bar{z})\mapsto\Omega(z)g(z,\bar{z})\bar{\Omega}(\bar{z})^{-1}$,
  where again constant $\Omega$ and $\bar{\Omega}$ in the centre act
  in the same way, so we should quotient by $Z$, resulting in
  $\left(G(z)\times G(\Bar{z})\right)/Z$.
\end{itemize}
The final "enhanced" symmetry only exists for our choice of
normalisation of the WZW action. The currents associated to the
$G(z)\times G(\Bar{z})$ symmetry satisfy an affine version of the
Lie algebra of G, which is then encapsulated by the operator product
expansions (OPEs) of these currents in the corresponding quantum
CFT. This lays the bedrock of how WZW models are generally studied.

\section{Gauged WZW models}
\label{sec:gauged-wzw-models}

The study of WZW models with some subgroup of its isometry group being
gauged has been of interest for various reasons, but most notably due
to how this procedure gives rise to coset CFTs as in the GKO
construction \cite{Goddard:1986ee}.

However, one cannot gauge any subgroup of the isometry group of the
WZW model. Additionally, the type of gauging we perform in this paper
does not lead to a coset construction. We address these points in this
section, along with an explicit discussion of the gauging procedure
for our purposes.

Throughout this section, $G$ refers to a generic Lie group and $\B$ refers to 
an indecomposable bargmannian Lie group with Lie algebra $\nw_{2\ell+2}$.

\subsection{What can we gauge?}

There are certain (cohomological) restrictions
\cite{Hull:1989jk, Witten:1991mm, Figueroa-OFarrill:1994uwr} as to what
subgroups of the isometry group $G\times G$ we can gauge, summarised in the 
following proposition.
\begin{proposition}
  \label{prop:gauge-able subgroups}
  Let $G$ be a Lie group with a bi-invariant metric and $\g$ its Lie
  algebra.  Let $\left<-,-\right>$ denote the ad-invariant inner
  product on $\g$.  Let $H \subset G \times G$ be a Lie subgroup with
  Lie algebra $\h$. The embedding $\h \subset \g \times \g$ defines
  two Lie algebra homomorphisms $l,r\colon \h \rightarrow \g$ by
  composing with the cartesian projections.  Then $H$ can be gauged if
  and only if
  \begin{equation*}
    l^*\langle-,-\rangle=r^*\langle-,-\rangle;
  \end{equation*}
  equivalently,
  \begin{equation*}
    \langle l(X),l(Y) \rangle = \langle r(X),r(Y) \rangle \quad \forall X,Y \in \h.
  \end{equation*}
  Such subgroups are said to be \emph{anomaly-free}.
\end{proposition}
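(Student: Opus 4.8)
The plan is to work at the level of the gauged action and identify the anomaly as the failure of gauge invariance, which is a local functional of the gauge field that can only be cancelled by a counterterm when a certain symmetric quadratic form on $\h$ vanishes. First I would introduce a gauge field $A = A_z \, dz + A_{\bar z}\, d\bar z$ taking values in $\h$, coupled via its images $l(A), r(A) \in \g$ under the two projections, replacing $\partial$ and $\bar\partial$ by appropriate covariant derivatives; the standard (Noether) minimal coupling gives the candidate gauged action $S[g,A]$ whose bosonic-gauge-field part is at most quadratic in $A$. The infinitesimal gauge transformation acts as $\delta g = l(\epsilon) g - g\, r(\epsilon)$ together with $\delta A = -d\epsilon - [A,\epsilon]$, and one computes $\delta S[g,A]$. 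The Polyakov--Wiegmann identity for the WZW action is the essential algebraic input here: it organises the variation so that all the $g$-dependent terms cancel, leaving a purely gauge-field-dependent obstruction.

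Next I would isolate that obstruction. After the cancellation, $\delta S[g,A]$ reduces to an integral over $\Sigma$ of an expression of the schematic form $\int_\Sigma \big( \langle l(\epsilon), l(\partial_{\bar z} A_z)\rangle - \langle r(\epsilon), r(\partial_{\bar z} A_z)\rangle \big) + (z \leftrightarrow \bar z)$, i.e. the anomaly is controlled by the difference $l^*\langle-,-\rangle - r^*\langle-,-\rangle$ of the two pulled-back bilinear forms on $\h$. If this difference vanishes, the variation is zero (or, more carefully, a total derivative that integrates to zero on the closed surface $\Sigma$), so $H$ is gaugeable. Conversely, if the difference is nonzero, one shows the variation cannot be removed by adding any local counterterm $\int_\Sigma \Phi(A)$: such a counterterm is at most quadratic in $A$ by power counting and dimensional analysis, its variation is again quadratic in $A$ and linear in $\epsilon$, and matching it against the anomaly forces $l^*\langle-,-\rangle = r^*\langle-,-\rangle$. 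This is the step where one must be slightly careful about what ``gaugeable'' means — I would take it to mean ``admits a gauge-invariant extension by a local functional of $g$ and $A$ of the allowed form'' — and cite \cite{Hull:1989jk, Witten:1991mm, Figueroa-OFarrill:1994uwr} for the precise framework.

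The main obstacle is the bookkeeping in the Polyakov--Wiegmann step: one has to be careful that the Wess--Zumino term, which is defined via the extension $\tilde g \colon \tilde\Sigma \to G$, also transforms correctly and that its contribution to $\delta S$ combines with the PCM contribution exactly into the claimed local boundary expression on $\Sigma = \partial\tilde\Sigma$. Concretely, one uses the structure equation \eqref{eq:structure equation MC form} and ad-invariance of $\langle-,-\rangle$ repeatedly to rewrite $\tilde g^* \omega$ under the variation; the potentially dangerous three-form terms are exact, and Stokes' theorem reduces them to two-form terms on $\Sigma$ that must be matched against the minimal-coupling variation. Once the identity $l^*\langle-,-\rangle = r^*\langle-,-\rangle$ emerges as the exact obstruction, both directions of the ``if and only if'' follow immediately, and I would record that under this condition the gauged action is the standard one with the quadratic-in-$A$ term written using the common pulled-back form.
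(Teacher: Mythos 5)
Your proposal is a correct sketch of the classical ``direct anomaly computation'' route, but it is worth noting that the paper does not actually prove Proposition~\ref{prop:gauge-able subgroups} at all: it states it as a summary of results from \cite{Hull:1989jk, Witten:1991mm, Figueroa-OFarrill:1994uwr}, and the closest it comes to a derivation is the Cartan-model discussion in the following subsection, where the obstruction to gauging the WZ term is packaged as the existence of an equivariantly closed extension $\varOmega_C = \omega + \theta_a F^a$ subject to $\iota_a\omega = d\theta_a$ and $\iota_a\theta_b = -\iota_b\theta_a$. For the $G\times G$ action on a WZW model the symmetric part $\iota_a\theta_b + \iota_b\theta_a$ is precisely $\langle l(X_a),l(X_b)\rangle - \langle r(X_a),r(X_b)\rangle$ (a constant function on $G$), so the equivariant-cohomology condition reproduces the anomaly-free condition; this is the content of the cited references. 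Your route instead varies the minimally coupled action directly, uses Polyakov--Wiegmann to cancel the $g$-dependent terms, and identifies the residual variation with the difference of pulled-back forms; this makes the physical origin of the condition transparent, and your treatment of necessity (ruling out local counterterms) is the right idea --- the precise mechanism is that the only admissible counterterm $\int_\Sigma m_{ab}\,A^a_z A^b_{\bar z}$ can absorb at most the part of the variation controlled by an arbitrary bilinear form $m_{ab}$, while the genuinely obstructing piece is the symmetric form $l^*\langle-,-\rangle - r^*\langle-,-\rangle$, which no choice of $m_{ab}$ removes. The equivariant-cohomology formulation buys a cleaner statement of ``gaugeable'' (existence of a closed extension, independent of any choice of coupling) and, as the paper exploits, directly produces the gauged WZ term via the $\theta_a$; your approach buys an elementary, self-contained argument that needs no cohomological machinery but requires more care in the bookkeeping of the Wess--Zumino variation and in formalising the counterterm exclusion. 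Both are sound; they are genuinely different presentations of the same obstruction.
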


The two most obvious anomaly-free subgroups are:
\begin{enumerate}
\item $l=r\neq 0$, which corresponds to \emph{diagonal gauging} and
  give rise to the coset models mentioned earlier;
\item and $l=0$ \textit{or} $r=0$, which corresponds to \emph{null chiral gauging}.
\end{enumerate}

For the WZW model on $\B$ with Lie algebra
$\b=\nw_{2\ell+2}$, where the subgroup $H$ is one-dimensional and
generated by a null element $Z \in \g$, any embedding
$\h \in Z \mapsto (\alpha Z, \beta Z) \in \g \oplus \g$, for
$\alpha,\beta \in \RR$, satisfies the condition in
Proposition~\ref{prop:gauge-able subgroups}.  Indeed,
$\left<l(Z),l(Z)\right> = \left<\alpha Z, \alpha Z\right> = \alpha^2
\left<Z,Z\right> = 0$ and, similarly,
$\left<r(Z),r(Z)\right> = \beta^2 \left<Z,Z\right> = 0$.  But,
moreover, since $Z$ is also central, it turns out that for any $\alpha
\neq \beta$ the gaugings are equivalent.  Without loss of generality
we will take $\alpha = 1$ and $\beta = 0$ in our treatment.

\subsection{How do we gauge?}

Consider a subgroup $H\subset G\times G$ that obeys
Proposition~\ref{prop:gauge-able subgroups}. 
We treat the null chiral gauging of each of the two terms in
\eqref{eq:Standard WZW Model Action} separately.  The PCM can be
gauged straightforwardly by minimal coupling; this involves modifying
the exterior derivative as $d\to d + \mathbf{A}$, where
$\mathbf{A}=A+\bar{A}$ is a gauge field that can be interpreted as a
one-form on $G$ with values in $\h^\CC$.

The WZ term is not as straightforward to gauge. There are certain
obstructions to gauging \cite{Hull:1989jk} that are encapsulated by
equivariant cohomology. The key result from
\cite{Witten:1991mm,Figueroa-OFarrill:1995lva} (also see
\cite{Figueroa-OFarrill:2005vws}) is that the WZ term can be gauged if
and only if one can extend $\omega\in \Omega^3(G)$ to an equivariant
closed form ${\varOmega}$. It turns out that using an algebraic model
of equivariant cohomology, known as the Cartan model, makes it
feasible to perform this procedure explicitly. We demonstrate this
procedure here, but we refer the reader to
\cite{Figueroa-OFarrill:2005vws} for more details.

Let $\{X_a\}_{a=1}^{\dim H}$ be a basis for the Lie algebra $\h$ of
$H$. The Lie bracket can be written in terms of this basis as $[X_a,
X_b]= f\indices{_a_b^c}X_c$. The action of $H$ on $G$ is generated by
the corresponding left-invariant Killing vector fields
\begin{equation*}
  \xi_a(g) = \frac{d}{dt}\Big|_{t=0} g\cdot\exp(tX_a).
\end{equation*}
Define $\iota_a$ and $\mathcal{L}_a$ to be the contraction and Lie derivative on $G$ with respect to $\xi_a$.
Consider the cochain complex of $H$-invariants
\begin{equation*}
  \left\{\big(\Omega(G)\otimes\mathfrak{S}^{\dotr}\h^*\big)^H=\bigoplus_{p=0}^{n} C^p_H,d_C\right\}.
\end{equation*}
Elements of $C^p_H$ are linear combinations of 
\begin{equation*}
    \Phi = \phi + \theta_a F^a + \tfrac{1}{2}\varphi_{ab} F^a F^b + \dots,
\end{equation*}
where $\theta_a\in \Omega^{p-2}(G)$, $\phi_{ab}\in\Omega^{p-4}(G)$ and
so on, and $F=F^a X_a$ is an abstract $\h$-valued
2-form.
Note that in this algebraic model for equivariant cohomology,
$F$ is simply a degree-2 object that is not a 2-form on some specific
manifold.
$H$-invariance implies that $\mathcal{L}_a\theta_b = f\indices{_a_b^c}\theta_c$, etc. The differential $d_C\colon C^p_H \rightarrow C^{p+1}_H$ acts as
\begin{equation} \label{eq:d_C action}
    d_C (\phi) = d\phi - F^a\wedge \iota_a \phi, \quad d_C F^a = 0
\end{equation}
for all $\phi \in \Omega^{\dotr}(G)$, where $d$ is the de Rham differential. The problem of gauging the WZ term by a subgroup $H$ is equivalent to that of extending the Cartan 3-form $\omega\in \Omega^3(G)$ to $\varOmega_C  \in C^3_H$ such that $d_C \varOmega_C = 0$. Explicitly, the Cartan representative $\Omega_C$ is of the form 
\begin{equation} \label{eq:gauged Omega Cartan REP}
    \varOmega_C = \omega + \theta_a F^a.
\end{equation}
Using \eqref{eq:d_C action}, demanding $d_C \varOmega = 0$ means
\begin{equation}
    d_C (\omega + \theta_a F^a) = d_C \omega + (d_C \theta_a) F^a = 0.
\end{equation}
But $d \omega = 0$ (since it is the Maurer--Cartan 3-form), so $d_C\omega = - F^a \iota_a \omega$. Writing $d_C \theta_a = d\theta_a - \iota_b \theta_a F^b$ and grouping terms according to powers of $F^a$, we see that $d_C \varOmega_C = 0$ if and only if
\begin{gather}
    \iota_a \omega = d\theta_a \label{eq:WZ Term gauge condition 1} \\
    \iota_a \theta_b = -\iota_b \theta_a \label{eq:WZ Term gauge condition 2}.
\end{gather}
To obtain an explicit expression for the gauged WZ term from this algebraic model, we need to minimally couple to an abstract 1-form $A$, which will play the role of the gauge field in the WZW model. This requires demanding the relation\footnote{Strictly speaking, this $d$ should refer to the twisted Cartan differential $d_C$. The minimal coupling map \eqref{eq:F=dA minimal coupling} modifies $d_C$ on $F\in\h^*$. This map is a quasi-isomorphism of chain complexes. However, since we immediately revert to interpreting $A$ as a $\h^\CC$-valued 1-form on $\Sigma$, we simply use $d$.}
\begin{equation}\label{eq:F=dA minimal coupling}
    F = F^a X_a = dA +\tfrac{1}{2}[A,A] = (dA^a + \tfrac{1}{2}f\indices{_b_c^a}A^b A^c)X_a
\end{equation}
Finally, equation \eqref{eq:F=dA minimal coupling}, along with the conditions \eqref{eq:WZ Term gauge condition 1} and \eqref{eq:WZ Term gauge condition 2} which let us to determine $\theta_a$, can be used together in \eqref{eq:gauged Omega Cartan REP} to complete the gauging of the WZ term.

We now apply these results to our case, in which $G=\B$ is an
indecomposable bargmannian Lie group with Lie algebra
$\b=\nw_{2\ell+2}$. The WZW model on the corresponding group $\B$,
which we call a \emph{generalised Nappi--Witten group},
is\footnote{Later, we will use angle brackets to denote dual-pairing
  as well, so the subscript has been introduced here to make it clear
  that these angle brackets refer to the inner product on the Lie
  algebra $\b$.}:
 \begin{equation}
 \label{eq:bargmannian WZW action}
     S[g]=\int_{\Sigma} \langle g^{-1}\partial g, g^{-1} \bar{\partial} g \rangle_{\mathfrak{b}} -\tfrac{1}{3} \int_{\Tilde{\Sigma}} \langle \Tilde{g}^{-1} d\Tilde{g}, d(\Tilde{g}^{-1} d\Tilde{g}) \rangle_{\mathfrak{b}}.
 \end{equation}
We want to promote the isotropic subgroup $H\subset
\B \times \B$ to a local symmetry. Recalling that $H$ embeds into $\B\times \B$ via $H \mapsto (H,e)$, the goal is to find an an action that is invariant under
\begin{equation}
  g(z,\Bar{z}) \mapsto h(z,\Bar{z}) g(z,\Bar{z}).
\end{equation}
It is clear from the fact that $H \subset \B$ is a central subgroup
that the RHS equals $g(z,\Bar{z}) h(z,\Bar{z})$, or indeed any mixture
of left and right multiplications.  This simply reiterates the
independence of the resulting gauged WZW model on the precise
embedding $H \subset \B \times \B$.

Gauging the first term in \eqref{eq:bargmannian WZW action} gives
\begin{equation} \label{eq:PCM gauged}
  \int_{\Sigma} \langle g^{-1}\partial g, g^{-1} \bar{\partial} g \rangle_{\mathfrak{b}} \to 
  \int_{\Sigma} \langle g^{-1}\partial g, g^{-1} \bar{\partial} g \rangle_{\mathfrak{b}} + 
  \int_{\Sigma} \langle A, g^{-1} \bar{\partial} g \rangle_{\mathfrak{b}} +
  \int_{\Sigma} \langle g^{-1}\partial g, \Bar{A} \rangle_{\mathfrak{b}},
\end{equation}
where the $\left<A,\Bar{A}\right>$ term is absent since $A$ and
$\Bar{A}$ are $\h$-valued and $\h$ is an isotropic subalgebra.

To gauge the second term in \eqref{eq:bargmannian WZW action}, note that $\dim H = 1$, $\h = \RR Z$ and the corresponding vector field $\xi_Z$ is both left and right-invariant, by virtue of $Z$ being central. 
For one-dimensional $\h$, the conditions \eqref{eq:WZ Term gauge condition 2} and \eqref{eq:WZ Term gauge condition 1} give
\begin{gather}
    \iota_{Z} \theta_Z = 0 \label{eq:contraction of v_Z is zero} \quad \text{and} \\
    \iota_Z \omega = \tfrac{1}{6}\iota_Z \langle \vartheta, [\vartheta, \vartheta]\rangle_{\mathfrak{b}} = \tfrac{1}{2}\langle \iota_Z \vartheta, [\vartheta, \vartheta]\rangle_{\mathfrak{b}} = d \theta_Z
\end{gather}
respectively.
Using $\iota_Z \vartheta = Z$ (contraction of the left-invariant Maurer--Cartan form with a left-invariant vector field) and the fact that $Z\in[\b,\b]^\perp$ due to $Z$ being central, we get
\begin{equation} \label{eq:deciphering theta_Z}
    \iota_Z \omega = \tfrac{1}{2}\langle Z, [\vartheta, \vartheta] \rangle_{\mathfrak{b}} = 0 = d\theta_Z.
\end{equation}
The Maurer--Cartan form $\vartheta\in \Omega^1(\B;\b)$ obeys the structure equation \eqref{eq:structure equation MC form},
so equation \eqref{eq:deciphering theta_Z} tells us that
\begin{equation} \label{eq:theta_Z equations}
    d\theta_Z =0, \quad d(\theta_Z + \langle \vartheta, Z \rangle_{\mathfrak{b}}) = 0.
\end{equation}
Hence, we may conclude that 
\begin{equation}
    \theta_Z = -\langle \vartheta, Z \rangle_{\mathfrak{b}} + \pi^*\tau.
\end{equation}
Here, $\pi \colon \B \rightarrow \B/H$ is the projection map and
$\tau\in\Omega^1(\B/H)$ is closed. Thus, the gauging procedure of the
WZ term is ambiguous up to a closed 1-form on $\B/H$.  It might be
interesting to study how this ambiguity can be exploited in the
resulting gauged WZW model, but for convenience, we will choose $\tau
= 0$ for the rest of this paper and hence
\begin{equation}
    \label{eq:theta_Z final}
    \theta_Z = - \langle \vartheta, Z \rangle_{\mathfrak{b}}.
\end{equation}
In our case, $F = dA \iff F^Z = dA^Z$ appears in \eqref{eq:gauged Omega Cartan REP}. Hence, using \eqref{eq:theta_Z equations} and \eqref{eq:theta_Z final}, we deduce that
\begin{equation}
    \Omega_C = \omega + dA^Z \theta_Z = \omega - d(A^Z\wedge \langle \vartheta,Z \rangle_\mathfrak{b}).
\end{equation}
Substituting for $\Omega_C$ in the gauged WZ term gives
\begin{equation}
    \int_{\Tilde{\Sigma}} \Tilde{g}^*\Omega_C 
    = \int_{\Tilde{\Sigma}} \Tilde{g}^*\omega - \int_{\Tilde{\Sigma}} \Tilde{g}^*d(A^Z\wedge \langle \vartheta,Z \rangle_\mathfrak{b})
    = \int_{\Tilde{\Sigma}} \Tilde{g}^*\omega - \int_{\Sigma}(A^Z\wedge \langle g^*\vartheta,Z \rangle_\mathfrak{b}),
\end{equation}
where we have used the fact that the de Rham differential commutes with pull-backs and Stokes' theorem on the second term. Finally, identifying $A^Z Z$
as the $\h^\CC$-valued 1-form $\mathbf{A}$ on $\Sigma$ which we had
defined earlier and $g^*\vartheta = g^{-1}dg = g^{-1} \partial g +
g^{-1} \Bar{\partial} g$, we arrive at the following expression for
the gauged WZ term:
\begin{equation}
    \label{eq:Gauged WZ Term}
     \int_{\Tilde{\Sigma}} \Tilde{g}^*\Omega_C = \int_{\Tilde{\Sigma}} \Tilde{g}^*\omega - \int_{\Sigma} \langle A, g^{-1} \Bar{\partial} g \rangle_\mathfrak{b} - \int_{\Sigma} \langle \Bar{A}, g^{-1} \partial g \rangle_\mathfrak{b}.
\end{equation}
 Putting the two terms \eqref{eq:PCM gauged} and \eqref{eq:Gauged WZ Term} together, we arrive at the gauged WZW action of our theory:
\begin{equation}
    \label{eq:Gauged WZW Action}
    S[g,\Bar{A}] = S[g] - 
     2\int_{\Sigma} \langle \Bar{A}, g^{-1} \partial g \rangle_{\mathfrak{b}}.
\end{equation}
Notice that the holomorphic gauge field $A$ vanishes under this
gauging procedure -- an artefact of the gauging being chiral.

\section{The BRST treatment}\label{sec:Implement gauge constraint via BRST}

This section presents the step-by-step implementation of the null
chiral gauging constraint via the BRST formalism for 2d CFTs.  To do
this, we need to bring our theory to the desired form by quantising
the gauged action \eqref{eq:Gauged WZW Action} using path integrals,
making use of standard manipulations such as those in
\cite{Karabali:1988au}, adapted to null chiral gauging. Starting with
\begin{equation}
  \mathcal{Z} = \int \mathcal{D}g \mathcal{D}\Bar{A} \exp{\left(-S[g,\Bar{A}]\right)},
\end{equation}
we choose the holomorphic gauge $\Bar{A}=0$. This gives rise to the Faddeev--Popov determinant
\begin{equation}
  \mathcal{Z} = \int \mathcal{D}g (\det\Bar{\partial}) \exp{\left(-S[g]\right)},
\end{equation}
where \cite{Karabali:1988au}
\begin{equation}
  \det \Bar{\partial} = \int\mathcal{D}B\mathcal{D}C \exp{\left(-\int_\Sigma \langle B, \Bar{\partial} C \rangle\right)}.
\end{equation}
We are therefore left with
\begin{equation}
  \label{eq:path integral final partition function}
  \mathcal{Z} = \int\mathcal{D}g\mathcal{D}B\mathcal{D}C \exp{\left(-S[g]-\int_\Sigma \langle B, \Bar{\partial} C \rangle\right)}.
\end{equation}
 
The expression in \eqref{eq:path integral final partition function}
tells us that the holomorphic sector of the resulting quantum field
theory consists of two CFT sectors: a WZW model on the bargmannian
group $\B$ and a weight-$(1,0)$ $bc$-system, coupled by a constraint
arising from wanting to gauge the isotropic subgroup $H\subset G$,
which we may implement in the BRST formalism. In particular, the
constraint is implemented by the square zero BRST operator, whose
explicit form we derive in Section~\ref{sec:setting-up-cfts}
below. The resulting gauged theory is the cohomology of this
operator. Since each sector in \eqref{eq:path integral final partition
  function} is a CFT, we exploit the computational ease and power of
OPEs.

Before we proceed, it should be noted that the resulting quantum field
theory has different holomorphic and anti-holomorphic sectors due to
the vanishing of $\Bar{A}$ mentioned earlier. Intuitively, this seems
to be an expected feature of a chirally gauged theory.

\subsection{Setting up the CFTs in the gauged WZW action}
\label{sec:setting-up-cfts}

The first sector in \eqref{eq:path integral final partition function} is the WZW model on $\B$ with Lie algebra $\mathfrak{b}=\nw_{2\ell+2}$. Recall that we choose to work with the symplectic form given in \eqref{eq:symplectic form omega}. Its CFT is described by the weight-1 currents
$\{P_1,\dots, P_{2\ell}, D, Z\}$ with OPEs
\begin{equation}
  \label{eq:General NW Currents OPEs}
  A(z)B(w) = \frac{\langle A,B \rangle_{\mathfrak{b}}(w)}{(z-w)^2} + \frac{[A,B](w)}{(z-w)^2}+\reg.
\end{equation}
for all $A,B\in\{P_1,\dots, P_{2\ell}, D, Z\}$. There exists a
two-parameter family of non-degenerate inner products $\langle-,-\rangle_{\mathfrak{b}}$ on $\mathfrak{b}$ given by 
\begin{equation} \label{eq:2-param family of metrics}
  \langle P_i, P_j \rangle_\mathfrak{b} = \delta_{ij}, \quad \langle D,D \rangle_\mathfrak{b} = \kappa_1 \quad \langle Z,Z \rangle_\mathfrak{b} = 0,\quad \langle D, Z \rangle_\mathfrak{b} = \kappa_2,
\end{equation}
where $\kappa_1,\kappa_2\in\RR$. Rescaling $Z\to Z/\kappa_2$ and
performing the Lie algebra automorphism $D\to D - {\kappa_1}/{2}\, Z
$ lets us set $\langle D,Z \rangle = 1$ and $\langle D,D \rangle =
0$.  From now on, we will work with this particular choice of inner product
on $\mathfrak{b}$.  The Lie bracket is is given by
equation~\eqref{eq:gen-NW-brackets}; that is,
\begin{equation}
  [P_i, P_j] = \omega_{ij} Z, \quad\quad [D, P_i] = \omega_{ij} P_j.
\end{equation}
Substituting the above inner product and Lie bracket into \eqref{eq:General NW Currents OPEs}, we obtain the following OPEs:
\begin{align}
  P_i(z) P_j(w) &= \frac{\delta_{ij}\mathbbm{1}(w)}{(z-w)^2} + \frac{\omega_{ij} Z(w)}{z-w} + \reg. \\
  D(z) P_i(w) &= \frac{\omega_{ij} P_j(w)}{z-w} + \reg. \\
  D(z) Z(w) &= \frac{\mathbbm{1}(w)}{(z-w)^2} + \reg.
\end{align}
For the ease of calculations and constructing representations, we prefer to work with the complexification $\mathfrak{b}^\CC$, and perform the variable change
\begin{equation}
  P^{\pm}_{a} \defeq \tfrac{1}{\sqrt{2}}\big(P_{2a-1} \pm iP_{2a}\big), \quad J\defeq iD, \quad I\defeq -iZ,
\end{equation}
where $a\in\{1,\dots,\ell\}$.
Note that this variable change is an isomorphism of complex Lie
algebras, but not real Lie algebras. In other words, this variable
change results in a choice of different real form for $\b$, one whose
inner product has split signature $(2,2)$ instead of $(3,1)$. We
proceed with this caveat in mind and write the new brackets as
\begin{equation} \label{eq:generalised_NW_complexified}
  [P^+_a, P^-_b] = \delta_{ab} I, \quad [J, P^{\pm}_a] = \pm P^{\pm}_a
\end{equation}
and where the inner product is
\begin{equation} \label{eq:new (2,2) metric}
  \langle P^+_a, P^-_b\rangle = \delta_{ab}, \quad \langle I,J \rangle = 1
\end{equation}
and zero otherwise. These translate into the OPEs of the corresponding currents in the CFT
\begin{align}
  P^+_a (z) P^-_b (w) &= \frac{\delta_{ab}\mathbbm{1}(w)}{(z-w)^2} + \frac{\delta_{ab} I(w)}{z-w} + \reg. \label{eq:gen NW OPE for P+P-}\\
  J(z) P^{\pm}_a(w) &= \frac{\pm P^{\pm}_a(w)}{z-w} + \reg. \label{eq:gen NW OPE for J P+-} \\
  J(z) I(w) &= \frac{\mathbbm{1}(w)}{(z-w)^2} + \reg. \label{eq:gen NW OPE for JI}
\end{align}

The second sector in \eqref{eq:path integral final partition function}
are the Faddeev--Popov ghosts: a weight-$(1,0)$ fermionic $bc$-system
$(B,C)$. The ghost fields $B(z)$ and $C(z)$ have the usual OPEs
\begin{equation}
  \label{eq:(1,0) BC-system OPEs}
  B(z) C(w) = \frac{\mathbbm{1}(w)}{z-w}.
\end{equation}
They are free fields of the CFT with stress tensor 
\begin{equation}
  T^{BC} = -\normalord{B\partial C}
\end{equation}
of central charge $-2$. Once again, there is an identical ``barred''
weight-$(1,0)$ $bc$-system given by the fields $(\bar{B}, \bar{C})$
and Virasoro element $\bar{T}^{BC}$.

Finally, we need to deduce the form of the BRST operator. This means
we need to deduce what constraints are imposed by gauging the isometry
subgroup generated by $Z\in\nw_{2\ell+2}$. To do this, let us perform the
variation of the gauge field in the gauged WZW action \eqref{eq:Gauged
  WZW Action}. Note that the second term of \eqref{eq:Gauged WZW
  Action} can be written as
\begin{equation}
  \int_{\Sigma} \langle \Bar{A}, g^{-1} \partial g \rangle_\mathfrak{b} = \int_{\Sigma} \langle \Bar{A}, \partial g g^{-1} \rangle_\mathfrak{b}.
\end{equation}
due to $\Ad$-invariance of the inner product and the fact that
$\Ad_g Z = Z$ for all $g\in\B$. Hence, performing the variation
$\delta \bar{A}$ in \eqref{eq:Gauged WZW Action} constrains the
holomorphic current $\partial g g^{-1}$ along the $J$ direction to be
zero. Quantum mechanically, this is a first-class constraint given by
$J(z)=0$, which is implemented by the BRST current
\begin{equation}
  \label{eq:j_NG original}
  j_{\mathrm{NG}} = \normalord{CJ}.
\end{equation}
A quick computation shows that $d_{\mathrm{NG}}^2 = 0$, reiterating the
fact that we have gauged an anomaly-free subgroup
\cite{Figueroa-OFarrill:1995lva}. Another key point to remember is
that this constraint is only being implemented in the holomorphic
sector (i.e., there is no ``barred'' null gauging BRST operator). This
will be important when we build the full string theory in
Section~\ref{sec:full_theory}.

\subsection{Free field realisation} \label{sec:FFR}

 Having a free field realisation of the CFT would be particularly
 useful to compute BRST cohomology explicitly. From this point on, we
 specialise to $\ell=1$, in which case $\nw_4^\CC$ is the complexified
 Nappi--Witten algebra. The Wakimoto-type embedding for such a
 CFT\footnote{The anti-holomorphic part is identical, with the
   corresponding fields denoted with bars.} into the CFT of two
 weight-$(1,0)$ bosonic $\beta\gamma$-systems is given below. This was
 inspired by the result in \cite[Theorem~5.4]{bao2011representations} (to which we return in Section \ref{sec:FFR_comments}):
\begin{equation}
\label{eq:Wakimoto embedding}
    \begin{split}
        &P^+ = \beta\\
        &P^- = \partial \gamma + \tilde{\beta} \gamma \\
        &I = \tilde{\beta}\\
        &J = \partial\tilde{\gamma} + \tfrac{1}{2}\tilde{\beta} - \normalord{\beta\gamma}, \\
    \end{split}
\end{equation} 
where the $\beta\gamma$-systems have the usual non-zero OPEs
\begin{align}
    \beta(z) \gamma(w) &= \frac{\mathbbm{1}(w)}{z-w} + \reg \label{eq:beta-gamma OPE} \\
    \tilde{\beta}(z) \tilde{\gamma}(w) &= \frac{\mathbbm{1}(w)}{z-w} + \reg
\end{align}
Computation of the OPEs of $P^{\pm}$, $I$ and $J$ with the expressions
in \eqref{eq:Wakimoto embedding} indeed gives \eqref{eq:gen NW OPE for
  P+P-}, \eqref{eq:gen NW OPE for J P+-} and \eqref{eq:gen NW OPE for
  JI}.

We now make two remarks:
\begin{itemize}
\item The embedding~\eqref{eq:Wakimoto embedding} is not conformal.
  The Sugawara Virasoro element
  \begin{equation} \label{eq:Tsug} 
    \Tsug = \normalord{P^+ P^-} + \normalord{IJ} -\tfrac{1}{2}\partial I - \tfrac{1}{2}\normalord{II}
  \end{equation}
  corresponds to
  \begin{equation} \label{eq:Tsug embedding}
    \Tsug = \normalord{\beta\partial\gamma} + \normalord{\tilde{\beta}\partial\tilde{\gamma}} - \tfrac{1}{2}\partial \tilde{\beta}
  \end{equation}
  under the embedding.  The offending term proportional to
  $\partial\tilde{\beta}$ is responsible for the fact that
  $\tilde{\gamma}$ is no longer primary under $\Tsug$.  This will turn
  out not to affect the calculation of the Virasoro BRST cohomology in
  Section \ref{sec:full_theory}.

\item We have chosen a normalisation for the inner product on the
  Nappi--Witten algebra.  In some of the literature (see, e.g.,
  \cite{Babichenko_2021}), a notion of ``level'' is introduced for the
  affine Nappi--Witten algebra.  The level plays an important role in
  the representation theory of affine Lie algebras built out of
  semisimple Lie algebras, but it does not for the affine
  Nappi--Witten Lie algebra.  Indeed, suppose that we rescale the
  inner product by the ``level'' $k$, assumed nonzero. Then one
  may simply redefine $P_- \mapsto k^{-1} P_-$ and
  $I \mapsto k^{-1} I$, while keeping $P_+$ and $J$ the same.  Then
  the redefined fields satisfy the affine NW algebra at ``level'' 1.
\end{itemize}

\subsection{Null gauging cohomology} \label{sec:null_gauging_cohomology}

Having embedded the CFT of the WZW model on $\B$ into the CFT of two
weight-$(1,0)$ $\beta\gamma$-system, we proceed with computation of
null gauging BRST cohomology. First, we explicitly construct the space of states of our system. Recall that 
general weight-$(\lambda,1-\lambda)$ $bc$- and
$\beta\gamma$-systems, where $\lambda\in\ZZ$, admit mode expansions
\begin{gather}
    \beta(z) = \sum_n \beta_n z^{-n-1}, \ \ \gamma(z) = \sum_n \gamma_n z^{-n}\label{eq:beta-gamma mode exp general}\\
    B(z) = \sum_n B_n z^{-n-1}, \ \ C(z) = \sum_n C_n z^{-n}\label{eq:BC mode exp general}
\end{gather}
we can build the space of states $V$ by acting these modes on the
respective vacua.

Their vacuum states $\ket{\rho}_{BC}$ and $\ket{\sigma}_{\beta\gamma}$ with
charges $\rho,\sigma\in\ZZ$ satisfy
\begin{equation}\label{eq:bcbetagamma vacua}
\begin{alignedat}{2}
   &B_n\ket{\rho}_{BC} = 0  \quad \forall n\geq \rho+1-\lambda
   \quad\quad && \beta_n \ket{\sigma}_{\beta\gamma} = 0 \quad n\geq -\sigma+1-\lambda \\
   &C_n\ket{\rho}_{BC} = 0  \quad \forall n \geq -\rho+\lambda
   \quad\quad &&\gamma_n \ket{\sigma}_{\beta\gamma} = 0 \quad n\geq \sigma + \lambda.
\end{alignedat}
\end{equation}
The spaces of states $V^{BC}$ and $V^{\beta\gamma}_\sigma$ are linear
combinations of monomials built from the modes that act non-trivially
on the vacuum. Note in particular the explicit dependence on the
vacuum choice only in the bosonic case. This is because the $BC$ vacua
are related to one another through the action of the $BC$ mode
algebra, but there is no such relationships among the $\beta\gamma$
vacua. Thus, each choice of $\beta\gamma$ vacuum leads to a different
picture \cite{Friedan:1985ge}, which we present as a label on the
resulting space of states.

Coming back to our CFT under the embedding \eqref{eq:Wakimoto embedding}, the null gauging BRST current
\eqref{eq:j_NG original} is now
\begin{equation} \label{eq:j_NG embedded}
    j_{\mathrm{NG}} = \tfrac{1}{2}\normalord{C\betil} + \normalord{C\partial\gamtil}-\normalord{C \beta\gamma}.
\end{equation}
The BRST operator $d_{\mathrm{NG}}$ acts on the space of states
\begin{equation}
  V = V^{\beta\gamma}_{\sigma}\otimes V^{BC} \otimes V^{\betil\gamtil}_{\tilde{\sigma}},
\end{equation}
where $V^{\beta\gamma}_{\sigma}$, $V^{\betil\gamtil}_{\tilde{\sigma}}$ and $V^{BC}$
are the space of states the $(\beta,\gamma)$, $(\betil,\gamtil)$ and
$(B,C)$ systems respectively. Explicitly, setting $\lambda=1$ in \eqref{eq:bcbetagamma vacua} 
lets us infer that the creation operators are 
\begin{equation*}
  \{B_{n\leq \rho-1}, C_{n\leq -\rho}, \betil_{n\leq -\tilde{\sigma}-1}, \gamtil_{n\leq \tilde{\sigma}}, \beta_{n\leq -\sigma-1}, \gamma_{n\leq \sigma}\}.  
\end{equation*}
Hence, $V=V^{\beta\gamma}_{\sigma} \otimes V^{BC} \otimes
V^{\betil\gamtil}_{\tilde{\sigma}}$ is spanned by monomials of the form
\begin{equation}
\label{eq:monomials of full space Wakimoto}
\begin{split}
 \ket{\psi}&= B_{-n_1}\dots B_{-n_\mathcal{B}}C_{-m_1}\dots C_{-m_\mathcal{C}}\ket{\rho}_{BC} \\
 &\otimes 
 \betil_{-k_1}\dots \betil_{-k_{\tilde{\mathfrak{B}}}} \gamtil_{-l_1}\dots\gamtil_{-l_{\tilde{\mathfrak{C}}}} \ket{\tilde{\sigma}}_{\betil\gamtil} \\
 &\otimes
 \beta_{-r_1}\dots \beta_{-r_\mathfrak{B}} 
 \gamma_{-s_1}\dots\gamma_{-s_\mathfrak{C}}\ket{\sigma}_{BC},   
\end{split}
\end{equation}
where 
\begin{gather*}
n_1 > \dots > n_{\mathcal{B}}  \geq -\rho+1,\ \ m_1 > \dots > m_{\mathcal{C}} \geq \rho\\
k_1 \geq \dots \geq k_{\tilde{\mathfrak{B}}} \geq \tilde{\sigma}+1,\ \ l_1 \geq \dots \geq l_{\tilde{\mathfrak{C}}} \geq -\tilde{\sigma}\\
 r_1 \geq \dots \geq r_{\mathfrak{B}}\geq \sigma+1,\ \  s_1 \geq \dots \geq s_{\mathfrak{C}}\geq -\sigma.
\end{gather*}
We now wish to find the spectrum of the null chirally gauged
Nappi--Witten model, which is done by computing the cohomology
$H^{\dotr}(V)$ with respect to $d_{NG}$. We summarise the result in
the following proposition.
\begin{proposition} \label{prop:NG cohomology}
The cohomology with respect to $\dNG$ is
    \begin{equation}
        H^{\dotr}(V)\cong \CC \ket{\mathrm{vac}}_{\tilde{\sigma}}\otimes V^{\beta\gamma}_{\sigma},
    \end{equation}
    where $\ket{\mathrm{vac}}_{\tilde{\sigma}} =
    \ket{\tilde{\sigma}}_{\betil\gamtil} \otimes
    \ket{-\tilde{\sigma}}_{BC}$ is a choice of picture labelled by
    $m\in\ZZ$.  Hence, the spectrum of the holomorphic sector of the
    null chirally gauged Nappi--Witten model is described by a $\beta\gamma$-system.
\end{proposition}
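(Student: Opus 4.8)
The plan is to compute $H^{\dotr}(V)$ with respect to $\dNG$ by organising the BRST complex into a bicomplex and running a spectral sequence, exploiting the fact that the embedded BRST current \eqref{eq:j_NG embedded} splits naturally into three pieces of differing ``weight''. Concretely, I would introduce a grading on $V$ — for instance by the $\betil\gamtil$ excitation number, or by an eigenvalue of a suitable zero-mode operator built from $\betil,\gamtil$ — with respect to which $\dNG$ decomposes as $\dNG = d_0 + d_1$, where $d_0$ comes from the $\tfrac12\normalord{C\betil}$ term (the ``mass'' term, in Kugo--Ojima language) and $d_1$ collects the remaining $\normalord{C\partial\gamtil} - \normalord{C\beta\gamma}$ contributions. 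The key observation is that $d_0$ acts as a Koszul-type differential: the modes $C_n$ and $\betil_m$ appearing in $\tfrac12\normalord{C\betil}$ pair up the $BC$ and $\betil\gamtil$ oscillators, so $d_0$ is precisely the differential of a Koszul complex whose cohomology is one-dimensional. This is the Kugo--Ojima quartet mechanism of Appendix~\ref{app:KO quartet mechanism}: the fields $\betil$, $\gamtil$, $B$, $C$ assemble into BRST quartets that are cohomologically trivial except for a single surviving ground state $\ket{\mathrm{vac}}_{\tilde\sigma} = \ket{\tilde\sigma}_{\betil\gamtil}\otimes\ket{-\tilde\sigma}_{BC}$.

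The steps, in order, would be: (1) verify $\dNG^2 = 0$ on $V$ from the OPE of $j_{\mathrm{NG}}$ with itself — this is already asserted in the text since the gauged subgroup is anomaly-free. (2) Fix the grading and write $\dNG = d_0 + d_1$ with $d_0^2 = 0$, and check the bidegrees are as claimed so that a spectral sequence of a filtered complex (Appendix~\ref{app:spectral sequences}) converges. (3) Compute the $d_0$-cohomology: show that on each monomial \eqref{eq:monomials of full space Wakimoto} the operator $d_0$ realises a Koszul differential in the $(\betil,\gamtil,B,C)$ oscillators — one checks that $[\,\tfrac12\normalord{C\betil}, \betil_n\,] $ and the analogous (anti)commutators generate contracting homotopies pairing $C_{-m}$ with $\betil$-insertions and $B_{-n}$ with $\gamtil$-insertions — so that $H(d_0) \cong \CC\ket{\mathrm{vac}}_{\tilde\sigma}\otimes V^{\beta\gamma}_\sigma$, the $\beta\gamma$ sector passing through untouched because $d_0$ does not involve $\beta,\gamma$. (4) Argue the spectral sequence degenerates at $E_1$: since $d_1 = \normalord{C\partial\gamtil} - \normalord{C\beta\gamma}$ still contains a factor of $C$, and the surviving $E_1$ page has no $C$-excitations left (the vacuum $\ket{-\tilde\sigma}_{BC}$ is $C$-closed in the relevant range), the induced differential $d_1$ vanishes on $E_1$, and likewise all higher differentials, so $E_\infty = E_1$. (5) Conclude $H^{\dotr}(V) \cong \CC\ket{\mathrm{vac}}_{\tilde\sigma}\otimes V^{\beta\gamma}_\sigma$, and observe that the picture label is exactly the choice of $\betil\gamtil$ vacuum charge $\tilde\sigma$, matching the statement.

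The main obstacle I anticipate is step (3) — making the Koszul structure of $d_0$ fully rigorous at the level of modes. One must be careful about normal ordering in $\tfrac12\normalord{C\betil}$, about which modes of $C$ and $\betil$ are creation versus annihilation for a given picture $\tilde\sigma$ (this depends on the inequalities displayed after \eqref{eq:monomials of full space Wakimoto}), and about the fact that the $BC$ vacuum charge is forced to be $-\tilde\sigma$ in cohomology — i.e. the pairing only closes up on the ``diagonal'' of $BC$- and $\betil\gamtil$-charges. A clean way to handle this is to identify an explicit contracting homotopy $s$ with $\{d_0, s\} = N$ for a positive operator $N$ counting the quartet oscillators, so that $\ker N$ (one dimensional, tensored with the $\beta\gamma$ sector) is exactly the cohomology; verifying $\{d_0,s\}=N$ is a finite mode computation. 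A secondary, more bookkeeping-level obstacle is confirming convergence of the spectral sequence, which requires that the filtration be bounded on each fixed-weight subspace of $V$; this follows because at fixed $L_0$-eigenvalue only finitely many oscillators are excited. These details are carried out in Appendix~\ref{sec:cohom-comp}.
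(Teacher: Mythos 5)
Your proposal follows essentially the same route as the paper's proof in Appendix~\ref{sec:NG cohomology app}: filter the complex so that $d_0 = \tfrac12\normalord{C\betil}_0$ is the degree-zero piece, compute its cohomology via the Kugo--Ojima quartet mechanism (Lemma~\ref{lem:Koszul chiral ring} with $\lambda=\mu=1$, the $\beta\gamma$ sector going along for the ride), and conclude collapse of the spectral sequence because $E_1$ is concentrated in bidegree $(0,0)$. The one point to fix is the grading: it cannot be by $\betil\gamtil$ excitation number alone, since then $\normalord{C\beta\gamma}_0$ would also sit in degree zero and contaminate $d_0$; the ghost modes must carry filtration degree as well ($\fdeg B_n = \fdeg\betil_n = -1$, $\fdeg C_n = \fdeg\gamtil_n = +1$, $\fdeg\beta_n = \fdeg\gamma_n = 0$), under which the paper obtains the three-term split $\dNG = d_0 + d_1 + d_2$ of \eqref{eq:NG differential filtered} — lumping your $d_1$ with the paper's $d_2$ is harmless since both raise the degree.
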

At first glance, this drastic simplification seems surprising, but
this is due to the \emph{Kugo--Ojima (KO) quartet mechanism}. Details
of this mechanism and its use in the proof of Proposition~\ref{prop:NG
  cohomology} are given in Appendix~\ref{app:KO quartet mechanism} and
Appendix~\ref{sec:NG cohomology app} respectively. A limited and
elementary review of spectral sequences is also provided in Appendix
\ref{app:spectral sequences} as they are used to perform these
computations rigorously.

It is imperative to note that Proposition~\ref{prop:NG cohomology}
only holds for our specific choice of free field realisation of the
Nappi--Witten CFT given by \eqref{eq:Wakimoto embedding}. Different
choices of free field realisations lead to different null gauging
cohomology (an example is presented in Appendix~\ref{sec:NG cohomology alt app}). We elaborate on this point in Section~\ref{sec:FFR_comments}.

Before we proceed with the construction of the full string theory, we clarify the conformal symmetry of the matter content obtained from the null gauging procedure. Since we never used the conformal symmetry (i.e., $\Tsug$) of the Nappi--Witten CFT but rather just its operator algebra, we have not specified the Virasoro element of the null gauged theory which would enter a full string theory, specifically in the anti-holomorphic sector. 
It seems like one could use either $\bar{T}^{\mathrm{sug}}$ given by equation \eqref{eq:Tsug embedding} or simply $\bar{T}^{\beta\gamma}+\bar{T}^{\betil\gamtil}=\normalord{\betabar\barpar\gammabar}+\normalord{\betilbar\barpar\gamtilbar}$. 
We will proceed with using $\bar{T}^{\mathrm{sug}}$ in the next subsection, because this is the Virasoro element one would obtain from a WZW model, which is what we started with. 
Nonetheless, we note that both appear to be valid choices of Virasoro elements with the same central charges, and choosing either one is a choice of Virasoro representation in which the relative semi-infinite cohomology of the Virasoro algebra takes values. 
However, for reasons that will become clear, the choice between $\bar{T}^{\mathrm{sug}}$ and $\bar{T}^{\beta\gamma}+\bar{T}^{\betil\gamtil}$ does not affect the calculations in Section \ref{sec:full_theory}. In particular, we will observe that the space of states $V^{\betilbar\gamtilbar}_{\tilde{\sigma}}$ is left unchanged when we gauge the Virasoro symmetry (i.e., compute the BRST cohomology) of the full theory. That is, the change in the Virasoro algebra structure on $V^{\betilbar\gamtilbar}_{\tilde{\sigma}}$ due to the term $-\tfrac{1}{2}\barpar\betilbar$ is not reflected in the cohomology calculations.
We therefore summarise the result of the null gauging procedure in the following corollary.

\begin{corollary} \label{cor:gauged NW model hol anti-hol parts}
    The null chirally gauged Nappi--Witten model can be described by a
    worldsheet CFT whose holomorphic part is generated by free fields
    $(\beta,\gamma)$ and whose anti-holomorphic part is generated by
    free fields $(\bar{\beta},\bar{\gamma}, \betilbar,
    \gamtilbar)$. Hence, the holomorphic and anti-holomorphic parts of
    this CFT have central charges 2 and 4 respectively.
\end{corollary}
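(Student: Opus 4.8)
The plan is to read off both parts of the statement from results already established --- the holomorphic field content from Proposition~\ref{prop:NG cohomology}, the anti-holomorphic field content from the chirality of the gauging --- and then to count central charges.

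For the holomorphic sector I would first invoke Proposition~\ref{prop:NG cohomology}, which identifies the $\dNG$-cohomology of $V=V^{\beta\gamma}_\sigma\otimes V^{BC}\otimes V^{\betil\gamtil}_{\tilde\sigma}$ with $\CC\,\ket{\mathrm{vac}}_{\tilde\sigma}\otimes V^{\beta\gamma}_\sigma$: after the gauging the holomorphic state space is, in a fixed picture, the Fock space of the single weight-$(1,0)$ bosonic $\beta\gamma$-system $(\beta,\gamma)$. Next I would fix the conformal structure. The natural holomorphic stress tensor is $\Ttot=\Tsug+T^{BC}$, of central charge $4+(-2)=2$; here $\Tsug$ has central charge $4$, as one sees from its embedded form \eqref{eq:Tsug embedding}, the improvement term $-\tfrac12\partial\betil$ not affecting it because $\betil$ has vanishing self-OPE, while $T^{BC}$ has central charge $-2$. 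On $H^{\dotr}(V)$ the $(\betil,\gamtil)$ and $(B,C)$ degrees of freedom assemble into BRST quartets whose net contribution to $\Ttot$, and hence to the central charge, vanishes --- this is the Kugo--Ojima mechanism of Appendix~\ref{app:KO quartet mechanism} --- so $\Ttot$ descends to $\normalord{\beta\partial\gamma}$, the stress tensor of the free $\beta\gamma$-system, with central charge $c=2(6\lambda^2-6\lambda+1)\big|_{\lambda=1}=2$.

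For the anti-holomorphic sector I would use that the gauging is chiral: as recorded after \eqref{eq:Gauged WZW Action}, the anti-holomorphic gauge field $\bar A$ drops out of the gauged action, so there is no barred null-gauging BRST operator and the anti-holomorphic sector is simply the anti-holomorphic Nappi--Witten CFT, untouched. Under the barred Wakimoto embedding \eqref{eq:Wakimoto embedding} this CFT is generated by the two weight-$(0,1)$ bosonic $\beta\gamma$-systems $(\bar\beta,\bar\gamma)$ and $(\betilbar,\gamtilbar)$ and carries the Virasoro element $\bar{T}^{\mathrm{sug}}$ of \eqref{eq:Tsug embedding}; its central charge is $\bar c=2+2=4$, equivalently that of $\bar T^{\beta\gamma}+\bar T^{\betil\gamtil}$, the two Virasoro elements differing only by the harmless term $-\tfrac12\barpar\betilbar$. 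Combining the two sectors yields the claimed worldsheet CFT, with holomorphic and anti-holomorphic central charges $2$ and $4$.

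The only substantial input is Proposition~\ref{prop:NG cohomology} itself (established in Appendix~\ref{sec:NG cohomology app}); granted that, the one step needing care is the assertion that $\Ttot$ descends on $H^{\dotr}(V)$ to precisely the free $\beta\gamma$ stress tensor with $c=2$ --- i.e.\ that the $(\betil,\gamtil)$--$(B,C)$ quartet contribution, including the anomalous $-\tfrac12\partial\betil$ coming from the non-conformal Wakimoto realisation noted after \eqref{eq:Tsug embedding}, is BRST-trivial on cohomology. I expect this to be the main obstacle, and it is exactly the situation the quartet mechanism is designed to handle.
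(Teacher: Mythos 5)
Your proposal is correct and follows essentially the same route as the paper: the corollary is stated there as an immediate summary of Proposition~\ref{prop:NG cohomology} (holomorphic sector reduced to the $(\beta,\gamma)$ system) together with the observation that the chiral gauging leaves the anti-holomorphic Nappi--Witten CFT, realised via the barred Wakimoto embedding, untouched. Your additional care about the stress tensor descending to cohomology and the harmlessness of the $-\tfrac12\partial\betil$ improvement term matches the paper's surrounding discussion of the choice between $\bar{T}^{\mathrm{sug}}$ and $\bar{T}^{\beta\gamma}+\bar{T}^{\betil\gamtil}$, so there is no gap.
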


\subsection{The resulting string theory}\label{sec:full_theory}

Corollary~\ref{cor:gauged NW model hol anti-hol parts} reminds us of
the mismatch in the holomorphic and anti-holomorphic sectors of the
null chirally gauged Nappi--Witten model due to the gauging procedure
only enforcing a constraint in the former. Since the two sectors would contribute
different amounts of central charge to the worldsheet CFT of a full
bosonic string theory, any full string theory we build from the null
chirally gauged Nappi--Witten model would have different holomorphic
and anti-holomorphic field contents to account for this. 
The former requires the addition of a CFT with central charge
24 while the latter requires the addition of a CFT of central charge
22. A simple and physically sensible option is the addition of 22 free
bosons, each of which has a holomorphic and anti-holomorphic part, and
2 chiral bosons. Each free boson, labelled by an integer
$i\in\{1,\dots,22\}$, has holomorphic and anti-holomorphic parts given
by the free fields $\partial X^i(z)$ and
$\bar{\partial}\bar{X}^i(\bar{z})$ respectively.  The holomorphic and
anti-holomorphic matter sectors of our full string theory, along with
their Virasoro elements, are summarised in
Table~\ref{tab:matter_content}.
\begin{table}
    \centering
    \caption{Matter content of the null chirally gauged Nappi--Witten model}
    \label{tab:matter_content}
    \begin{tabular}{cc|cc}
         \multicolumn{2}{c|}{Holomorphic} & \multicolumn{2}{c}{Anti-holomorphic} \\ \hline 
        $X^1,\dots, X^{22}$  & $T^X= \tfrac{1}{2}\normalord{\partial X^i \partial X^i}$ & $\Xbar^1,\dots, \Xbar^{22}$ & $\bar{T}^X=\tfrac{1}{2}\partial \normalord{\Xbar^i \barpar \Xbar^i}$ \\ 
        $\beta, \gamma$ & $T^{\beta\gamma}= \normalord{\beta\partial\gamma}$ & $\betabar, \gammabar$ & $\bar{T}^{\beta\gamma}= \normalord{\betabar \barpar \gammabar}$\\
        $Y^1, Y^2$ & $T^Y =\tfrac{1}{2}\normalord{\partial Y^1 \partial Y^1 +\partial Y^2 \partial Y^2}$ & $\betilbar,\gamtilbar$ & $\bar{T}^{\betil\gamtil}_\text{mod} = \normalord{\betilbar \barpar \gamtilbar} - \frac{1}{2}\barpar\betilbar$\\
    \end{tabular}
    \vspace{1em}
    \caption*{The holomorphic and anti-holomorphic Virasoro elements are $\Tmatter=T^X+T^{\beta\gamma}+T^Y$ and $\Tbarmatter=\bar{T}^X + \bar{T}^\mathrm{sug}$, where we recall that $\bar{T}^\mathrm{sug}=\bar{T}^{\beta\gamma}+\bar{T}^{\betil\gamtil}_\text{mod}$.}
\end{table}

\subsubsection{The holomorphic sector}

The BRST current which enforces the constraint coming from the Virasoro gauge symmetry takes the well-known form of
 \begin{equation}
     \jVir = \normalord{c\Tmatter} + \tfrac{1}{2}\normalord{c\Tgh},
 \end{equation}
 where 
 \begin{equation}
     \Tmatter = T^X + T^{\beta\gamma} + T^Y, \quad \Tgh = -2\normalord{b\partial c}- \normalord{\partial b c},
\end{equation}
and $(b,c)$ are fermionic ghosts with weight $(2,-1)$ with respect to
$\Tgh$.  From Table \ref{tab:matter_content}, we can infer that we are
in a setting strikingly similar to the Gomis--Ooguri string
\cite{Gomis:2000bd}, in which the worldsheet matter content is given
by a $\beta\gamma$-system and 24 free Euclidean bosons. Since it is
known that one needs to compactify $\gamma \sim \gamma + 2\pi R$ to
obtain a non-trivial spectrum, we do the same here. This alters its
mode expansion by a winding mode\footnote{It does not alter the
  $\beta(z)\gamma(w)$ OPE, just as compactifying a free boson does not
  alter its OPE. The compactification therefore leaves the Operator product expansions
  \eqref{eq:gen NW OPE for P+P-}, \eqref{eq:gen NW OPE for J P+-} and
  \eqref{eq:gen NW OPE for JI} of the original fields in the
  Nappi--Witten CFT unchanged. See \cite[Chapter 8]{Polchinski:1998rq}
  for more details.}  
\begin{equation} \label{eq:gamma winding}
    \gamma(z) = iwR\ln(z) + \sum_n \gamma_n z^{-n}.
\end{equation}
Hence, the BRST operator written in out in terms
of modes is
\begin{equation} \label{eq:dVir split}
    \dVir = (\jVir)_0 = [\jVir,-]_1 = d_0+d_1,
\end{equation}
where
\begin{gather}
    d_0 = iwR\sum_{l \in\mathbb{Z}} c_{-l}\beta_l \label{eq:dVir_0} \\
    d_1 = -\sum_{\substack{l\in\mathbb{Z}\\k\in\mathbb{Z}}} k :c_{-l} \beta_{l-k} \gamma_{k}: + \sum_{l\in\mathbb{Z}} c_{-l} (L^X_l+L^Y_l)  + \sum_{\substack{l,k\in\mathbb{Z}\\k<l}} (k-l) : b_{k+l} c_{-l} c_{-k} : \label{eq:dVir_1}
\end{gather}
and $L^X_l$ and $L^Y_l$ are the modes of $T^X$ and $T^Y$
respectively. This splitting of $\dVir$ comes from assigning a
filtration degree to each of the modes from which, as alluded to
earlier, a spectral sequence is constructed to facilitate the
computation of cohomology. The BRST cohomology computations are
provided in full detail in Appendix~\ref{sec:GO string cohomology
  app}, but we state the splitting here to make the resemblance to the
Gomis--Ooguri string worldsheet more explicit (cf. \cite[Section
3.1]{Gomis:2000bd}).  Just as in the case of null gauging, the KO
mechanism greatly simplifies the computation, identical to what
happens in the Gomis--Ooguri string.

\begin{proposition} \label{prop:holo sector BRST}
  The BRST cohomology with respect to $\dVir$ is 
  \begin{equation}
    H^{\dotr}_{\mathrm{Vir}} \cong H^{\dotr}_{d_0} \cong \ket{\mathrm{vac}}_\sigma \otimes \mathcal{F}^X(k)\otimes\mathcal{F}^Y(\tilde{k})
  \end{equation}
  where $\ket{\mathrm{vac}}_\sigma = \ket{\sigma}_{\beta\gamma} \otimes
  \ket{1-\sigma}_{bc}$ is a choice of picture labelled by $m\in\ZZ$, and
  $\mathcal{F}^X(k)$ and $\mathcal{F}^Y(\tilde{k})$ are the Fock modules of
  the 22 free bosons $\{X^1,\dots X^{22}\}$ and 2 chiral bosons
  $\{Y^1,Y^2\}$ with momenta $k=(k_1,\dots, k_{22})$ and $\tilde{k}=(\tilde{k}_1,\tilde{k}_2)$
  respectively. 
\end{proposition}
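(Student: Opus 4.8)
The plan is to establish both isomorphisms at once via the spectral sequence of the filtration underlying the splitting $\dVir = d_0 + d_1$ of \eqref{eq:dVir split}. First I would make this filtration precise (as in Appendix~\ref{sec:GO string cohomology app}): assign filtration degrees to the oscillator modes so that $d_0$ of \eqref{eq:dVir_0} is the leading, filtration-preserving part of $\dVir$ and $d_1$ of \eqref{eq:dVir_1} strictly raises the filtration degree by one. Because on each $\Ltot_0$-eigenspace at fixed momentum only finitely many filtration degrees occur, the filtration is bounded and the associated spectral sequence (Appendix~\ref{app:spectral sequences}) converges to $H^\dotr_{\mathrm{Vir}}$, with first page $E_1 \cong H^\dotr(V, d_0) = H^\dotr_{d_0}$ and first differential induced by $d_1$. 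This reduces the proposition to two tasks: (i) computing $H^\dotr_{d_0}$, and (ii) showing the spectral sequence collapses at $E_1$.

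For task (i) I would invoke the Kugo--Ojima quartet mechanism of Appendix~\ref{app:KO quartet mechanism}. The point is that, for $w\neq 0$, $d_0 = iwR\sum_{l\in\ZZ} c_{-l}\beta_l$ is a trivial differential: it pairs every $\beta$-mode with a $c$-mode and, through the graded commutator, every $\gamma$-mode with a $b$-mode. I would exhibit the contracting homotopy $K \propto \sum_{l\in\ZZ}\gamma_{-l}b_l$, form the operator $N \defeq \{d_0,K\}$, and verify --- after normal ordering, where the nonzero factor $iwR$ and the cancellation of normal-ordering constants between the $\beta\gamma$ and $bc$ sectors play an essential role --- that $N$ is the sum of the $\beta$, $\gamma$, $b$, $c$ oscillator-number operators with strictly positive coefficients. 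The quartet argument then yields $H^\dotr_{d_0}\cong \ker N$, which is the subspace carrying no $\beta,\gamma,b,c$ excitations, i.e. $\CC\ket{\sigma}_{\beta\gamma}\otimes\CC\ket{1-\sigma}_{bc}\otimes\mathcal{F}^X(k)\otimes\mathcal{F}^Y(\tilde{k})$. A short computation with the mode-annihilation conditions \eqref{eq:bcbetagamma vacua} shows that $d_0$ annihilates $\ket{\sigma}_{\beta\gamma}\otimes\ket{1-\sigma}_{bc}$ precisely because the $\beta\gamma$ picture $\sigma$ and the $bc$ vacuum charge $1-\sigma$ match up; this correlation is what defines $\ket{\mathrm{vac}}_\sigma$ and fixes the ghost number at which the cohomology sits.

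For task (ii) I would show that the differential induced by $d_1$ on $E_1 \cong W$, where $W\defeq\ket{\mathrm{vac}}_\sigma\otimes\mathcal{F}^X(k)\otimes\mathcal{F}^Y(\tilde{k})$ by task (i), vanishes, and similarly all higher differentials. On the one hand, inspecting the three terms of $d_1$ in \eqref{eq:dVir_1}, each maps a state of $W$ into the quartet-excited subspace (the complement of $W$), since every term creates at least one $\beta$, $\gamma$, $b$ or $c$ excitation when acting on $\ket{\mathrm{vac}}_\sigma$ --- the normal ordering in $:c_{-l}\beta_{l-k}\gamma_k:$ precludes the would-be internal contraction. On the other hand, the filtration-degree-one part of $\dVir^2 = 0$ reads $d_0 d_1 + d_1 d_0 = 0$, and since $d_0$ vanishes on $W$ this gives $d_0 d_1|_W = 0$, so $d_1 W \subseteq \ker d_0$. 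But on the quartet-excited subspace $\ker d_0 = \im d_0$ by the KO mechanism (the quartet complex is acyclic there), while $\ker d_0 = W \oplus \im d_0$ overall; hence $d_1 W \subseteq \im d_0$ and the induced map on $E_1$ is zero. The same two observations prevent any subleading piece of $\dVir$ from representing a nonzero higher differential, so $E_1 = E_\infty$ and $H^\dotr_{\mathrm{Vir}} \cong H^\dotr_{d_0} \cong W$.

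The main obstacle will be the quantitative side of task (i): assembling $K$ and evaluating $\{d_0,K\}$ requires careful handling of the infinite mode sums and of the normal-ordering constants, and one must track which modes are creation and which are annihilation relative to the chosen $\beta\gamma$ and $bc$ vacua in order to confirm that $N$ has the asserted strictly positive spectrum. Two related caveats: the whole argument presupposes $w\neq 0$ (for $w=0$ one has $d_0 = 0$ and the sector must be treated separately, contributing nothing to the physical spectrum, just as in the Gomis--Ooguri string), and one should confirm the filtration is exhaustive and bounded on each $\Ltot_0$-eigenspace so that $E_\infty$ genuinely computes $H^\dotr_{\mathrm{Vir}}$. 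The non-primary behaviour of $\gamtil$ under $\Tsug$ flagged after \eqref{eq:Tsug embedding} is irrelevant here, since the holomorphic sector is built only from $\Tmatter = T^X + T^{\beta\gamma} + T^Y$ and $\Tgh$.
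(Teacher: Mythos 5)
Your proposal is correct and follows essentially the same route as the paper's proof in Appendix~\ref{sec:GO string cohomology app}: the same filtration and spectral sequence, the Kugo--Ojima quartet mechanism (packaged in the paper as Lemma~\ref{lem:Koszul chiral ring} with $\lambda=1$, $\mu=2$) to reduce $H_{d_0}$ to $\CC\ket{\mathrm{vac}}_\sigma$ tensored with the untouched matter Fock modules, and collapse at the first page. The only cosmetic difference is in the collapse step --- the paper argues $d_1\equiv 0$ on $E_1$ because the cohomology is concentrated in a single filtration degree, whereas you show directly that $d_1 W\subseteq\im d_0$; both are valid, and your explicit caveat about the $w\neq 0$ assumption (without which $d_0$ vanishes) is a point the paper leaves implicit.
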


\begin{proof}
  See Appendix~\ref{sec:GO string cohomology app}.
\end{proof}

\subsubsection{The anti-holomorphic sector}
Performing an identical compactification of $\gammabar$,
the form of the anti-holomorphic BRST current is identical to that of the holomorphic one
\begin{equation}
  \bar{j}_{\mathrm{Vir}} = \normalord{c\Tbarmatter} + \tfrac{1}{2}\normalord{c\Tbargh},
\end{equation}
with the only practical difference being the fact that 
\begin{equation}
  \Tbarmatter = \bar{T}^{X} + \bar{T}^{\mathrm{sug}}.
\end{equation}
Once again, we observe a simplification in BRST cohomology that can
be argued via spectral sequences and the KO mechanism, made possible
due to the compactification of $\bar{\gamma}$.

\begin{proposition} \label{prop:anti-holo sector BRST}
  The BRST cohomology with respect to $\dVir$ is 
  \begin{equation}
    H^{\dotr}_{\mathrm{Vir}} \cong H^{\dotr}_{d_0} \cong \ket{\mathrm{vac}}_\sigma \otimes \bar{\mathcal{F}}^X(k)\otimes \bar{V}^{\betil\gamtil}_{\tilde{\sigma}}
  \end{equation}
  where $\ket{\mathrm{vac}}_\sigma = \ket{\sigma}_{\betabar\gammabar} \otimes \ket{1-\sigma}_{\bbar\cbar}$ is a choice of picture labelled by $m\in\ZZ$, and $V^{\betil\gamtil}_{\tilde{\sigma}}$ contains a choice of $\betilbar\gamtilbar$-vacuum $\ket{\tilde{\sigma}}_{\betilbar\gamtilbar}$).
\end{proposition}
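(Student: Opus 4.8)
Write $\bar d_{\mathrm{Vir}}$ for the anti-holomorphic BRST operator. The plan is to run this computation in lockstep with the holomorphic one behind Proposition~\ref{prop:holo sector BRST}, exploiting that the only structural change is the replacement of the chiral bosons $(Y^1,Y^2)$ with Virasoro element $T^Y$ by the $(\betilbar,\gamtilbar)$-system with $\bar T^{\betil\gamtil}_{\mathrm{mod}}$; everything else---the compactified $(\betabar,\gammabar)$-system, the $22$ free bosons $\Xbar^i$, and the $(\bbar,\cbar)$-ghosts---is identical. After compactifying $\gammabar\sim\gammabar+2\pi R$ so that its mode expansion picks up a winding term exactly as in \eqref{eq:gamma winding}, I would expand $\bar{j}_{\mathrm{Vir}}$ in modes and, assigning the filtration degrees of \eqref{eq:dVir split} to the $\betabar,\gammabar,\bbar,\cbar$ modes, obtain the splitting $\bar d_{\mathrm{Vir}} = \bar d_0 + \bar d_1$ with
\begin{equation*}
  \bar d_0 = i w R \sum_{l\in\ZZ} \cbar_{-l}\,\betabar_l,
\end{equation*}
and $\bar d_1$ the barred analogue of \eqref{eq:dVir_1}: the $:c\beta\gamma:$-term now reproduces the modes of $\bar T^{\beta\gamma}$, the $:bcc:$-term is unchanged, and $L^Y_l$ is replaced by the modes $\bar L^{\betil\gamtil}_l$ of $\bar T^{\betil\gamtil}_{\mathrm{mod}}$. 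The associated spectral sequence converges to $H^{\dotr}_{\mathrm{Vir}}$ with $E_1$-page $H^{\dotr}_{\bar d_0}$.

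Second, I would compute $H^{\dotr}_{\bar d_0}$. Since $\bar d_0$ sends $\bbar_l\mapsto iwR\,\betabar_l$ and $\gammabar_{-l}\mapsto iwR\,\cbar_{-l}$ while annihilating $\betabar_l$ and $\cbar_{-l}$, for $w\neq 0$ it is precisely a Kugo--Ojima quartet differential (Appendix~\ref{app:KO quartet mechanism}): the pairs $(\bbar_l,\betabar_l)$ and $(\gammabar_{-l},\cbar_{-l})$ are killed, so no $\betabar,\gammabar,\bbar,\cbar$ excitations survive and the cohomology collapses onto the picture vacuum $\ket{\mathrm{vac}}_\sigma=\ket{\sigma}_{\betabar\gammabar}\otimes\ket{1-\sigma}_{\bbar\cbar}$. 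The key observation is that $\bar d_0$ involves neither the $\Xbar^i$ nor the $(\betilbar,\gamtilbar)$ modes, so these sectors are spectators; fixing the $\Xbar$-momentum to $k$ gives
\begin{equation*}
  H^{\dotr}_{\bar d_0} \cong \ket{\mathrm{vac}}_\sigma \otimes \bar{\mathcal{F}}^X(k) \otimes \bar{V}^{\betil\gamtil}_{\tilde\sigma}.
\end{equation*}

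Third, I would show the spectral sequence degenerates at $E_1$, i.e.\ that $\bar d_1$ induces the zero map on $H^{\dotr}_{\bar d_0}$. Every monomial appearing in $\bar d_1$ carries at least one $\betabar,\gammabar,\bbar$ or $\cbar$ mode; in particular the surviving representatives of $H^{\dotr}_{\bar d_0}$ carry none, and applying $\bar d_1$ to them yields either zero or a state with such an excitation, which is $\bar d_0$-exact (e.g.\ $\cbar_{-l}\ket{\mathrm{vac}}_\sigma \propto \bar d_0(\gammabar_{-l}\ket{\mathrm{vac}}_\sigma)$) and hence trivial in $H^{\dotr}_{\bar d_0}$. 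Therefore $H^{\dotr}_{\mathrm{Vir}}\cong H^{\dotr}_{\bar d_0}$, which is the claim. As a by-product this makes precise the remark before Corollary~\ref{cor:gauged NW model hol anti-hol parts}: the extra term $-\tfrac12\barpar\betilbar$ enters only $\bar d_1$, through $\bar L^{\betil\gamtil}_l$, which never acts on $H^{\dotr}_{\bar d_0}$, so the $(\betilbar,\gamtilbar)$-module survives in full and the cohomology is insensitive to the choice between $\bar T^{\mathrm{sug}}$ and $\bar T^{\beta\gamma}+\bar T^{\betil\gamtil}$.

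The main obstacle is the same as in the holomorphic case, and is precisely what the appendix is for: making the degeneration argument rigorous. One has to exhibit a grading on which the filtration is bounded (fixing $\Xbar$-momentum, winding number, picture and ghost number) so that convergence of the spectral sequence is guaranteed, and---more delicately---rule out $\bar d_1$-cocycles assembled from a balanced number of $\betabar,\gammabar,\bbar,\cbar$ excitations that might escape the crude counting above; the $w=0$ winding sector, where $\bar d_0$ degenerates, also requires separate and essentially topological treatment. These are exactly the steps carried out for the holomorphic sector in Appendix~\ref{sec:GO string cohomology app}, and they transcribe verbatim with $(Y^1,Y^2)$ replaced by $(\betilbar,\gamtilbar)$.
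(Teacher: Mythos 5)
Your proposal is correct and follows essentially the same route as the paper's proof in Appendix~\ref{sec:GO string cohomology app}: the same filtration degrees on the $\betabar,\gammabar,\bbar,\cbar$ modes, the splitting $\bar{d}_{\mathrm{Vir}}=\bar d_0+\bar d_1$, the Kugo--Ojima quartet mechanism (Lemma~\ref{lem:Koszul chiral ring} with $\lambda=1$, $\mu=2$) collapsing the $(\betabar,\gammabar,\bbar,\cbar)$ sector onto the picture vacuum with $\bar{\mathcal{F}}^X(k)\otimes\bar V^{\betil\gamtil}_{\tilde\sigma}$ as spectators, and degeneration of the spectral sequence at $E_1$ because the first page is concentrated in a single filtration degree. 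Your explicit caveat about the $w=0$ winding sector (where $\bar d_0$ vanishes and the quartet argument breaks down) is a point the paper leaves implicit in its appeal to the compactification of $\gammabar$, but it does not change the conclusion for the winding sectors the proposition is concerned with.
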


\begin{proof}
  See Appendix~\ref{sec:GO string cohomology app}.
\end{proof}

\subsection{Summary}

Overall, the resulting closed string theory can be interpreted as a
special case of the bosonic Gomis--Ooguri string. Recall that the
Gomis--Ooguri string spectrum was originally shown to be the Fock
module of 24 free bosons in euclidean space \cite{Gomis:2000bd}, up to
a choice of picture coming from a choice of $\beta\gamma$ vacuum. This
is shown explicitly in Appendix~\ref{sec:GO string cohomology app}. We
also show that for the purposes of computing the spectrum (i.e., BRST
cohomology), the choice of the $c=24$ matter sector that appears does
not matter. Starting with the low-energy limit of the bosonic string
and taking the appropriate limits forces the appearance of the 24 free
bosons. However, if we instead chose to start with the worldsheet
theory, the 24 free bosons simply go along for the ride and do not
alter the BRST cohomology calculations. From this perspective, any
$c=24$ CFT could have been chosen, but whether or not they admit the
sought after physical interpretation is a separate issue.

In that sense, coming back to the model we have constructed, it
resembles the Gomis--Ooguri string but with an altered $c=24$ matter
sector: 24 free bosons in the holomorphic sector,
and 22 free bosons and a $\betil\gamtil$-system in the
anti-holomorphic sector (note that the holomorphic sector seems to
admit the same galilean symmetry that the Gomis--Ooguri string
does).  The most peculiar feature of this model is the differing
holomorphic and anti-holomorphic sectors, but it is consistent with
the fact that this rendition of a Gomis--Ooguri-like string was
obtained from the null chiral gauging of a WZW model; the chirality of
the gauging procedure manifests itself in the full string theory, as
one would expect.

On that note, we address the free field realisation \eqref{eq:Wakimoto
  embedding} not being conformal due to the presence of the
$\tfrac{1}{2}\barpar\betilbar$ term. The key reason why the $c=24$
matter sectors in Propositions \ref{prop:holo sector BRST} and
\ref{prop:anti-holo sector BRST} are in no way reduced or changed due
to the cohomology calculations is precisely because the BRST
cohomology of both sectors is isomorphic to that of $d_0$. Hence, any
part of the embedding \eqref{eq:Wakimoto embedding} that enters the
$c=24$ part of the full string theory goes along for the ride and
remains unaltered. In our case, this includes the
$\betilbar\gamtilbar$-system in the anti-holomorphic sector. This is
exactly why the extra term in $\Tsug$ does not alter the Virasoro BRST
cohomology.

Finally, we also note that for the choice of free field realisation of the
Nappi--Witten CFT given in \eqref{eq:Wakimoto embedding}, the
computation of the null gauging cohomology followed by the computation
of Virasoro BRST cohomology is equivalent to the computation of the
BRST cohomology (or rather, semi-infinite cohomology) of
$\ghat_{\lambda=0} := \Vir \ltimes
\widehat{\mathfrak{u}(1)}$, where $\widehat{\mathfrak{u}(1)}$ is affine $\mathfrak{u}(1)$. 
Furthermore, the cohomology computations
are in keeping with \cite[Theorem 4.1]{Figueroa-OFarrill:2024wgs},
which means that the computation of the semi-infinite cohomology of
$\ghat_{\lambda=0}$ reduces to that of $\Vir$, by virtue of the
modules (i.e., free field realisations) being carefully chosen.  Both
statements are by no means true in general; they hold for our choice
of free field realisation as $(\betil,\gamtil, B,C)$ arrange
themselves into a ``quartet'' (see Appendix~\ref{app:KO quartet
  mechanism}). Nonetheless, it is an interesting observation that
non-relativistic string models can be realised as $\ghat_{\lambda=0}$
field theories.

\section{Discussion}
\label{sec:discussion}

In this paper, we presented an approach by which one can construct
strings propagating in galilean backgrounds using null, chirally
gauged WZW models on generalised Nappi--Witten Lie groups. In
particular, this approach does not rely on taking limits on the target
space or the worldsheet. We presented a large class of examples of WZW
models using which one could take the approach outlined in this
paper. For the special case of the Nappi--Witten model, the null
reduction of its target space is the three-dimensional galilean-AdS
spacetime, as will be shown in Appendix~\ref{sec:null-reds}.

The null chiral gauging of the Nappi--Witten model was then performed
using CFT techniques at the level of the worldsheet, as is standard
practice when studying WZW models. A convenient choice of free field
realisation was used to simplify the computations drastically.  The
resulting gauged WZW model was then used to a construct a ``full'' string
theory by adding to it some matter CFTs with the appropriate central
charges needed to gauge the worldsheet Virasoro symmetry. The final
string theory that we constructed can be interpreted as a closed
``Gomis--Ooguri-like'' string that displays a slight heterosis between
the holomorphic and anti-holomorphic sectors.

Given the choice of free field realisation \eqref{eq:Wakimoto
  embedding} for the Nappi--Witten CFT, our full string theory can be
interpreted as a $\Vir\ltimes\widehat{\mathfrak{u}(1)}\cong\ghat_{\lambda=0}$
field theory. Hence, our work demonstrates a possible string-theoretic
interpretation of $\ghat_{\lambda=0}$ field theories, which
complements the tensionless string interpretation of
$\bms\cong \ghat_{\lambda=-1}$ field theories. For
the sake of completeness, it is also worth mentioning the emergence of
$\ghat_{\lambda=-2}$ appearing as a symmetry algebra in the hybrid
formalism of the minimal tension ($k=1$) string in $\AdS_3\times S^3$
\cite{McStay:2023thk, McStay:2024dtk}. The fact that the one-parameter
family of Lie algebras $\ghat_\lambda$ appear as symmetry algebras in
different non-relativistic and tensionless string settings for
different values of the parameter $\lambda$ could be hinting at a
deeper, yet undiscovered connection between these theories, though as
it stands, their relationships do seem circumstantial.

\subsection{Comments on geometry}

The geometric motivation for this work involved making a choice of
lorentzian Lie group with a null isometry.  We then performed the
necessary manipulations and simplifications on the resulting conformal
field theory which we obtained from quantisation. However, it is worth
addressing that some of these simplifying steps led us
further away from the geometrical picture with which we started.  For
example, as we mentioned in Section~\ref{sec:Implement gauge
  constraint via BRST}, we chose to work with a different real form of
the Nappi--Witten Lie algebra, whose ad-invariant inner product has
split signature $(2,2)$ instead of $(3,1)$.  Hence, we expect any
string models we have constructed to admit pseudo-galilean symmetry
rather than a strictly galilean one. In fact, this seems to be
encapsulated by the differing holomorphic and anti-holomorphic sectors
of our full string theory. The presence of a non-trivial
$(\betil,\gamtil)$ spectrum in the anti-holomorphic sector should
contribute a lorentzian signature to the metric of any background
spacetime in which this string theory can be interpreted as
propagating. An intuitive way to infer this is from the embedding of
two lorentzian bosons $\d\phi^\pm$ into $(\betil,\gamtil)$ given by equations
\eqref{eq:field embeddings} and \eqref{eq:FFR mode embedding}. 
The lorentzian metric on $(\d\phi^+,\d\phi^-)$ is realised as the 
dual pairing on $(\betil,\gamtil)$, encapsulated in their OPEs.

It is worth making a few comments on this simple step of
complexification that we so routinely perform for the sake of
mathematical elegance and convenience.  Indeed, it is common practice
to complexify the Nappi--Witten algebra in a variety of different
settings \cite{Kiritsis:1993jk, Kiritsis:1994ij, Sfetsos:1993rh,
  Penafiel:2019czp} and bring it to the form given in
\eqref{eq:generalised_NW_complexified}. Even when studying the Lie
groups associated to the generalised Nappi--Witten algebra $\nw_{2\ell
  + 2}$ in \eqref{eq:generalised_NW_complexified}, this
confusion is prevalent. There is little distinction made between these
split signature $(\ell+1,\ell+1)$ Lie algebras and the centrally extended
euclidean algebra with Lorentzian signature $(1,2\ell+1)$, and it has
become common practice to go between them. At first glance, this
simply seems like going between two real forms of the same complex Lie
algebra. However, a lack of consistency in this practice is observed,
particularly since the desirable Lorentzian signature of one real form
of the generalised Nappi--Witten Lie algebra is used in conjunction with the diagonalisability of
$\ad J$ and triangular decomposition
(cf. \eqref{eq:generalised_NW_complexified}) of the other real form
\cite{Kiritsis:1993jk, Kiritsis:1994ij, Kehagias:1994iy,
  DAppollonio:2003zow, DAppollonio:2004ppq,
  DAppollonio:2007ldj}. Furthermore, it is the representation theory
of the complex Nappi--Witten Lie algebra that has been studied in
detail \cite{bao2011representations, Babichenko_2021,
  Chakraborty:2024hsk}, while the study of real representations of
real Lie algebras is much rarer in general. We hope that our work
serves as a reminder of the importance of this distinction in
physics. That being said, our work merely establishes a proof of
concept for the method of building non-relativistic strings as gauged
WZW models.

On the other hand, it would be worth noting the relevance of string
models on $(p> 1,q)$ backgrounds. Our work has inadvertently
demonstrated the emergence of string models that resemble the
Gomis--Ooguri string from gauged WZW models of strings propagating on
$(2,2)$ backgrounds. Other signature $(p,q)$ models have also appeared
in the literature earlier, such as the WZW model for the galilean
conformal algebra \cite{Chakraborty:2012qm} with signature $(3,3)$ and
WZW models for the unique indecomposable, nilpotent 5- and
6-dimensional metric Lie algebras \cite{Kehagias:1994ys}.

\subsection{Comments on free field realisations} \label{sec:FFR_comments}
As mentioned earlier, Proposition~\ref{prop:NG cohomology} only holds
when using our specific free field realisation of the Nappi--Witten
CFT. In fact, using the original free field realisaton written down in
\cite[Theorem 5.4]{bao2011representations} in terms of a
weight-$(1,0)$ $\beta\gamma$-system and two free Lorentzian bosons $\d
\phi^{\pm}$ yields a different null gauging cohomology. In particular,
the cohomology is ``smaller''. To understand this better, we recall
the free field realisation from
\cite[Theorem~5.4]{bao2011representations}:
\begin{equation}
\label{eq:Wakimoto embedding original}
    \begin{split}
        &P^+ = \beta\\
        &P^- = \partial \gamma +  \gamma\d\phi^- \\
        &I = \d\phi^-\\
        &J = \partial\phi^+ + \tfrac{1}{2}\d\phi^- - \normalord{\beta\gamma}. \\
    \end{split}
\end{equation} 
Comparing with the free field realisation \eqref{eq:Wakimoto embedding}, we have a sequence of embeddings
\begin{equation}\label{eq:field embeddings}
    \{P^\pm, I,J\} \hookrightarrow \{\d\phi^\pm, \beta,\gamma\} \hookrightarrow \{\betil,\gamtil,\beta,\gamma\},
\end{equation}
where the last embedding is given by the identification
\begin{equation} \label{eq:FFR mode embedding}
    \d\phi^+ = \d\gamtil \iff \alpha^+_n = -n \gamtil_n, \quad \d\phi^- = \betil \iff \alpha^-_n = \betil_n. 
\end{equation}
Note that $\gamma_0$ does not have a preimage under this embedding,
and that $\alpha^+_0 \mapsto 0$. The null gauging cohomology under the
free field realisation \eqref{eq:Wakimoto embedding original} is given
by the following proposition.

\begin{proposition} \label{prop:NG cohomology original FFR}
  The null gauging cohomology of $\{P^\pm,I,J\}$ under the embedding
  \eqref{eq:Wakimoto embedding original}, with respect to the
  differential given by the zero mode of
  \begin{equation}
    j_{\mathrm{NG}} = \tfrac{1}{2}\normalord{C\d \phi^-} - \normalord{C\beta\gamma} + \normalord{C\d\phi^+},
  \end{equation}
  is
  \begin{equation}
    H_{\mathrm{NG}}\cong\left(\CC\ket{0}_{BC} \oplus \CC C_0\ket{0}_{BC}\right)\otimes \CC \ket{0,0} \otimes V^{\beta\gamma}_{q;0}.
  \end{equation}
  Here, $\ket{p_+,p_-}$ denotes the Fock vacuum of the Lorentzian
  bosons $\d\phi^\pm$ with momentum $p_\mu = (p_+,p_-)$ and
  $V^{\beta\gamma}_{q;0}$ is the subspace of
  $V^{\beta\gamma}_\sigma$ which is the kernel of the zero mode of the
  current $\normalord{\beta\gamma}$.
\end{proposition}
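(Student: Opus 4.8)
The plan is to run the same spectral-sequence plus Kugo--Ojima argument that proves Proposition~\ref{prop:NG cohomology} in Appendix~\ref{sec:NG cohomology app}, isolating the one place where the realisation~\eqref{eq:Wakimoto embedding original} behaves differently from~\eqref{eq:Wakimoto embedding}. First I would write $\dNG$ in modes: expanding $\d\phi^{\pm}(z)=\sum_n\alpha^{\pm}_n z^{-n-1}$ with $[\alpha^+_m,\alpha^-_n]=m\,\delta_{m+n,0}$ and $[\alpha^{\pm}_m,\alpha^{\pm}_n]=0$, equation~\eqref{eq:Wakimoto embedding original} gives
\begin{equation*}
  \dNG=\sum_{n\in\ZZ}C_{-n}\,J_n,\qquad J_n=\alpha^+_n+\tfrac12\alpha^-_n-(\normalord{\beta\gamma})_n,
\end{equation*}
where $\alpha^+_0=p_+$ and $\alpha^-_0=p_-$ are the momenta of $\d\phi^{\pm}$ (and $\dNG^2=0$ is the anomaly-freeness already noted). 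I would then filter the space of states as in Appendix~\ref{sec:NG cohomology app}, assigning filtration degrees so that the part of $\dNG$ linear in $\alpha^+$ is the bottom differential $d_0=\sum_{n\in\ZZ}C_{-n}\alpha^+_n$, and study the associated spectral sequence (Appendix~\ref{app:spectral sequences}).

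For $n\neq 0$ the modes $C_{-n},\alpha^+_n$ together with their conjugates $B_n,\alpha^-_{-n}$ form a Kugo--Ojima quartet (Appendix~\ref{app:KO quartet mechanism}): $d_0$ sends $\alpha^-_{-n}\mapsto nC_{-n}$ and $B_{-n}\mapsto\alpha^+_{-n}$ and kills $C_{-n},\alpha^+_{-n}$, so in $H_{d_0}$ all the nonzero $\d\phi^{\pm}$ and $BC$ oscillators disappear, while the $\beta\gamma$-sector is untouched. The $n=0$ term of $d_0$ is $p_+\,C_0$, and this is exactly where the present realisation parts company with that of Proposition~\ref{prop:NG cohomology}: there the current modes contain $\partial\gamtil$, whose zero mode vanishes identically, and the field playing the role of $\d\phi^-$ is $\betil$, whose zero mode pairs non-trivially with $\gamtil_0$, so the $BC$ zero modes are swept into a contractible quartet and only the vacuum survives; here $\alpha^{\pm}$ carry genuine commuting momentum zero modes, so $p_+\,C_0$ merely forces $p_+=0$ and leaves the doublet $\ket{0}_{BC},C_0\ket{0}_{BC}$ intact. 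Thus $H_{d_0}\cong\bigl(\CC\ket{0}_{BC}\oplus\CC C_0\ket{0}_{BC}\bigr)\otimes\CC\ket{0,p_-}\otimes V^{\beta\gamma}_{\sigma}$.

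On the next page the induced differential is $d_1=C_0\bigl(\tfrac12 p_--(\normalord{\beta\gamma})_0\bigr)$. The operator $(\normalord{\beta\gamma})_0$ has integer-spaced eigenvalues on $V^{\beta\gamma}_{\sigma}$, with kernel $V^{\beta\gamma}_{q;0}$, and $d_1$ acts invertibly on every eigenspace whose eigenvalue differs from $\tfrac12 p_-$; in the vacuum momentum sector $p_-=0$ (consistent with the embedding~\eqref{eq:FFR mode embedding}, in which $\d\phi^{\pm}$ enter only through derivative currents) this leaves $\ket{0,0}\otimes V^{\beta\gamma}_{q;0}$ in ghost number $0$ as $\ker d_1$ and the same space in ghost number $1$ as $\coker d_1$, which is precisely the claimed $H_{\mathrm{NG}}$. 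A check that all higher differentials vanish for degree reasons, and that the filtration is bounded on each fixed weight and ghost-number subspace so that the spectral sequence converges, would complete the argument.

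I expect the zero-mode analysis to be the crux. The nonzero-mode quartets decouple exactly as in Appendix~\ref{sec:NG cohomology app}, but the zero-mode part is a genuine rank-one complex whose differential $C_0 J_0$ now sees the commuting momenta $p_{\pm}$ rather than an oscillator pair; keeping careful track of which momentum sector one works in is what pins $p_+$ (and $p_-$) to zero and replaces the full $V^{\beta\gamma}_{\sigma}$ of Proposition~\ref{prop:NG cohomology} by the ``smaller'' $V^{\beta\gamma}_{q;0}$. The full computation is recorded in Appendix~\ref{sec:NG cohomology alt app}.
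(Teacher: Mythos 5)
Your overall strategy (filter, identify a Kugo--Ojima quartet at page zero, then reduce to a finite zero-mode complex) is the same as the paper's, but you have inverted the filtration: your $d_0=\normalord{C\d\phi^+}_0$ is the paper's \emph{last} differential $d_2$, while the paper takes $d_0=\tfrac12\normalord{C\d\phi^-}_0$ first, uses the $\dNG$-exact operator of equation \eqref{eq:alpha- = 0} to restrict to $p_-=0$ and the operator $\mathcal{L}_0$ of \eqref{eq:LdK = 0} to strip the oscillators, then runs $d_1=-C_0\normalord{\beta\gamma}_0$ to cut $V^{\beta\gamma}_\sigma$ down to $\ker\normalord{\beta\gamma}_0$, and only at the third page imposes $p_+=0$ via $C_0\alpha^+_0$. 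Reversing the order is legitimate in principle, but it changes which momentum constraint you get ``for free'' at page one, and this is where your argument has a genuine gap: with your filtration nothing forces $p_-=0$. You notice this yourself --- your induced differential on the first page is $C_0\bigl(\tfrac12 p_- - \normalord{\beta\gamma}_0\bigr)$, which is invertible only on eigenspaces with eigenvalue different from $\tfrac12 p_-$ --- and then you simply \emph{declare} $p_-=0$ ``consistent with the embedding''. That is not a proof: the proposition quantifies over all Fock vacua $\ket{p_+,p_-}$, and the vanishing for $p_-\neq 0$ is supposed to be an output of the cohomology, not an input. As written, your computation leaves behind the eigenspace $\normalord{\beta\gamma}_0=\tfrac12 p_-$ for every $p_-\in 2\ZZ$, which is nonempty and therefore contradicts the factor $\CC\ket{0,0}$ in the statement.

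This is not merely a bookkeeping slip you could wave away, because the operator your first-page differential sees, $\tfrac12 p_- + p_+ - \normalord{\beta\gamma}_0$, is exactly $J_0=\{\dNG,B_0\}$, and states annihilated by $J_0$ cannot be removed by the anticommutator trick. For instance $\ket{0}_{BC}\otimes\ket{0,2}\otimes\gamma_0\ket{0}_{\beta\gamma}$ is $\dNG$-closed (each $J_{n\geq 0}$ kills it, and the negative $C$-modes annihilate $\ket{0}_{BC}$) and cannot be exact for weight and ghost-number reasons, so your route is actually detecting something real that must be confronted. To repair the proof along the paper's lines you would need to establish, \emph{before} turning to the $\beta\gamma$ zero mode, that the cohomology is supported at $p_-=0$ --- the paper does this by exhibiting $\alpha^-_0$ as exact with respect to its $d_0$ --- and you should check carefully how the scalar action of $\alpha^\pm_0$ on the Fock vacuum interacts with the filtration degree (an operator of filtration degree zero containing $\alpha^-_0$ acts on states as $p_-$ times an operator of degree one, so the induced differential on the associated graded is not what the naive operator-degree count suggests). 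Until that step is supplied and the $p_-\in2\ZZ$ classes are either killed or acknowledged, the proof is incomplete.
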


In particular, there is no null gauging cohomology for $p_\mu =
(p_+,p_-) \neq (0,0)$, and only a subspace $V^{\beta\gamma}_{q;0}$
of the full space of states $V^{\beta\gamma}_\sigma$ remains, albeit there are
two copies of this subspace. In this sense, we say that the cohomology
is ``smaller'' than what we get from our free field realisaton
\eqref{eq:Wakimoto embedding}. For the interested reader, we leave the
proof of Proposition~\ref{prop:NG cohomology original FFR} in
Appendix~\ref{sec:NG cohomology alt app}. This calculation also
elucidates the fact that the $d_{NG}$-commuting diagonalisable
endomorphisms at our disposal change with the choice of free field
realisation, which hopefully gives some clarity regarding the
different end results of Propositions~\ref{prop:NG cohomology} and
\ref{prop:NG cohomology original FFR}.

\subsection{Future work}

We outline some possible future projects along the lines of this paper
or suggested by our results:
\begin{itemize}
    \item The construction of galilean string models via null
      reduction on WZW models on generalised Nappi--Witten Lie groups
      $\nw_{2\ell+2}$ for $\ell>1$, but keeping the lorentzian real
      form.
      
    \item A more detailed analysis of the resulting representation
      theory of the null gauging spectrum. Do states arrange
      themselves into representations of some symmetry group?
      
    \item The computation of null gauging cohomology without using
      free field realisations. A starting point would be to use Verma
      modules of the (affine) Nappi--Witten algebra, which were
      classified in \cite{bao2011representations, Babichenko_2021}.

    \item The determination of Lie groups admitting a bi-invariant
      stringy galilean/carrollian structure and perhaps a web of relations
      analogous to \eqref{eq:galcarbarg rels diagram}. This could
      provide a starting point to answer the question of whether there is a
      gauged WZW model description of Lie groups admitting a
      bi-invariant stringy galilean structure.

    \item One could also try to gauge null (but not necessarily
      central) subgroups of lorentzian Lie groups.  Such reductions do
      not correspond to galilean Lie groups, but only galilean
      homogeneous spaces.

    \item It would be interesting to extend this approach to construct
      galilean NSR strings.
      
\end{itemize}

The interpretation of other non-Lorentzian string theories as gauged WZW models is very much dependent on the choice of string theory. While it may be true that other string theories with galilean target spaces admit formulations as gauged WZW models on bargammanian geometries, this is not the case for string theories with carrollian target spaces. This is because a carrollian structure is a restriction (to a null hypersurface) of a bargmannian structure, while a galilean structure is a null reduction (i.e., quotient) of a bargmannian one. Unlike the latter, the former procedure cannot be realised as a gauging procedure of a WZW model. 
On the other hand, the techniques we used here to study the string worldsheet quantum mechanically can most certainly be applied to other non-Lorentzian strings. We have papers currently in preparation on the spectrum of the bosonic ambitwistor string (Carrollian worldsheet) and the Carrollian string (Carrollian worldsheet and target space).

\section*{Acknowledgements}

JMF would like to thank Emil Have and Niels Obers for interesting
discussions on the CFT approach to non-lorentzian string theories.
GSV would like to thank Gerben Oling and Ziqi Yan for brief yet useful
discussions on the Nappi--Witten algebra and non-relativistic string
geometries respectively.  GSV is supported by a Science and
Technologies Facilities Council studentship [grant number 2615874]. We would like to 
thank the anonymous referee for their question on the applicability of our techniques
to other non-Lorentzian strings.

\begin{appendices}

\section{Spectral sequences} \label{app:spectral sequences}

We provide a gentle introduction to spectral sequences and some of
their key features. It is based on \cite[Chapter 2]{MR1793722}
by McCleary. The discussion there makes use of $R$-modules for a
general commutative ring $R$, but for our purposes, we only work with
vector spaces which we will tacitly assume to be either real or
complex. It should be mentioned that this exposition is by no means a
thorough one; it is only intended to lay the groundwork for the
application of spectral sequences to the computation of string theory
spectra.

\subsection{Basic Setup}

\begin{definition}
  A \emph{differential bigraded complex} is a collection of bigraded
  vector spaces $\{E^{p,q}\}_{p,q\in\mathbb{Z}}$ with a linear
  differential $d \colon E^{\dotr,\dotr} \rightarrow E^{\dotr,\dotr}$
  of bidegree $(r, 1-r)$ for some $r\in\mathbb{Z}$ such that $d^2 :=
  d\circ d = 0$.
\end{definition}

\begin{definition}
  \label{def:spectral sequence}
  A (cohomological) \emph{spectral sequence} is a collection of
  differential bigraded complexes $\{E^{\dotr, \dotr}_r, d_r\}$ for $r
  = 0, 1, 2, \dots$, where for all $p,q,r$,
  \begin{equation}
    E^{p,q}_{r+1} \cong H^{p,q}(E_r^{\dotr,\dotr}, d_r) \defeq 
    \frac{\ker\big(d_r\colon E^{p,q}_r \longrightarrow E^{p+r,q+1-r}_r\big)}{\im\big(d_r\colon E^{p-r,q-1+r}_r\longrightarrow E^{p,q}_r\big)}.
  \end{equation}
\end{definition}

The collection of vector spaces $E^{\dotr,\dotr}_r$ is known as the
\emph{$r$\textsuperscript{th} page} of the spectral sequence. Figure
\ref{fig:spectral_seq_pages} illustrates the relationship between
the zeroth and first pages.

\begin{figure}[b]
    \centering
    \begin{tikzpicture}
    \foreach \x in {0,...,6}
        \foreach \y in {0,...,6}
        {\fill (\x,\y) circle (3pt);
        \draw[-to](\x, \y)--(\x+0.9,\y);}
    \node[below] at (0,0) {$E^{0,0}_0$};
    \node[below] at (1,0) {$E^{0,1}_0$};
    \draw[-to, color=BlueViolet, line width=0.4mm](2,2)--(2.9,2);
    \draw[-to, color=BlueViolet, line width=0.4mm](3,2)--(3.9,2);
    \fill[color=BlueViolet] (3,2) circle (3.3pt);
    \node[below, text=BlueViolet] at (3,2) {$E^{2,3}_0$};
    \foreach \x in {9,...,15}
        \foreach \y in {0,...,6}
        {\fill (\x,\y) circle (3pt);
        \draw[-to](\x, \y)--(\x,\y+0.9);}
    \fill[color=BlueViolet] (12,2) circle (3.3pt);
    \node[left, text=BlueViolet] at (12,2) {$E^{2,3}_1$};
    \end{tikzpicture}
    \vspace{5pt}
    \caption{The zeroth page (left) and the first page (right) of a
      spectral sequence. The grading $p$ increases along the $y$-axis
      while $q$ increases along the $x$-axis and the arrows depict the
      action of $d_r$. The cohomology at $E^{2,3}_0$ determines
      $E^{2,3}_1$ as highlighted in blue. However, note that it does
      \textit{not} determine $d_1$.}
    \label{fig:spectral_seq_pages}
\end{figure}

We would like identify the end goal of computations with spectral
sequences. The most intuitive thing to think about would be to keep
turning the pages of a spectral sequence until one reaches a situation
where the cohomology does not change.  This can be made more precise.
We suppress the $(p,q)$ indices for clarity and start with the second page of
the spectral sequence for concreteness.  Define $Z_2
\defeq \ker d_2$ and $B_2 \defeq \im d_2$. Then
\begin{equation*}
  d_2 \circ d_2 = 0 \implies B_2 \subset Z_2 \subset E_2.
\end{equation*}
Now define $\Bar{Z}_3 \defeq \ker d_3$ and $\Bar{B}_3 \defeq \im
d_3$. Since these are subspaces of $E_3 = \bigslant{Z_2}{B_2}$, there
exist $B_3 \subset Z_3 \subset E_3$ such that $\Bar{Z}_3 =
\bigslant{Z_3}{B_2}$ and $\Bar{B}_3 =
\bigslant{B_3}{B_2}$. Consequently, $E_4 =
\bigslant{\Bar{Z}_3}{\Bar{B}_3} = \bigslant{Z_3}{B_3}$ and we have the
tower of inclusions
\begin{equation*}
  B_2 \subset B_3 \subset Z_3 \subset Z_2.
\end{equation*}
Iterating, the spectral sequence can be presented as an infinite tower
of inclusions
\begin{equation*}
  B_2 \subset B_3 \subset \cdots \subset B_n \subset B_{n+1} \subset \cdots \subset
  Z_{n+1} \subset Z_n \subset \cdots \subset Z_3 \subset Z_2
\end{equation*}
with $E_{n+1}=\bigslant{Z_n}{B_n}$, and differentials
\begin{equation*}
  d_{n+1} \colon \bigslant{Z_n}{B_n} \rightarrow \bigslant{Z_n}{B_n} \quad \quad \ker d_{n+1} = \bigslant{Z_{n+1}}{B_n} \quad \quad \im d_{n+1} = \bigslant{B_{n+1}}{B_n}.
\end{equation*}
This gives rise to a short exact sequence for each $n$
\begin{equation}
\label{eq:spectralseq_SES}
  0 \xrightarrow{\makebox[2em]{}} \bigslant{Z_{n+1}}{B_{n}} \xrightarrow{\makebox[1.8em]{}} \bigslant{Z_n}{B_n} \xrightarrow{\makebox[2em]{$d_{n+1}$}} \bigslant{B_{n+1}}{B_n} \xrightarrow{\makebox[2em]{}} 0.  
\end{equation}
An element in $E_2$ is said to \emph{survive to the
  $r$\textsuperscript{th} page} if it lies in $Z_r$. Likewise, an
element in $E_2$ is said to \emph{bound at the
  $r$\textsuperscript{th} page} if it lies in $B_r$. We may then
define 
\begin{equation}
    Z_{\infty} \defeq \bigcap_n Z_n \quad \quad B_\infty \defeq \bigcup_n B_n
\end{equation}
as the subspaces of $E_2$ consisting of elements that \emph{survive
  forever} and \emph{eventually bound}, respectively. Looking back at
the infinite tower of inclusions, we indeed have $B_\infty \subset
Z_\infty$ and thus, we may sensibly define:
\begin{equation}
    E_\infty \defeq \bigslant{Z_\infty}{B_\infty}.
\end{equation}
This is the bigraded vector space that one obtains from the calculation of
the infinite sequence of successive cohomologies. In some cases, this
calculation truncates at some finite stage.

\begin{definition} \label{def:spectral_sequence_collapse}
  A spectral sequence is said to \emph{collapse at the
    $N$\textsuperscript{th} page} if $d_{r\geq N} = 0$.
\end{definition}

From \eqref{eq:spectralseq_SES}, $d_{r=N} = 0 \implies Z_N = Z_{N-1}$,
since the kernel of the zero map $d_N$ is simply
$\bigslant{Z_{N-1}}{B_{N-1}}$ which should equal the image
$\bigslant{Z_N}{B_{N-1}}$ of the inclusion map. Likewise,
\begin{equation*}
  \ker\left(\bigslant{B_N}{B_{N-1}} \longrightarrow 0 \right) = \bigslant{B_N}{B_{N-1}} = \im d_N = 0 \iff B_N = B_{N-1}.
\end{equation*}
The tower of inclusions then becomes 
\begin{equation*}
  B_2 \subset B_3 \subset \dots \subset B_{N-1} = B_{N} = \dots = B_\infty \subset Z_\infty = \dots = Z_N = Z_{N-1} \subset \dots \subset Z_3 \subset Z_2,
\end{equation*}
thereby reducing the computation to a finite one as promised. It is
this collapse that we will exploit when using spectral sequences to
compute string theory spectra.

\subsection{Spectral sequences from filtrations}

One way a spectral sequence can arise is through a filtration. Suppose
$\{A^{\dotr} = \bigoplus_{n\in\mathbb{Z}} A^n, d\}$ is a differential
graded vector space with $d\colon A^n \rightarrow A^{n+1}$. Then we
apply a decreasing filtration (commonly the case for cohomological
spectral sequences)
\begin{equation}
    F : \dots \subseteq F^{p+1}A \subseteq F^{p}A \subseteq F^{p-1}A \subseteq \cdots
\end{equation}
such that the differential respects the filtration (so $d\colon F^p A
\rightarrow F^p A$). We may also define the filtration at each graded
level via $F^p A^n \defeq F^p A \cap A^n$. Then $d\colon F^p A^n
\rightarrow F^{p}A^{n+1}$.  Every filtered vector space $F^{\dotr} A$ has an
associated graded vector space $\Gr_{\dotr} A$, where
\begin{equation}
  \label{eq:asssociated-graded}
  \Gr_p A := \bigslant{F^p A}{F^{p+1} A}.
\end{equation}

\begin{figure}[t]
    \centering
    \begin{tikzpicture}
    \draw[very thick] (-0.5,0.3) rectangle (2,3.6);
    \draw[very thick] (6,0.3) rectangle (8.5,3.6);
    \node at (0.75,3.95) {$A^{n}$};
    \node at (7.25,3.95) {$A^{n+1}$};
    \draw[very thick, dashed] (-0.5,2.7)--(2,2.7);
    \draw[very thick, dashed] (6,2.7)--(8.5,2.7);
    \draw[-to, blue] (1.7,2.3)--(6.3,2.3);
    \node at (0.75, 2.3) {$F^p A^n$};
    \node at (7.25, 2.3) {$F^p A^{n+1}$};
    \draw[very thick, dashed] (-0.5,1.9)--(2,1.9);
    \draw[very thick, dashed] (6,1.9)--(8.5,1.9);
    \draw[very thick, dashed] (-0.5,1.3)--(2,1.3);
    \draw[very thick, dashed] (6,1.3)--(8.5,1.3);
    \node at (0.75, 1.6) {$F^{p+1}A^{n}$};
    \node at (7.25, 1.6) {$F^{p+1}A^{n+1}$};
    \draw[-to, blue] (1.7,1.6)--(6.3,1.6);
    \draw[-to] (1.7,2.3)--(6.3,1.7);
    \node at (0.75,1) {$\vdots$};
    \node at (7.25,1) {$\vdots$};
    \draw[-to] (1.7,2.3)--(6.3,1.1);
    \draw[-to] (1.7,1.6)--(6.3,1);
    \node at (0.75,3.25) {$\vdots$};
    \node at (7.25,3.25) {$\vdots$};
    \end{tikzpicture}
    \vspace{5pt}
    \caption{The arrows depict the possible actions of $d$. The filtration degree of any element in $A^n$ is never raised by $d$. The blue arrows depict $d_0$.}
    \label{fig:d respect filtration}
\end{figure}

\begin{definition} \label{def:assoc_bigraded_module}
  For any decreasing filtration $F$ on a differential graded complex $\{A^{\dotr}, d\}$, the \emph{associated bigraded complex} is given by 
  \begin{equation}
    E^{p,q}_0(A^{\dotr}, F) \defeq \Gr_p A^{p+q}
  \end{equation}
  relative to the differential $d_0$ induced by $d$.
\end{definition}
For our purposes, the bigraded complex associated to a filtration $F$
will be how the zeroth page of a spectral sequence emerges (hence the
compatible notation), so let us proceed under the assumption that such
a spectral sequence exists, with $d_0$ the part of $d$ that leaves the
filtration degree unchanged.

Now consider Figure~\ref{fig:filtration and differential}.
\begin{figure}
    \centering
    \begin{tikzcd}
        &                    & \vdots                       & \vdots                     & \vdots                       & \\
        &\dots\arrow[r,"d"] & F^{p-1}A^{n-1}\arrow[r, "d"]\arrow[u, hookrightarrow] & F^{p-1}A^{n}\arrow[r, "d"]\arrow[u, hookrightarrow] & F^{p-1}A^{n+1}\arrow[r, "d"]\arrow[u, hookrightarrow] &\dots \\
        &\dots\arrow[r,"d"] & F^{p}A^{n-1}\arrow[r, "d"]\arrow[u, hookrightarrow] & F^{p}A^{n}\arrow[r, "d"]\arrow[u, hookrightarrow] & F^{p}A^{n+1}\arrow[r, "d"]\arrow[u, hookrightarrow] &\dots \\
        &\dots\arrow[r,"d"] & F^{p+1}A^{n-1}\arrow[r, "d"]\arrow[u, hookrightarrow] & F^{p+1}A^{n}\arrow[r, "d"]\arrow[u, hookrightarrow] & F^{p+1}A^{n+1}\arrow[r, "d"]\arrow[u, hookrightarrow] &\dots \\
        &                    & \vdots\arrow[u, hookrightarrow]                     & \vdots\arrow[u, hookrightarrow]                     & \vdots\arrow[u, hookrightarrow]                       &
    \end{tikzcd}
    \caption{The structure of the filtered differential graded complex with the action of $d$.}
    \label{fig:filtration and differential}
\end{figure}
There are two natural operations that we can perform: taking quotients
along the inclusions and computing cohomologies along $d$. If we do
the latter first, we observe that there is an induced filtration
(which we will also denote by $F$)  on  the cohomology $H^{\dotr}(A,d)$:
\begin{equation}
\label{eq:induced_filtration_cohomology}
    F^p H^n (A,d) \defeq H^n (F^p A, d).
\end{equation}
This amounts to taking the cohomology at each filtration degree. As a
result $\{H^{\dotr}, d, F\}$ is also a filtered differential complex,
from which we may construct the associated graded complex  $E^{p,q}_0
\big(H^{\dotr}(A,d),F)\big)$. If instead we take quotients first, we
obtain the complex $\{E^{\dotr, \dotr}_0 (A^{\dotr},F),d_0\}$:
\begin{equation}
  \label{eq:assoc_grad_mod_cohom_diagram}
  \begin{tikzcd}
    &\dots\arrow[r,"d_0"] & \Gr_pA^{n-1} \arrow[r, "d_0"] & \Gr_p A^n \arrow[r, "d_0"]  & \Gr_p A^{n+1} \arrow[r, "d_0"] &\dots
  \end{tikzcd}
\end{equation}
The differential is now $d_0$ instead of $d$, because $d_0$ is the
only part of $d$ that is non-vanishing on the quotients since the rest
of $d$ would raise the filtration degree. Letting $n=p+q$ for clarity,
we obtain
\begin{equation}
    H^{p+q} \left( \Gr_p A, d_0\right) = \frac{\ker \left(d\colon
        \Gr_p A^{p+q} \longrightarrow \Gr_p A^{p+q+1}\right)}{\im \left(d\colon \Gr_p A^{p+q-1} \longrightarrow \Gr_p A^{p+q}\right)}.
\end{equation}
The RHS is precisely $H^{p,q} \big(E^{\dotr,\dotr}_0 (A,F), d_0\big)
\cong E^{p,q}_1$, which would be the first page of our spectral
sequence. Hence, we have obtained two different objects, $E^{p,q}_0
\big(H^{\dotr}(A,d),F)\big)$ and $H^{p,q} \big(E^{\dotr,\dotr}_0
(A,F), d_0\big)$, and the following equivalent statements by taking
quotients and cohomologies in a different order:
\begin{enumerate}
    \item the associated bigraded complex $E^{p,q}_0 (A^{\dotr}, F)$ is the zeroth page of a spectral sequence;
    \item and the first page of a spectral sequence $E^{p,q}_1$ is isomorphic to $H^{p+q} \left(\Gr_p A, d_0\right)$.
\end{enumerate}

Recall that we had constructed $E_\infty$ by laying out the spectral
sequence as an infinite tower of inclusions.  We now introduce the
notion of convergence.

\begin{definition} \label{def:convergence}
Let $\mathcal{A}^{\dotr}$ be a graded vector space. A spectral sequence \emph{converges} to $\mathcal{A}^{\dotr}$ if there exists a filtration $\mathcal{F}$ on $\mathcal{A}^{\dotr}$ such that
\begin{equation}
  E^{p,q}_\infty \cong E^{p,q}_0 (\mathcal{A}^{\dotr}, \mathcal{F}).
\end{equation}
\end{definition}

The interplay between $E^{p,q}_\infty$, $E^{p,q}_0
\big(H^{\dotr}(A,d),F)\big)$ and $H^{p,q} \big(E^{\dotr,\dotr}_0
(A,F), d_0\big)$ is encapsulated in the following theorem.

\begin{theorem}
\label{thm:Spectral Sequence Converges to Cohomology}
Each filtered differential graded complex $\{A^{\dotr}, d, F\}$
determines a spectral sequence $\{E^{\dotr, \dotr}_r,
d_r\}_{r\in\mathbb{N}^+}$ with
\begin{equation}
  E^{p,q}_1 \cong H^{p+q} \left(\Gr_p A, d_0\right) = H^{p,q} \big(E^{\dotr,\dotr}_0 (A,F), d_0\big).
\end{equation}
Suppose also that $F$ is a \emph{bounded filtration}, meaning $\exists s(n)$, $t(n)\in \mathbb{Z}$ for every $n$ such that 
\begin{equation}
  0 = F^{s(n)}A^n \subseteq F^{s(n)-1}A^n \subseteq \dots \subseteq F^{t(n)+1}A^n \subset F^{t(n)} A^n = A^n.
\end{equation}
Then the spectral sequences converges to $H^{\dotr}(A,d)$:
\begin{equation}
  E^{p,q}_\infty \cong E^{p,q}_0 \big(H^{\dotr}(A,d),F)\big) = \Gr_p H^{p+q}(A,d).
\end{equation}
\end{theorem}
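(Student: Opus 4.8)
The plan is to build the spectral sequence explicitly from the filtration — essentially the classical construction as in McCleary — and then read off both assertions. First I would introduce the \emph{$r$-th approximate cycles and boundaries}
\[
  Z_r^p \defeq F^p A \cap d^{-1}\big(F^{p+r}A\big), \qquad B_r^p \defeq F^p A \cap d\big(F^{p-r}A\big),
\]
refined by total degree as $Z_r^{p,q}=Z_r^p\cap A^{p+q}$ and $B_r^{p,q}=B_r^p\cap A^{p+q}$. These sit in a tower $B_0^p\subseteq B_1^p\subseteq\cdots\subseteq Z_\infty^p\subseteq\cdots\subseteq Z_1^p\subseteq Z_0^p=F^pA$ with $Z_\infty^p=F^pA\cap\ker d$, and $d$ carries $Z_r^p$ into $Z_r^{p+r}$. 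Define, with the convention $Z_{-1}^p=Z_0^p$,
\[
  E_r^{p,q}\defeq\frac{Z_r^{p,q}}{Z_{r-1}^{p+1,q-1}+d\,Z_{r-1}^{p-r+1,q+r-2}} .
\]
Since $d(Z_r^{p,q})\subseteq Z_r^{p+r,q-r+1}$ and $d$ sends the denominator above into the denominator one step along, it descends to $d_r\colon E_r^{p,q}\to E_r^{p+r,q-r+1}$ of the bidegree demanded in Definition~\ref{def:spectral sequence}, with $d_r^2=0$ inherited from $d^2=0$.

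The engine of the proof is the isomorphism $E_{r+1}^{p,q}\cong H^{p,q}(E_r^{\dotr,\dotr},d_r)$. Here I would compute $\ker d_r$ and $\im d_r$ as subquotients of $E_r^{p,q}$ by pulling everything back to the filtered complex: a class represented by $x\in Z_r^{p,q}$ is a $d_r$-cocycle exactly when $dx\in F^{p+r+1}A$, i.e.\ when, modulo the denominator, $x\in Z_{r+1}^{p,q}$; and $\im d_r$ at $(p,q)$ is generated by $d\,Z_r^{p-r,q+r-1}$. Substituting these and repeatedly applying the second and third isomorphism theorems, the various $Z_{r-1}$'s get promoted to $Z_r$'s and the overlapping sums in numerator and denominator cancel, leaving precisely $Z_{r+1}^{p,q}/\big(Z_r^{p+1,q-1}+d\,Z_r^{p-r,q+r-1}\big)=E_{r+1}^{p,q}$. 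This index bookkeeping — tracking which subspace lies in which, and checking that the sums telescope — is the step I expect to be the main obstacle: it is entirely routine but unforgiving, and is exactly where the diagram in Figure~\ref{fig:filtration and differential} earns its keep.

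Granting the recursion, the first assertion is immediate. Taking $r=0$ gives $E_0^{p,q}=Z_0^{p,q}/\big(Z_0^{p+1,q-1}+d\,Z_0^{p+1,q-2}\big)=F^pA^{p+q}/F^{p+1}A^{p+q}=\Gr_p A^{p+q}$, matching Definition~\ref{def:assoc_bigraded_module}, and the induced differential $d_0$ is the component of $d$ that preserves filtration degree; hence $E_1^{p,q}\cong H^{p,q}(E_0^{\dotr,\dotr},d_0)=H^{p+q}(\Gr_p A,d_0)$. Finally, for convergence I would use boundedness to freeze the approximants. Fix $n=p+q$. Since $F^{s(n+1)}A^{n+1}=0$, for $r$ large we get $Z_r^{p,q}=F^pA^n\cap\ker d$ and $Z_{r-1}^{p+1,q-1}=F^{p+1}A^n\cap\ker d$; since $F^{t(n-1)}A^{n-1}=A^{n-1}$, for $r$ large we get $d\,Z_{r-1}^{p-r+1,q+r-2}=F^pA^n\cap\im d$. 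Thus the page stabilises to
\[
  E_\infty^{p,q}=\frac{F^pA^n\cap\ker d}{F^{p+1}A^n\cap\ker d\;+\;F^pA^n\cap\im d}.
\]
On the cohomology side, the induced filtration \eqref{eq:induced_filtration_cohomology} gives $F^pH^n\cong(F^pA^n\cap\ker d)/(F^pA^n\cap\im d)$ inside $H^n(A,d)$, and one checks (again by the isomorphism theorems, using $\im d\subseteq\ker d$) that $F^{p+1}H^n$ sits inside $F^pH^n$ with quotient exactly the displayed expression; hence $E_\infty^{p,q}\cong\Gr_p H^{p+q}(A,d)$. Boundedness also makes the filtration on each $H^n$ finite, so the $E_\infty$-page assembles $H^{\dotr}(A,d)$ through finitely many extensions, matching Definition~\ref{def:convergence}.
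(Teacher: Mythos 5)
Your construction is correct and is precisely the classical argument the paper relies on: the paper gives no proof of its own but defers to \cite[Chapter 2.2]{MR1793722}, and your approximate cycles $Z_r^p=F^pA\cap d^{-1}(F^{p+r}A)$, the quotient definition of $E_r^{p,q}$, the recursion via the isomorphism theorems, and the identification of $E_\infty^{p,q}$ with $\Gr_p H^{p+q}(A,d)$ are exactly McCleary's proof, with the index bookkeeping set up correctly (in particular $d\,Z_{r-1}^{p-r+1,q+r-2}=B_{r-1}^{p,q}$ and the $r=0$ check recovering $\Gr_p A^{p+q}$). The only caveat is that your stabilisation step genuinely uses boundedness to freeze $Z_r$ and $B_r$ at a finite page for each fixed $n$, whereas the paper immediately remarks (and in its applications actually needs) that the conclusion persists for merely exhaustive and weakly convergent filtrations, where that freezing argument must be replaced by a more careful limit.
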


This is the key theorem from spectral sequences which was utilised in
\cite{MR865483} to prove the vanishing theorem in semi-infinite
cohomology with values in a hermitian module.  Essentially,
$H^{\dotr}(A,d)$ plays the role of $\mathcal{A}^{\dotr}$ in
Definition~\ref{def:convergence} to formulate the last statement of
the theorem.  A detailed proof is given in McCleary's book \cite[Chapter
2.2]{MR1793722}. However, examination of the proof reveals that
it is often too strong to demand that the filtration be bounded.

\begin{definition}
  Let $\{A^{\dotr} = \oplus_{n\in\mathbb{Z}} A^n, d\}$ be a
  differential graded complex and let $F$ be a stable filtration
  (preserved by $d$).  We say that $F$ is \emph{exhaustive} if
  $\bigcup_p F^P A = A^{\dotr}$ and \emph{weakly convergent} if
  $\bigcap_p F^p A = 0$.
\end{definition}

Theorem~\ref{thm:Spectral Sequence Converges to Cohomology} actually
holds for an exhaustive and weakly convergent filtration.  We make use
of this version of the theorem when computing the BRST cohomology of
non-relativistic strings.

\section{The Kugo--Ojima quartet mechanism} \label{app:KO quartet mechanism}
\subsection{Elementary version} \label{sec:KO_basic}

The simplest way to describe the KO mechanism is as follows. Consider
two sets of creation and annihilation operators, $(a^\dagger,a)$ and
$(b^\dagger, b)$, the former bosonic and the latter fermionic, obeying
the following non-trivial (anti-)commutation relations:
\begin{equation}
   [a,a^\dagger] = 1 \iff a a^\dagger = 1 + a^\dagger a  \quad
    [b, b^\dagger] = 1 \iff bb^\dagger = 1 - b^\dagger b  
\end{equation}
These act on $V$, the vector space of states generated by the creation
operators\footnote{To clarify, the dagger here does not denote
  Hermitian conjugation. It is simply notation to denote a creation
  operator.}  $a^\dagger$ and $b^\dagger$ acting on a vacuum state
$\ket{0}$ which obeys
\begin{equation}
    a\ket{0} = b\ket{0} = 0.
\end{equation}
Let $Q = a^\dagger b$.   It follows that $Q^2 = 0$ and we would like
compute its cohomology. We may try to exhibit a smaller
quasi-isomorphic complex by somehow ruling out states that are
manifestly $Q$-exact. One way to find manifestly $Q$-exact states is
to construct a diagonalisable endomorphism $\Phi\in\End V$ such that
\begin{equation}
    \Phi = Q \varphi + \varphi Q,
\end{equation}
for some $\varphi\in\End V$.  By construction, $\Phi$ commutes with $Q$
(i.e., $[Q,\Phi]=0$), so $Q$ preserves $\Phi$-eigenspaces.  This means
that the complex breaks up into a direct sum of subcomplexes
corresponding to the different eigenvalues of $\Phi$.  We claim that
the cohomology resides in $\ker\Phi$.  Indeed, consider any
$Q$-cocycle $\ket{\psi}$ which is also an eigenstate of $\Phi$ with
nonzero eigenvalue $\lambda \neq 0$.  Then,
\begin{equation} \label{eq:non-zero-eigenstate-coboundary}
  \Phi\ket{\psi} = \lambda\ket{\psi} = (Q\varphi + \varphi Q)
  \ket{\psi} = Q\varphi\ket{\psi},
\end{equation}
so that
\begin{equation}
  \ket{\psi} = Q\left(\tfrac{1}{\lambda}\varphi \ket{\psi}\right)
\end{equation}
is actually a $Q$-coboundary.  This means that the cohomology of $Q$
can be calculated from the subcomplex $\ker \Phi$.

Indeed, if a suitable choice of $\Phi\in\End V$ is available, this
technique could help us simplify the computation of $Q$-cohomology
drastically. With this in mind, we define $K \defeq b^\dagger a$,
using which we construct the operator
\begin{equation}
\begin{split}
   \mathbf{N} \defeq & QK + KQ  \\
        =& a^\dagger b b^\dagger a + b^\dagger a a^\dagger b\\
        =& (1 - b^\dagger b) a^\dagger a + b^\dagger b (1 + a^\dagger a) \\
        =& a^\dagger a + b^\dagger b.
\end{split}
\end{equation}
The operator $\mathbf{N}$ is called the \emph{number operator}, since
it counts the number of creation operators acting on $\ket{0}$:
\begin{equation} \label{eq:Num_op_basic}
\begin{gathered}
    \mathbf{N} (a^\dagger \ket{0}) = a^\dagger (1+a^\dagger a)\ket{0} = a^\dagger \ket{0}\\
    \mathbf{N} (b^\dagger \ket{0}) = b^\dagger (1-b^\dagger b)\ket{0} = b^\dagger \ket{0}\\
    \mathbf{N}(b^\dagger)^l(a^\dagger)^k\ket{0}= (k+l)(b^\dagger)^l(a^\dagger)^k\ket{0}.
\end{gathered}
\end{equation}
Let $\ket{\phi_{k+l}} = \defeq(b^\dagger)^l(a^\dagger)^k\ket{0}$,
where $k\in\NN$ and $l\in\{0,1\}$. The set
$\{\ket{\phi_n}\}_{n\in\NN}$ not only form a basis for $V$, but are
also eigenstates of $\mathbf{N}$ with $\ket{\phi_n}$ having eigenvalue $n$, as shown in
\eqref{eq:Num_op_basic}. This shows that $\mathbf{N}$ is
diagonalisable over $V.$  Hence, following the same logic presented in
\eqref{eq:non-zero-eigenstate-coboundary}, we can compute the
$Q$-cohomology from the subcomplex $\ker \mathbf{N}$.  But
$\ker\mathbf{N}$ is one-dimensional and spanned by $\ket{0}$, which is
a $Q$-cocycle, so that
\begin{equation}
    H_Q \cong \mathbb{C}\ket{0}.
\end{equation}
This reduction of cohomology to a one-dimensional space spanned by the
vacuum state is known as the Kugo--Ojima quartet mechanism (which we
abbreviate to KO mechanism), and the operators
$(a,a^\dagger,b,b^\dagger)$ are said to form a \emph{quartet}.

\subsection{Full version} \label{sec:KO_full}

In our examples, the object of interest is the \emph{Koszul CFT}.

\begin{definition}
\label{def:Koszul CFT}
A \emph{Koszul CFT} consists of a bosonic $\beta\gamma$-system and a
fermionic $bc$-system of weights $(\lambda,1-\lambda)$ and
$(\mu,1-\mu)$, respectively, together with a differential
$d_{\mathrm{KO}} \defeq \normalord{c\beta}_0$, whose cohomology
$H^{\dotr}_{\mathrm{KO}}$ is called the \emph{chiral ring} of the Koszul
CFT.
\end{definition}

The KO mechanism is responsible for the one-dimensionality of the
chiral ring of the Koszul CFT.

\begin{lemma}
    \label{lem:Koszul chiral ring}
    The chiral ring of a Koszul CFT is one-dimensional. That is,
    \begin{equation}
        \label{eq:Koszul CFT cohomology}
        H^{n}_{\mathrm{KO}}=\begin{cases}
             \CC\ket{\mathrm{vac}}_\sigma, \quad n=0 \\
             0, \quad \text{otherwise.}
        \end{cases}
    \end{equation}
    Here, $\ket{\mathrm{vac}}_\sigma= \ket{-\sigma}_{\beta\gamma} \otimes \ket{\sigma+\mu-\lambda}_{bc}$ denotes a choice of picture, labelled by an integer $\sigma\in\ZZ$.
\end{lemma}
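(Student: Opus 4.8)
The plan is to recognise $(V,d_{\mathrm{KO}})$ as an (infinite) tensor product of copies of the elementary quartet complex of Section~\ref{sec:KO_basic} and to transport the computation $H_Q\cong\CC\ket 0$ from there. First I would pass to modes: writing $\beta(z)=\sum_n\beta_n z^{-n-\lambda}$, $\gamma(z)=\sum_n\gamma_n z^{-n-1+\lambda}$, $b(z)=\sum_n b_n z^{-n-\mu}$, $c(z)=\sum_n c_n z^{-n-1+\mu}$ with $[\beta_m,\gamma_n]=\delta_{m+n,0}$ and $\{b_m,c_n\}=\delta_{m+n,0}$, one finds
\begin{equation*}
  d_{\mathrm{KO}}=\normalord{c\beta}_0=\sum_{n\in\ZZ}c_{-n}\beta_n,
\end{equation*}
and, since $c$ and $\beta$ sit in different free systems, $\normalord{c\beta}$ has regular self-OPE, re-proving $d_{\mathrm{KO}}^2=0$. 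For the $\beta\gamma$-vacuum $\ket{-\sigma}_{\beta\gamma}$ the creation modes are $\beta_n$ with $n\le\sigma-\lambda$ and $\gamma_n$ with $n\le-\sigma+\lambda-1$, and choosing the $bc$-vacuum to be $\ket{\sigma+\mu-\lambda}_{bc}$ makes the $bc$-creation modes $b_n$ with $n\le\sigma-\lambda$ and $c_n$ with $n\le-\sigma+\lambda-1$. A one-line check with these ranges shows every summand $c_{-n}\beta_n$ annihilates $\ket{\mathrm{vac}}_\sigma=\ket{-\sigma}_{\beta\gamma}\otimes\ket{\sigma+\mu-\lambda}_{bc}$, so $\ket{\mathrm{vac}}_\sigma$ is a cocycle; moreover $\sigma+\mu-\lambda$ is the smallest $bc$-charge for which this holds, which is why precisely that picture appears in the statement.

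Next I would exhibit the quartet structure. For each $n\in\ZZ$ the four modes $\beta_n,\gamma_{-n},b_n,c_{-n}$ obey $[\beta_n,\gamma_{-n}]=1$, $\{b_n,c_{-n}\}=1$, commute with the analogous modes of every quartet $m\neq n$ and with one another across the two systems, and exactly one of $\{\beta_n,\gamma_{-n}\}$ and exactly one of $\{b_n,c_{-n}\}$ is a creation operator. Hence $V=\bigotimes'_{n\in\ZZ}V_n$ is a restricted tensor product of quartet Fock spaces, and $d_{\mathrm{KO}}=\sum_n Q_n$ with $Q_n\defeq c_{-n}\beta_n$ acting only on $V_n$ (and $\{Q_m,Q_n\}=0$ for $m\neq n$). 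For $n\le\sigma-\lambda$ the boson creation is $\beta_n$ and the fermion creation is $b_n$, so, up to a harmless rescaling fixing the sign of the boson commutator, $(\beta_n,\gamma_{-n},b_n,c_{-n})$ plays the role of $(a^\dagger,a,b^\dagger,b)$ and $Q_n$ is the $Q=a^\dagger b$ of Section~\ref{sec:KO_basic}; for $n\ge\sigma-\lambda+1$ the boson and fermion creation operators are interchanged and $Q_n$ has instead the form of the $K=b^\dagger a$ there.

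Then I would compute quartet by quartet and assemble. In either case $\mathbf N_n\defeq Q_nK_n+K_nQ_n$, with $K_n$ the complementary bilinear, is a diagonalisable operator on $V_n$ with one-dimensional kernel $\CC\ket 0_n$; by the argument around \eqref{eq:non-zero-eigenstate-coboundary} the $Q_n$-cohomology sits in $\ker\mathbf N_n$, and since $\ket 0_n$ is a $Q_n$-cocycle that is not a coboundary we get $H^{\dotr}(V_n,Q_n)\cong\CC\ket 0_n$, concentrated in ghost degree $0$ (the $n\ge\sigma-\lambda+1$ case being the same computation with the roles of $Q$ and $K$ exchanged). A Künneth argument for a restricted tensor product of complexes whose differential is the sum of the local ones then gives
\begin{equation*}
  H^{\dotr}_{\mathrm{KO}}\;\cong\;\bigotimes_n{}'\,H^{\dotr}(V_n,Q_n)\;\cong\;\bigotimes_n{}'\,\CC\ket 0_n\;\cong\;\CC\ket{\mathrm{vac}}_\sigma,
\end{equation*}
one-dimensional and in ghost degree $0$, which is \eqref{eq:Koszul CFT cohomology}.

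The main obstacle is bookkeeping rather than a new idea: one must track the creation/annihilation ranges and the accompanying signs carefully enough to recognise the quartet relations of Section~\ref{sec:KO_basic} on the nose, and one must justify the infinite Künneth step. The cleanest way to do the latter — and the point at which the $bc$-picture is pinned to $\sigma+\mu-\lambda$ — is to filter $V$ by a level (conformal-weight) operator so that each page is finite in the relevant bidegrees, and to observe via a spectral-sequence argument in the spirit of Appendix~\ref{app:spectral sequences} that it degenerates onto $\CC\ket{\mathrm{vac}}_\sigma$ already at $E_1$; equivalently, one checks directly that the global operator $\mathbf N\defeq\{d_{\mathrm{KO}},\normalord{\gamma b}_0\}=\sum_n(c_{-n}b_n+\gamma_{-n}\beta_n)$ is well defined, rather than divergent, precisely for the balanced picture, the two towers of zero-point contributions from the $\beta\gamma$- and $bc$-modes cancelling term by term.
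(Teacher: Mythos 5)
Your proposal is correct and, at its core, is the paper's own argument: the proof given in Appendix~\ref{app:KO quartet mechanism} proceeds exactly by identifying the creation/annihilation structure for the matched vacuum $\ket{-\sigma}_{\beta\gamma}\otimes\ket{\sigma+\mu-\lambda}_{bc}$, constructing the contracting bilinear $K$, and showing that $N_{\mathrm{tot}}=d_{\mathrm{KO}}K+Kd_{\mathrm{KO}}$ is the diagonalisable total number operator with one-dimensional kernel $\CC\ket{\mathrm{vac}}_\sigma$ --- which is precisely your closing ``equivalently'' clause. Your quartet-by-quartet decomposition followed by a restricted K\"unneth assembly is a correct but redundant detour, since the global number operator you end with already handles the infinite tensor product in one step and is exactly how the paper argues.
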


\begin{proof}
    Recall that the space of states of the Koszul CFT is spanned by
    monomials constructed by the appropriate modes acting on a choice
    of $\beta\gamma$ vacuum $\ket{\sigma}_{\beta\gamma}$. Although the $bc$
    vacua are equivalent, since we can go between them by acting $bc$
    modes, there is a convenient choice corresponding to each
    $\ket{\sigma}_{\beta\gamma}$, which simply comes from imposing a
    ``vacuum-matching'' condition. That is, if $\beta_n
    \ket{\sigma}_{\beta\gamma} = 0$ for some $n\in\mathbb{Z}$, then $b_n
    \ket{\rho}_{bc} = 0$ too, and likewise for $\gamma_n$ and $c_n$
    acting on the respective vacua. From \eqref{eq:bcbetagamma vacua},
    this enforces $\rho=-\sigma+\mu-\lambda$. Hence, we choose our vacuum
    state to be
    $\ket{\mathrm{vac}}_\sigma=\ket{\sigma}_{\beta\gamma}\otimes\ket{-\sigma+\mu-\lambda}_{bc}$,
    where $\sigma\in\ZZ$ is a choice.  This choice allows for the most
    natural formulation of the reduction in cohomology via the KO
    mechanism.

    Next, we want to write $d_{\mathrm{KO}}$ in a similar form to the
    differential $Q$ in the previous subsection. This would first
    require placing all annihilators (with respect to
    $\ket{\mathrm{vac}}$) to the right:
    \begin{equation} \label{eq:dKO split}
        d_{\mathrm{KO}} = \normalord{c\beta}_0 = \sum_{l\in\ZZ} c_{-l} \beta_l 
        = \sum_{l\geq -\sigma+1-\lambda}c_{-l}\beta_l + \sum_{l\geq \lambda+\sigma}\beta_{-l} c_{l}
    \end{equation}
    Notice how there are two terms in \eqref{eq:dKO split}, with each
    term schematically of the form taken by $Q$ in the previous
    subsection (i.e., consisting of a creation operator and an
    annihilation operator with opposite parity to each other). This
    because there are four pairs of creation and annihilation
    operators, instead of two in the previous subsection, with each
    pair labelled by an integer. Two of them are bosonic,
    corresponding to the modes $\beta_n$ and $\gamma_n$, while the
    other two are fermionic, corresponding to the modes $b_n$ and
    $c_n$. An explicit assignation of ``creation'' or ``annihilation''
    to each mode $\beta_n$, $\gamma_n$, $b_n$, $c_n$ can be inferred
    from the form of $d_{\mathrm{KO}}$ in \eqref{eq:dKO split}, summarised
    below:
    \begin{equation}
    \label{eq:KO mechanism cre-ann ops}
    \begin{alignedat}{4}
            &a^\dagger(\beta_n) = \beta_{-n} \quad &&n\geq \lambda+\sigma \quad\quad &&a^\dagger(\gamma_n)= -\gamma_{-n} \quad &&n\geq -\sigma+1-\lambda\\
            &a(\beta_n) = \gamma_n \quad &&n\geq \lambda+\sigma \quad\quad &&a(\gamma_n)= \beta_n\quad &&n\geq -\sigma+1-\lambda\\
            &a^\dagger(b_n) = b_{-n} \quad &&n\geq \lambda+\sigma \quad\quad &&a^\dagger(c_n)= c_{-n} \quad &&n\geq -\sigma+1-\lambda\\
            &a(b_n) = c_n \quad &&n\geq \lambda+\sigma \quad\quad &&a(c_n)= b_n \quad &&n\geq -\sigma+1-\lambda.
    \end{alignedat}    
    \end{equation}
    These operators are indeed compatible with the mode algebra of the $\beta\gamma$ and $bc$-systems, which can be expressed as
    \begin{equation}
        \begin{alignedat}{2}
            &[a(\phi_k), a^\dagger(\phi_l)] &&= \delta_{kl} \quad \quad \phi = \beta,\, \gamma \\
            &[a(\phi_k), a^\dagger(\phi_l)]_+ &&= \delta_{kl} \quad \quad \phi = b,\, c 
        \end{alignedat}
    \end{equation}
    Then using \eqref{eq:KO mechanism cre-ann ops},
    \begin{equation}
        d_{\mathrm{KO}} = \sum_{n\geq -\sigma+1-\lambda} a^\dagger(c_n)a(\gamma_n) + \sum_{n\geq \lambda+\sigma} a^\dagger(\beta_n)a(b_n).
    \end{equation}
    In this form, the resemblance of $d_{\mathrm{KO}}$ to $Q$ is
    explicit. This lets us construct an operator $K$ from the
    expression above for $d_{\mathrm{KO}}$ by mirroring the corresponding
    expression in the previous subsection given in terms of creation
    and annihilation operators
    \begin{equation}
      K = \sum_{n\geq \sigma+1-\lambda} a^\dagger(\gamma_n)a(c_n) + \sum_{n\geq \lambda-\sigma} a^\dagger(b_n)a(\beta_n).
    \end{equation}
    Finally, we may construct the analogous number operator as the anti-commutator of $d_{\mathrm{KO}}$ and $K$
    \begin{equation}
      \label{eq:OG GO N_Tot}
      \begin{split}
        N_{\mathrm{tot}} &= d_{\mathrm{KO}} K + K d_{\mathrm{KO}}\\
        &= \sum_n a^\dagger(\beta_n) a(\beta_n) + \sum_n a^\dagger(\gamma_n) a(\gamma_n) + \sum_n a^\dagger (b_n) a(b_n) + \sum_n a^\dagger (c_n) a(c_n). 
      \end{split}
    \end{equation}
    Indeed, we see that $N_{\mathrm{tot}}$ also resembles $\mathbf{N}$
    from the previous subsection, but let us quickly show that
    $N_{\mathrm{tot}}$ is diagonalisable on $\mathcal{V}\defeq
    V^{\beta\gamma}_\sigma \otimes V^{bc}$.  Recall that $\mathcal{V}$ is
    spanned by monomials of the form
    \begin{equation}
      \ket{\omega} = b_{-n_1}\dots b_{-n_\mathcal{B}} c_{-m_1}\dots c_{-m_\mathcal{C}}\ket{\sigma+\mu-\lambda}_{bc}
      \otimes
      \beta_{-r_1}\dots \beta_{-r_\mathfrak{B}} \gamma_{-s_1}\dots\gamma_{-s_\mathfrak{C}}\ket{-\sigma}_{\beta\gamma},   
    \end{equation}
    where 
    \begin{gather*}
      n_1 > \dots > n_{\mathcal{B}}  \geq \lambda+\sigma,\ \ m_1 > \dots > m_{\mathcal{C}} \geq -\sigma+1-\lambda\\
      r_1 \geq \dots \geq r_{\mathfrak{B}}\geq \lambda+\sigma,\ \  s_1 \geq \dots \geq s_{\mathfrak{C}}\geq -\sigma+1-\lambda.
    \end{gather*}
    But each monomial is an eigenstate of $N_{\mathrm{tot}}$
    \begin{equation}
      N_{\mathrm{tot}}\ket{\omega}  = (\mathfrak{B}+\mathfrak{C}+\mathcal{B}+\mathcal{C})\ket{\omega}.
    \end{equation}
    Hence, $N_{\mathrm{tot}}$ is diagonalisable and $d_{\mathrm{KO}}$-exact.
    Following identical arguments to those in the previous subsection,
    we conclude that the cohomology can be computed from the
    subcomplex $\ker N_{\mathrm{tot}}$.  But $\ker N_{\mathrm{tot}}$ is
    one-dimensional and spanned by $\ket{\mathrm{vac}}$, which is a
    $d_{\mathrm{KO}}$-cocycle. This completes the proof.
\end{proof}

\section{Cohomology computations}
\label{sec:cohom-comp}

Using the techniques and results presented in
Appendices~\ref{app:spectral sequences} and \ref{app:KO quartet
  mechanism}, we present the computation of null gauging cohomology
(i.e., a proof of Proposition~\ref{prop:NG cohomology}) and the BRST
cohomology of the Gomis--Ooguri string. The latter proves
Propositions~\ref{prop:holo sector BRST} and \ref{prop:anti-holo
  sector BRST} since, as we argued earlier, the matter content in the
theories from which these arise look like special cases of the
matter content of the closed Gomis--Ooguri string.

\subsection{Null gauging cohomology} \label{sec:NG cohomology app}

Recall that the space of states of the Nappi--Witten CFT is $V\defeq
V^{\beta\gamma}_\sigma\otimes V^{\betil\gamtil}_{\tilde{\sigma}}\otimes V^{BC}$,
spanned by monomials for the form \eqref{eq:monomials of full space
  Wakimoto}.  We filter $V$ by assigning a filtration degree to each
mode as follows:
    \begin{center}
        \begin{tabular}{ c|c c c c c c} 
             & $\betil_n$ & $\gamtil_n$ & $B_n$ & $C_n$ & $\beta_n$ & $\gamma_n$ \\ \hline
            $\fdeg$ & -1 & 1 & -1 & 1 & 0 & 0
        \end{tabular}
    \end{center}
This assignation is compatible with the mode algebra of the $bc$- and
$\beta\gamma$-systems. Then a decreasing filtration $F$ on $V$ can be
defined as
\begin{equation}
  \label{eq:Filtration on V}
  F^p V \defeq \{v \in V\ |\  \fdeg v \geq p \},
\end{equation}
which is both exhaustive and weakly convergent:
\begin{equation}
  \bigcup_{p\in\mathbb{Z}} F^p V = V \quad \text{and} \quad \bigcap_{p\in\mathbb{Z}} F^p V = 0.
\end{equation}
The differential $d_{\mathrm{NG}}$ also splits into
$d_{\mathrm{NG}}=d_0+d_1+d_2$ according to this assignation, where $\fdeg
(d_r)=r$, meaning $d_r$ is the part of $d$ that raises $\fdeg$ by
$r$. Explicitly,
\begin{equation}\label{eq:NG differential filtered}
  \dNG= \underbrace{\tfrac{1}{2}\normalord{C\betil}_0}_{d_0} \underbrace{{} -
    \normalord{C \beta\gamma}_0}_{d_1} +  \underbrace{\normalord{C\partial\gamtil}_0}_{d_2} = d_0 + d_1 + d_2.
\end{equation}
The condition $d^2=0$ decomposes into the following:
\begin{equation}
  d_0^2=d_0 d_1 + d_1 d_0 = d_1^2 + d_0 d_2 + d_2 d_0 = d_1 d_2 + d_2 d_1 = d_2^2 =0,
\end{equation}
which tells us that $\dNG$ indeed preserves the filtration (i.e.,
$d(F^p V) \subseteq F^p V$).

We may thus apply Theorem~\ref{thm:Spectral Sequence Converges to
  Cohomology} to deduce that there exists a spectral sequence with
first page
\begin{equation}
    E^{p,q}_1 \cong H^{p+q} \left( \Gr_p V, d_0\right)
\end{equation}
converging to $H^{\dotr}(V,d)$:
\begin{equation}
    E^{p,q}_\infty \cong E^{p,q}_0 \big(H^{\dotr}(V,\dNG),F)\big) = \Gr_p H^{p+q}(V,\dNG).
\end{equation}
Hence, the first thing to do is to compute $d_0$-cohomology to obtain
$E^{p,q}_1$. But this is precisely the setting of
Lemma~\ref{lem:Koszul chiral ring}, with $\lambda=\mu=1$. Since $d_0$
only acts on
$\mathcal{V}\defeq V^{\betil\gamtil}_{\tilde{\sigma}}\otimes V^{BC}$, $V^{\beta\gamma}_{\sigma}$ simply goes along for the
ride. Thus, we have that
\begin{equation} \label{eq:NG cohomology d_0}
    H_{d_0} \cong H^{0,0}_{d_0} \cong E^{0,0}_1 \cong \CC (\ket{\tilde{\sigma}}_{\betil\gamtil} \otimes \ket{-\tilde{\sigma}}_{BC}) \otimes V^{\beta\gamma}_{\sigma}.
\end{equation}
Equation \eqref{eq:NG cohomology d_0} tells us that $E^{p,q}_1 = 0$
for all $p,q\neq 0$. As a result, we must have that $d_1 \equiv 0$ on
$H_{d_0}$, meaning the spectral sequence collapses
(cf. Definition~\ref{def:spectral_sequence_collapse}) at the first
page and calculating $d_2$-cohomology is unnecessary. This proves
Proposition~\ref{prop:NG cohomology}.

\subsection{BRST cohomology of the Gomis--Ooguri string}\label{sec:GO string cohomology app}

Recall that the holomorphic sector of the Gomis--Ooguri string worldsheet CFT is described by a
weight-$(1,0)$ $\beta\gamma$-system with $\gamma$ compactified and 24
free euclidean bosons $\{\partial X^i\}_{i\in\{2,\dots
  25\}}$. The anti-holomorphic sector is identical, so we only focus on the holomorphic one.

We once again filter the space of states
$V\defeq V^{bc}\otimes V^{\beta\gamma} \otimes \mathcal{F}^X(k)$ by
assigning a filtration degree to each mode as follows.
    \begin{center}
        \begin{tabular}{ c|c c c c c } 
            & $\beta_n$ & $\gamma_n$ & $b_n$ & $c_n$ & $L^X_n$ \\ \hline
            $\fdeg$ & -1 & 1 & -1 & 1 & 0
        \end{tabular}
    \end{center}
The corresponding decreasing filtration
\begin{equation}
    F^p V \defeq \{v\in V | \fdeg \geq p\}
\end{equation}
is both exhaustive and weakly convergent. The BRST differential also
splits into $\dVir=d_0+d_1$ as given by equations \eqref{eq:dVir_0} and \eqref{eq:dVir_1}, resulting in a filtered complex.

Applying Lemma~\ref{lem:Koszul chiral ring}, with $\lambda=1$ and
$\mu=2$, we once again obtain that the $d_0$-cohomology is
one-dimensional in the space of states of $(\beta,\gamma,b,c)$ and
spanned by the choice of vacuum $\ket{\mathrm{vac}}_{\sigma}\defeq
\ket{\sigma}_{\beta\gamma}\otimes \ket{1-\sigma}_{bc}$, where
$\sigma\in\ZZ$. Since
the 24 free bosons are unaffected by $d_0$, we have that
\begin{equation}
    H_{d_0}\cong \CC \ket{\mathrm{vac}}_{\sigma} \otimes \mathcal{F}^X(k).
\end{equation}
By the same arguments as those in the previous subsection,
$d_1\equiv 0$ on $H_{d_0}$, resulting in the collapse of the spectral
sequence.  Hence, we explicitly see the computation of the spectrum of
the Gomis--Ooguri string is independent of the $c=24$ CFT that appears
in its matter sector. The choice of 24 free bosons is only
enforced by the limiting procedure employed in the original sigma
model and manifestly guarantees that the spectrum is unitary and galilean
invariant. However, from the worldsheet perspective, one could have
inserted any $c=24$ CFT. Making the choices summarised in
Table~\ref{tab:matter_content} completes the proofs of
Propositions~\ref{prop:holo sector BRST} and \ref{prop:anti-holo
  sector BRST}.

\subsection{Null gauging cohomology with alternate free field realisation} \label{sec:NG cohomology alt app}

Recall from \eqref{eq:Wakimoto embedding original} that we have an
embedding of the Nappi--Witten currents $\{P^\pm, I, J\}$ into the
free fields $(\d\phi^\pm, \beta, \gamma)$. The full space of states
$V=V^{BC}\otimes V^{\beta\gamma}_{\sigma} \otimes \mathcal{F}^\phi(p_+,p_-)$
is spanned by monomials of the form
\begin{equation}
\label{eq:monomials of full space FFR original}
\begin{split}
 \ket{\psi}&= B_{n_1}\dots B_{-n_\mathcal{B}}C_{-m_1}\dots C_{-m_\mathcal{C}}\ket{\rho}_{BC} \\
    &\otimes \beta_{-r_1}\dots \beta_{-r_\mathfrak{B}} \gamma_{-s_1}\dots\gamma_{-s_\mathfrak{C}}\ket{\sigma}_{\beta\gamma} \\
    &\otimes \alpha^{\mu_1}_{-k_1}\dots \alpha^{\mu_A}_{-k_A} \ket{p_+,p_-},   
\end{split}
\end{equation}
where 
\begin{gather*}
k_1 \geq \dots \geq k_{A} \geq 1,\\
n_1 > \dots > n_{\mathcal{B}}\geq 1-\rho,\ \  m_1 > \dots > m_{\mathcal{C}}\geq \rho\\
r_1 \geq \dots \geq r_{\mathfrak{B}} \geq \sigma+1,\ \ s_1 \geq \dots \geq s_{\mathfrak{C}} \geq -\sigma
\end{gather*}
and $\ket{p_+,p_-}$ is the vacuum vector of the Fock module $\mathcal{F}^\phi(p_+,p_-)$ of the lorentzian bosons with momentum $p_\mu=(p_+,p_-)$, which means $\alpha^{\pm}_0\ket{p_+,p_-} = p_\pm \ket{p_+,p_-}$. Note that we allow $\alpha^+_0$ to act non-trivially despite what \eqref{eq:FFR mode embedding} seems to imply; this is because we would like to consider the realisation \eqref{eq:Wakimoto embedding original} independently of how it embeds (as given by \eqref{eq:field embeddings}) into \eqref{eq:Wakimoto embedding}. However, as we will see, $p_+=0$ will be enforced anyway when taking cohomology with respect to the null gauging differential
\begin{equation}
    d_{\mathrm{NG}} = (j_{NG})_0 = \tfrac{1}{2}\normalord{C\d \phi^-}_0 - \normalord{C\beta\gamma}_0 + \normalord{C\d\phi^+}_0
\end{equation}
Just as we have been doing in the previous subsections, we apply a decreasing, weakly exhaustive and convergent filtration on $V$
\begin{equation}
    \label{eq:Filtration on V NG alt}
    F^p V \defeq \{v \in V\ |\  \fdeg v \geq p \},
\end{equation}
given explicitly by the following filtration degrees compatible with
the mode algebra of $\{B,C,\d\phi^\pm, \beta,\gamma\}$:
\begin{center}
    \begin{tabular}{ c|c c c c c c} 
        & $\d\phi^-_n$ & $\d\phi^+_n$ & $B_n$ & $C_n$ & $\beta_n$ & $\gamma_n$ \\ \hline
        $\fdeg$ & -1 & 1 & -1 & 1 & 0 & 0
    \end{tabular}
\end{center}
Once again, the differential $d_{\mathrm{NG}}$ respects the filtration
(i.e., $d(F^p V) \subseteq F^p V$) and splits as
$d_{\mathrm{NG}}=d_0+d_1+d_2$ according to the filtration degree, where
\begin{equation} \label{eq:alt NG differential splitting}
    d_0 = \tfrac{1}{2}\normalord{C\d\phi^-}_0,\ d_1 = -\normalord{C\beta\gamma}_0,\ d_2= \normalord{C\d\phi^+}_0.
\end{equation}
and
\begin{equation}
    d_0^2=d_0 d_1 + d_1 d_0 = d_1 d_2 + d_2 d_1 = d_2^2 =0.
\end{equation}
It again follows from Theorem~\ref{thm:Spectral Sequence Converges to
  Cohomology} that there exists a spectral sequence converging to $H^{\dotr}(V,d)$:
\begin{equation}
  E^{p,q}_\infty \cong E^{p,q}_0 \big(H^{\dotr}(V,d),F)\big) = \Gr_p H^{p+q}(V,d).
\end{equation}
with first page
\begin{equation}
    E^{p,q}_1 \cong H^{p+q} \left(\Gr_p V, d_0\right).
\end{equation}
Notice that the setup is identical to that in Appendix~\ref{sec:GO
  string cohomology app}. As expected, $\d\phi^\pm$ replace $\betil$
and $\d\gamtil$, respectively, and the filtration and splitting of
$d_{NG}$ are also compatible with this dictionary.

We now start with the computation of $d_0$-cohomology. The
differential $d_0$ only acts non-trivially on $V^{BC}\otimes
\mathcal{F}^\phi(p_+,p_-)$, so $V^{\beta\gamma}_{\sigma}$ is unaffected by these
calculations. For convenience, we choose the
$\mathfrak{sl}_2$-invariant $BC$ vacuum
$\ket{\rho}_{BC}=\ket{0}_{BC}$. As mentioned earlier, the computation
of cohomology is greatly simplified by the presence of diagonalisable
endomorphisms which are $\dNG$-exact. In this case, we have (at least)
two such operators.  The first operator is
\begin{equation} \label{eq:alpha- = 0}
    \alpha^-_0= \dNG B_0 + B_0 \dNG.
\end{equation}
This says that the $\dNG$ cohomology can be computed with the
subcomplex with $p_-=0$.  The second operator is
\begin{equation} \label{eq:LdK = 0}
  -\normalord{B\d C}_0 + \normalord{ \d\phi^+ \d\phi^-}_0 \defeq \mathcal{L}_0 = d_0 \normalord{C\d\phi^+}_0+\normalord{C\d\phi^+}_0 d_0
\end{equation}
Note that $\mathcal{L}_0$ is indeed diagonalisable on the subspace
$V^{BC} \otimes \mathcal{F}^\phi(p_+,p_-)$ on
which $d_0$ acts non-trivially.  We may thus restrict to $\ker
\mathcal{L}_0$, which is spanned by monomials \eqref{eq:monomials of
  full space FFR original} such that
\begin{equation}
    \sum_{i=1}^{\mathcal{B}} n_i + \sum_{i=1}^{\mathcal{C}} m_i + \sum_{i=1}^{\mathcal{A}} k_i + \tfrac{1}{2}p^2 = 0.
\end{equation}
The last term is zero already because $p^2 = 2p_+p_-$ and we are
working in the subcomplex with $p_- = 0$, leaving us only with the
sums on the LHS. Furthermore, $n_i \geq 1$ for all $i\in\{1,\dots,
\mathcal{B}\}$, $k_i\geq 1$ for all $i \in \{1,\dots,\mathcal{A}\}$
and $m_i \geq 0$ for all $i\in\{1,\dots \mathcal{C}\}$, which means
that the only two ways the sums on the LHS can equal zero on the RHS
are either if there exist no summands at all or if exactly one of the 
$m_i$ is zero. In other words,
\begin{equation} \label{eq:H_d0 basis}
  \ker{\mathcal{L}_0} \cap \ker \alpha^-_0 = \left( \CC \ket{0}_{BC}
    \oplus \CC C_0\ket{0}_{BC}\right) \otimes \CC \ket{p_+,p_-=0} \otimes V^{\beta\gamma}_\sigma
\end{equation}
and the differential $d_0$ is identically zero in this
subcomplex. This completes the computation of $H_{d_0}$ and thereby
the first page of our spectral sequence:
\begin{equation}
    H_{d_0} \cong E_1 \cong E^{0,0}_1 \oplus E^{0,1}_1 \cong (\CC\ket{0}_{BC} \oplus \CC C_0\ket{0}_{BC})\otimes \CC \ket{p_+,0}\otimes V^{\beta\gamma}_\sigma.
\end{equation}
Before proceeding with the calculation of $H_{d_1}(H_{d_0})$, let us
highlight they key differences between the calculation we just
performed and the $d_0$ cohomology calculation in Appendix~\ref{sec:NG
  cohomology app}. Firstly, the $\dNG$-exact operators that are
available to use are different for each embedding. In the embedding
given by \eqref{eq:Wakimoto embedding}, the operator analogous to
$\alpha^-_0$ would be $\betil_0$. However, $\betil_0$ is not
diagonalisable over $V^{\betil\gamtil}_{\tilde{\sigma}} \otimes V^{BC}$. On the
other hand, although the operator corresponding to $\mathcal{L}_0$ in 
embedding \eqref{eq:Wakimoto embedding} exists
and is diagonalisable on $V^{\betil\gamtil}_{\tilde{\sigma}} \otimes V^{BC}$, there exists a more
restrictive $d_0$-exact operator that we used instead: namely, the
number operator $N_{\mathrm{tot}}$. However, we cannot construct a number
operator in this embedding \eqref{eq:Wakimoto embedding original} the 
way we did in the proof of Lemma~\ref{lem:Koszul chiral ring}; to do
this, we would have needed a CFT generated by $\phi^+$ and $\d\phi^-$
instead of $\d\phi^\pm$.  Summarising, the $d_0$ cohomology under
embedding \eqref{eq:Wakimoto embedding} reduces to a single copy of
the leftover $\beta\gamma$-system, whereas under the embedding
\eqref{eq:Wakimoto embedding original} we are left with two copies of
it instead.

We now perform the computation of $H_{d_1}(H_{d_0})\cong E_2$. In terms of modes,
\begin{equation}
  d_1 = -\normalord{C\beta\gamma}_0 = -\sum_{l\leq 1} C_l \normalord{\beta\gamma}_{-l} - \sum_{l\geq 0}  \normalord{\beta\gamma}_{-l} C_l.
\end{equation}
We can quickly infer from the form of the basis vectors of $H_{d_0}$ that only the term $-C_0\normalord{\beta\gamma}_0$ acts non-trivially on $H_{d_0}$. Simply put, the image of any endomorphism with $C_{n\neq 0}$ lies in the space of $d_0$-coboundaries, which means any such endomorphisms is the zero map in $d_0$-cohomology. To see how it acts explicitly, we consider a generic monomial in $V^{\beta\gamma}_\sigma$ given by
\begin{equation}
    \ket{\Psi}=\beta_{-r_1}\dots \beta_{-r_\mathfrak{B}} \gamma_{-s_1}\dots\gamma_{-s_\mathfrak{C}}\ket{\sigma}_{\beta\gamma}, \quad
    r_1 \geq \dots \geq r_{\mathfrak{B}} \geq \sigma+1,\ \ s_1 \geq \dots \geq s_{\mathfrak{C}} \geq -\sigma.
\end{equation}
Then the action of $d_1$ on the basis of $H_{d_0}$ is given by
\begin{gather}
    d_1 (\ket{0}_{BC} \otimes \ket{p_+,0}\otimes \ket{\Psi}) = -C_0\ket{0}_{BC} \otimes \ket{p_+,0}\otimes \normalord{\beta\gamma}_0\ket{\Psi} \\
    d_1(C_0\ket{0}_{BC} \otimes \ket{p_+,0}\otimes\ket{\Psi}) = 0
\end{gather}
We may decompose $V^{\beta\gamma}_\sigma$ as
\begin{equation*}
  V^{\beta\gamma}_\sigma = \underbrace{\ker \normalord{\beta\gamma}_0}_{\eqdef \left( V^{\beta\gamma}_{\sigma}\right)_0} \oplus \left( V^{\beta\gamma}_{\sigma} \right)_0^\perp,
\end{equation*}
and thus we have that
\begin{gather}
    \ker d_1 = \left(C_0 \ket{0}_{BC} \otimes \ket{p_+,0}\otimes V^{\beta\gamma}_\sigma\right)  \oplus \left(\CC \ket{0}_{BC} \otimes \CC \ket{p_+,0}\otimes \left( V^{\beta\gamma}_{\sigma} \right)_0\right) \\
    \im d_1 = \CC C_0\ket{0}_{BC} \otimes \CC \ket{p_+,0}\otimes \left( V^{\beta\gamma}_{\sigma}\right)_0^\perp.
\end{gather}
This yields
\begin{equation}
    H_{d_1}(H_{d_0}) \cong E_2 \cong \left(\CC\ket{0}_{BC} \oplus \CC C_0\ket{0}_{BC}\right)\otimes \CC \ket{p_+,0} \otimes \left( V^{\beta\gamma}_{\sigma} \right)_0.
\end{equation}
Indeed, $\normalord{\beta\gamma}$ is what one would call the ghost
current whose zero mode counts the number of $\gamma_n$s minus the
number of $\beta_n$s acting on the vacuum $\ket{\sigma}$ in a monomial. There is, however, a
non-trivial action of $\normalord{\beta\gamma}_0$ on a generic vacuum
$\ket{\sigma}$, given by $\normalord{\beta\gamma}_0\ket{\sigma} =
\sigma \ket{\sigma}$. Thus,
\begin{equation}
    \normalord{\beta\gamma}_0 \ket{\Psi} = (\mathfrak{C}-\mathfrak{B}+\sigma) \ket{\Psi},
\end{equation}
and $\left( V^{\beta\gamma}_{\sigma}\right)_0$ is spanned by monomials $\ket{\Psi}$
such that $\mathfrak{C}-\mathfrak{B}+\sigma = 0$.

We perform the final step in this spectral sequence computation; the
computation of
\begin{equation*}
  H_{d_2} (H_{d_1}(H_{d_0}))\cong H_{d_2}(E_2) \cong E_3.
\end{equation*}
Once again, writing $d_2$ in terms of modes
\begin{equation}\label{eq:d_2 modes}
    d_2=\normalord{C\d\phi^+}=\sum_{l\leq1}C_l\alpha^+_{-l}+\sum_{l\geq2}\alpha^+_{-l}C_l
\end{equation}
and considering its action on the basis of
$H_{d_1}$, we see that only the term proportional to $C_0\alpha^+_0$
acts non-trivially. For any $\ket{\Psi}_0\in \left(
  V^{\beta\gamma}_{\sigma}\right)_0$,
\begin{gather}
    d_2 (\ket{0}_{BC} \otimes \ket{p_+,0}\otimes \ket{\Psi}_0) = C_0\ket{0}_{BC} \otimes p_+\ket{p_+,0}\otimes \ket{\Psi}_0 \\
    d_2(C_0\ket{0}_{BC} \otimes \ket{p_+,0}\otimes\ket{\Psi}_0) = 0.
\end{gather}
From this, we see how the condition $p_+=0$ emerges from $d_2$-cohomology. We now have
\begin{gather}
    \ker d_2 = \left(\CC C_0 \ket{0}_{BC} \otimes \CC \ket{p_+=0,0}\otimes \left( V^{\beta\gamma}_{\sigma}\right)_0 \right)  \oplus \left(\CC \ket{0}_{BC} \otimes \CC \ket{p_+=0,0}\otimes \left( V^{\beta\gamma}_{\sigma}\right)_0 \right) \\
    \im d_2 = \CC C_0\ket{0}_{BC} \otimes \CC \ket{p_+\neq 0,0}\otimes \left( V^{\beta\gamma}_{\sigma}\right)_0,
\end{gather}
which lets us complete the proof of Proposition~\ref{prop:NG cohomology original FFR} with
\begin{equation}
    H_{\mathrm{NG}}\cong H_{d_2}(H_{d_1}(H_{d_0})) \cong E_3 \cong \left(\CC\ket{0}_{BC} \oplus \CC C_0\ket{0}_{BC}\right)\otimes \CC \ket{0,0} \otimes \left( V^{\beta\gamma}_{\sigma}\right)_0.
\end{equation}

\section{Null reduced geometries}
\label{sec:null-reds}

In this appendix we describe the null reduced geometries.

\subsection{Null reduction of lorentzian generalised Nappi--Witten group}
\label{sec:null-redt-lor-NW}

We perform the null reduction of a generalised Nappi--Witten Lie group
whose Lie algebra is $\nw_{2\ell+2}$.  The ad-invariant inner product
is given by equation~\eqref{eq:gen-NW-metric} and the brackets are
given by equation~\eqref{eq:gen-NW-brackets}, where the symplectic
form $\omega$ can always be brought via an isometry to the
block-diagonal form
\begin{equation}\label{eq:symplectic form omega}
  \omega = 
  \begin{pmatrix}
    0  & -\mu_1 &    &     &   &\\
    \mu_1 & 0 &    &     &  & \\
    &   & 0  & -\mu_2   &  & \\
    &   & \mu_2 & 0  &   &\\
    &   &    &     & \ddots \\
    &   &    &     &         & 0  & -\mu_\ell \\
    &   &    &     &         & \mu_\ell & 0
  \end{pmatrix},
\end{equation}
where $\mu_1 \geq \mu_2 \geq \dots \geq \mu_\ell > 0$.

We may parametrise the group by $(x^1, \dots x^{2\ell}, u, v)$ where a
generic group element takes the form
\begin{equation}
    g := g(x^1,\dots x^\ell, u, v) = \exp(x^iP_i)\exp(uD)\exp(vZ).
\end{equation}
The pull-back to the parameter space of the left-invariant Maurer--Cartan one-form is
\begin{equation}\label{eq:g-1dg}
   g^{-1}dg = du D + dv Z + \Ad_{\exp(-u D)} \exp(-x^i P_i) d \exp(x^i P_i).
\end{equation}
The derivative of the exponential map is given by (see, e.g., \cite[§1.2,Thm.~5]{MR1889121})
 \begin{equation}
     \exp(-x^i P_i) d \exp(x^i P_i) = \mathscr{D}(\ad_{-x^i P_i})(dx^iP_i),
 \end{equation}
 where 
 \begin{equation}
     \mathscr{D}(z)=\frac{1-\exp(-z)}{z} = 1 - \tfrac{1}{2}z +\mathcal{O}(z^2).
 \end{equation}
 Since $\ad^2_{P_i} = 0$ for all $i\in\{1,\ldots, 2\ell\}$, we are left with
 \begin{equation}\label{eq:derivative of exp(x^i P_i) term}
     \exp(-x^i P_i) d \exp(x^i P_i) = dx^i P_i - \tfrac{1}{2}[-x^i P_i, dx^j P_j] = dx^i P_i + \tfrac{1}{2} x^i dx^j \omega_{ij} Z.
 \end{equation}
Substituting \eqref{eq:derivative of exp(x^i P_i) term} back into \eqref{eq:g-1dg} gives 
\begin{equation}\label{eq:g-1dg penultimate}
    g^{-1}dg = \Ad_{\exp(-uD)}dx^i P_i + du D + \left(dv +  \tfrac{1}{2}x^i dx^j\omega_{ij} \right) Z.
\end{equation}
Next, we use that $\Ad_{\exp(-uD)} = \exp(-u \ad_D)$. Note that because
\begin{equation}
  \ad_D P_i = \omega_{ij}P_j,
\end{equation}
$\ad_D$ is block diagonal.  Let us concentrate on a
single block, say the block spanned by $P_1$ and $P_2$.  Then
\begin{equation}
  \ad_D P_i = \mu_1 \epsilon_{ij} P_j \qquad\text{and hence}\qquad
  \ad_D^2 P_i= - \mu_1^2 P_i.
\end{equation}
Exponentiating,
\begin{equation}
  \exp(-u \ad_D) (dx^1 P_1 + dx^2 P_2) = \left( \cos(u \mu_1) dx^1 +
    \sin(u \mu_1) dx^2\right) P_1+ \left( \cos(u \mu_1) dx^2 - \sin(u \mu_1) dx^1 \right)P_2.
\end{equation}
Putting all the blocks together we arrive at
\begin{multline}\label{eq:MC-one-form}
   g^{-1}dg = du D + \left(dv + \tfrac{1}{2}x^i dx^j
     \omega_{ij}\right)Z \\ + \sum_{i = 1}^\ell \left( \left( \cos(u \mu_i) dx^{2i-1}
       +  \sin(u \mu_i) dx^{2i} \right) P_{2i-1} + \left( \cos(u
       \mu_i) dx^{2i} - \sin(u \mu_i) dx^{2i-1} \right) P_{2i}\right).
\end{multline}
Finally, using \eqref{eq:MC-one-form} and \eqref{eq:gen-NW-metric}, we
obtain the explicit form of the metric on the generalised Nappi--Witten
group
\begin{equation}
  \label{eq:metric-on-NW}
    \eta := \langle g^{-1}dg, g^{-1}dg\rangle = dx^i dx^j \delta_{ij} + 2dudv + x^i \omega_{ij} dx^j du.
\end{equation}

When $\ell=1$, we may set $\mu_1 = 1$ by reciprocal rescalings of the
null coordinates $u$ and $v$ and hence we arrive at the metric of the
target spacetime of the original Nappi--Witten string in terms of the
coordinates $(x^1, x^2, u, v)$:
\begin{equation}\label{eq:metric on H4}
  \eta = 2dudv + (x^1 dx^2 - x^2 dx^1) du + (dx^1)^2 + (dx^2)^2.
\end{equation}

We now consider the reduction of this four-dimensional lorentzian Lie
group by the null central Killing vector $\xi
= \partial_v$.  This generates a one-parameter central subgroup which
simply translates the coordinate $v$ and leaves $u, x^1, x^2$ alone.
Let $M$ denote the quotient manifold, which we can coordinatise by $u,
x^1, x^2$ and let $\pi \colon G \to M$ denote the quotient from the
Nappi--Witten group $G$ to $M$.  The one-form metrically dual to $\xi$
is given by $\xi^\flat = du$, which is both horizontal (since $\xi$ is
null) and invariant.  Therefore $\xi^\flat = \pi^* \tau$ is the
pull-back of a one-form $\tau \in \Omega^1(M)$. Relative to the
coordinates $u,x^1, x^2$ for $M$, $\tau = du$. This is the clock
one-form of the galilean structure on $M$. The spatial cometric
$\gamma \in \Gamma(\odot^2 TM)$ is a symmetric bilinear form on
one-forms.  If $\alpha,\beta \in \Omega^1(M)$, let
$\gamma(\alpha,\beta) \in C^\infty(M)$ be defined by
\begin{equation}
  \pi^* \gamma(\alpha,\beta) = \eta((\pi^*\alpha)^\sharp, (\pi^*\beta)^\sharp ).
\end{equation}
Explicitly, we write
\begin{equation}
  \eta = du \otimes dv + dv \otimes du + \tfrac12 x^1(dx^2 \otimes du
  + du \otimes dx^2) - \tfrac12 x^2(dx^1 \otimes du
  + du \otimes dx^1) + dx^1 \otimes dx^1 + dx^2 \otimes dx^2
\end{equation}
and compute, in addition to $\partial_v^\flat = du$, that
\begin{equation}
  \partial_u^\flat = dv + \tfrac12 (x^1 dx^2 - x^2 dx^1), \quad
  \partial_{x^1}^\flat = dx^1 - \tfrac12 x^2 du \quad\text{and}\quad
  \partial_{x^2}^\flat = dx^2 + \tfrac12 x^1 du.
\end{equation}
This allows us to invert $\flat$ in order to obtain
\begin{equation}
  (du)^\sharp = \partial_v, \quad (dx^1)^\sharp = \partial_{x^1} +
  \tfrac12 x^2 \partial_v \quad\text{and}\quad (dx^2)^\sharp = \partial_{x^2} -
  \tfrac12 x^1 \partial_v
\end{equation}
in addition to a more complicated expression for $(dv)^\sharp$ which
we will not need in what follows.  This allows us to compute the
spatial cometric $\gamma$:
\begin{equation}
  \gamma = \partial_{x^1} \otimes \partial_{x^1} + \partial_{x^2}
  \otimes \partial_{x^2}.
\end{equation}
The three-dimensional galilean manifold $(M,\tau,\gamma)$ is the null
reduction of the Nappi--Witten group.

By construction, this galilean manifold is homogeneous, so we should
be able to identify it, based on the classification in
\cite{Figueroa-OFarrill:2018ilb}.  The original lorentzian metric on
$G$ is bi-invariant, but because of the existence of a nontrivial
centre, the symmetry group $G \times G$ does not act effectively.
Letting $Z < G$ denote the centre, the inherited\footnote{Of course,
  the true symmetry group of the galilean structure is
  infinite-dimensional, but these are ones which come 
  from the isometries of the original lorentzian manifold.} symmetry group of
the null reduction is $G/Z \times G/Z$.  Its Lie algebra consists of
generators $D, D', P_i, P'_i$ where $i=1,2$, with nonzero brackets
\begin{equation}
  [D, P_i] = \epsilon_{ij} P_j \qquad\text{and}\qquad [D', P'_i] =
  \epsilon_{ij} P'_j.
\end{equation}
The generator $D + D'$ acts transitively on the circle of spatial
directions and hence the Lie algebra is a kinematical Lie algebra with
spatial isotropy for a three-dimensional spacetime.  Such kinematical
Lie algebras were classified in \cite{Andrzejewski:2018gmz} and the
Lie algebra above corresponds to the ``euclidean Newton'' algebra in
Table~1 in that paper (or Lie algebra \#24 in Table~8 in
\cite{Figueroa-OFarrill:2018ilb}).  As shown in
\cite[§3.3.5]{Figueroa-OFarrill:2018ilb}, that Lie algebra admits two
spatially isotropic three-dimensional Klein pairs: one is the galilean
limit of anti de~Sitter spacetime and the other is the static
aristotelian spacetime.  To determine which one it is, we need to
identify the stabiliser.  It is not difficult to work out the vector
fields $\xi_X$ on $M$ to which $X \in \{D,D',P_i,P'_i\}$ are sent:
\begin{equation}
  \xi_D = \partial_u, \quad \xi_{D'} = -\partial_u - \epsilon_{ij} x^i
  \partial_{x^j}, \quad \xi_{P'_i} = \partial_{x^i}
  \quad\text{and}\quad \xi_{P_i} = \cos u \partial_{x^i} + \sin u
  \epsilon_{ij} \partial_{x^j}.
\end{equation}
The vector fields $\xi_D + \xi_{D'}$ and $\xi_{P_i}-\xi_{P'_i}$ vanish at the
``origin'', taken to be the point with coordinates $u=x^i = 0$.  These
corresponds to the Lie algebra of the stabiliser of that point.  We
see by inspection that there is no ideal of the transitive Lie algebra
which is contained in the stabiliser, hence the Klein pair is
effective and hence it coincides with the Klein pair of
three-dimensional galilean AdS  (which is \#39 in Table~9 in
\cite{Figueroa-OFarrill:2018ilb}).

\subsection{Null reduction of split generalised Nappi--Witten group}
\label{sec:null-red-split-NW}

In the bulk of the paper we used a different real form of the
complexified Nappi--Witten Lie algebra in which the ad-invariant inner
product has split signature.  A basis is now given by $P_i^+, P_i^-,
D, Z$, where $i=1,\dots,\ell$.  The nonzero Lie brackets are now
\begin{equation}
  [D, P_i^\pm] = \pm \mu_i P_i^\pm \qquad\text{and}\qquad [P_i^+, P_j^-] = \mu_i \delta_{ij} Z,
\end{equation}
where there is no summation in the second expression.  The
ad-invariant inner product is given by
\begin{equation}
  \left<P_i^+, P_j^-\right> = \delta_{ij} \qquad\text{and}\qquad \left<D,Z\right> = 1.
\end{equation}
We parametrise the group with null coordinates $(x^i, y^i, u, v)$,
with $i =1,\dots,\ell$, where the generic group element is
\begin{equation}
  g := g(x^i,y^i,u,v) = \exp(x^i P_i^+ + y^i P_i^-) \exp(u D) \exp(v Z).
\end{equation}
The pull-back of the left-invariant Maurer--Cartan one-form is given
by
\begin{equation}
  g^{-1}dg = dv Z + du D + \Ad_{\exp(-uD)} \exp(-x^i P^+_i - y^i P^-_i) d \exp(x^i P^+_i + y^i P^-_i).
\end{equation}
As before, we may calculate
\begin{equation}
  \exp(-x^i P^+_i - y^i P^-_i) d \exp(x^i P^+_i + y^i P^-_i) = dx^i
  P^+_i + dy^i P^-_i + \tfrac12 \delta_{ij} (x^i dy^j - y^i dx^j) \mu_i Z
\end{equation}
and
\begin{equation}
  \Ad_{\exp(-u D)} \left( dx^i  P^+_i + dy^i P^-_i \right) = e^{-u \mu_i} dx^i P^+_i +  e^{u \mu_i} dy^i P^-_i,
\end{equation}
so that
\begin{equation}
  g^{-1}dg = \left( dv + \tfrac12 (x^i dy^j - y^i dx^j)\delta_{ij}  \right) Z + du D + e^{-u \mu_i} dx^i P^+_i +  e^{u \mu_i} dy^i P^-_i.
\end{equation}
The bi-invariant metric on the split generalised Nappi--Witten group
is then
\begin{equation}
  \eta = \left<g^{-1}dg, g^{-1}dg\right> = 2 du dv +  \sum_{i=1}^\ell \mu_i (x^i dy^i - y^i dx^i) du + 2 \sum_{i=1}^\ell dx^i dy^i.
\end{equation}

For $\ell = 1$, we may set $\mu_1 = 1$ by a reciprocal rescaling of
$u$ and $v$ and arrive at
\begin{equation}
  \eta = 2 du dv +  (x dy - y dx) du + 2 dx dy
\end{equation}
relative to null coordinates $(u,v,x,y)$.

The four-dimensional split Nappi--Witten Lie algebra can be understood
either as a double extension of the trivial Lie algebra by the nonabelian
two-dimensional Lie algebra or, equivalently, as an iterated double
extension of the trivial Lie algebra by one-dimensional Lie algebras.

In the latter description, we consider an abelian two-dimensional
lorentzian Lie algebra.  Relative to a Witt (a.k.a. ligthcone) frame
we may view the Lie
algebra as $\h \oplus \h^*$, where $\h$ is a one-dimensional Lie
algebra and the dual pairing between $\h$ and $\h^*$ provides the
(trivially ad-invariant) lorentzian inner product.  Let $\h = \RR P_+$
and $\h^* = \RR P_-$ with $\left<P_+,P_-\right> = 1$.  Let $D$ be the
skew-symmetric derivation defined by $D P_\pm = \pm P_\pm$.  We then
double extend $\h \oplus \h^*$ by the one-dimensional Lie algebra
$\RR D$ acting via $[D,P_\pm] = \pm P_\pm$ and letting $Z$ span the
dual of $\RR D$, we have that $[P_+,P_-]=Z$ and $\left<D,Z\right>=1$.

In the former description, we let $\a$ be the two-dimensional
non-abelian Lie algebra with basis $X,Y$ and bracket $[X,Y]=Y$.  Let
$\a^*$ denote the dual with canonical dual basis $\alpha,\beta$.  The
double extension of the trivial Lie algebra by $\a$ is the semidirect
product $\a \ltimes \a^*$ with $\a^*$ an abelian ideal.  The action of
$\a$ on $\a^*$ is via the coadjoint representation:
\begin{equation*}
  [X,\alpha] = \ad^*_X \alpha = 0, \quad   [Y,\alpha] = \ad^*_Y \alpha
  = 0, \quad [X,\beta] = \ad^*_X \beta = - \beta \quad\text{and}\quad
  [Y,\beta] = \ad_Y^* \beta = \alpha.
\end{equation*}
Then letting $X = D$, $Y= P_+$, $\alpha = Z$ and $\beta = P_-$
provides the desired isomorphism with the split Nappi--Witten Lie
algebra.

\end{appendices}

\newpage
\bibliographystyle{utphys}
\bibliography{References}
 
\end{document}